\numberwithin{table}{section}
\newtheorem{theorem}{Theorem}[section]
\newtheorem{claim}[theorem]{Claim}
\newtheorem{definition}[theorem]{Definition}
\newtheorem{remark}[theorem]{Remark}
\newtheorem{lemma}[theorem]{Lemma}
\newtheorem{fact}[theorem]{Fact}
\newcommand\numberthis{\addtocounter{equation}{1}\tag{\theequation}}
\newcommand{\Pdn}{\mathbb{S}^n_{> 0}}
\newcommand{\R}{\mathbb{R}}
\newcommand{\wt}{\widetilde}
\newcommand{\wh}{\widehat}
\newcommand{\new}{\mathrm{new}}
\newcommand{\tr}{\mathrm{tr}}
\newcommand{\diag}{\mathrm{diag}}
\newcommand{\nnz}{\mathrm{nnz}}
\newcommand{\op}{\mathrm{op}}
\newcommand{\poly}{\mathrm{poly}}
\newcommand{\rank}{\mathrm{rank}}
\newcommand{\OPT}{\mathsf{OPT}}
\newcommand{\cost}[1]{\mathcal{T}_{#1}}
\newcommand{\norm}[1]{\left\| #1 \right\|}
\newcommand{\Tmat}{\mathcal{T}_{\mathrm{mat}}}
\newcommand{\vect}{\mathsf{vec}}
\definecolor{mygreen}{RGB}{80,180,0}
\newcommand{\eat}[1]{}
\newcommand{\Zhao}[1]{{\color{mygreen}[Zhao: #1]}}
\newcommand{\abs}[1]{|#1|}
\title{
A Faster Interior Point Method for Semidefinite Programming
} 
\date{}
\author{
\IEEEauthorblockN{Haotian Jiang}
\IEEEauthorblockA{
\texttt{jhtdavid@uw.edu}\\
 University of Washington}
\and
\IEEEauthorblockN{Tarun Kathuria}
\IEEEauthorblockA{
\texttt{tarunkathuria@berkeley.edu}\\
University of California, Berkeley}
\and
\IEEEauthorblockN{Yin Tat Lee}
\IEEEauthorblockA{
\texttt{yintat@uw.edu}\\
University of Washington, 
Microsoft Research Redmond}
\and
\IEEEauthorblockN{Swati Padmanabhan}
\IEEEauthorblockA{
\texttt{pswati@uw.edu} \\
University of Washington}
\and
\IEEEauthorblockN{Zhao Song}
\IEEEauthorblockA{
\texttt{magic.linuxkde@gmail.com} \\
Columbia University, Princeton University, and
 Institute for Advanced Study}
}
\author{
			Haotian Jiang\thanks{\texttt{jhtdavid@uw.edu}. University of Washington.} 
			\and
			Tarun Kathuria\thanks{\texttt{tarunkathuria@berkeley.edu}. University of California, Berkeley.}
			\and
			Yin Tat Lee\thanks{\texttt{yintat@uw.edu}. University of Washington.}
			\and
			Swati Padmanabhan\thanks{\texttt{pswati@uw.edu}. University of Washington}
			\and
			Zhao Song\thanks{\texttt{zhaos@ias.edu}. Princeton University and Institute for Advanced Study.} 
}
\begin{document}

\ifdefined\isfocs
\maketitle
\begin{abstract}

Semidefinite programs (SDPs) are a fundamental class of optimization problems with important recent applications in approximation algorithms, quantum complexity, robust learning, algorithmic rounding, and adversarial deep learning. 
This paper presents a faster interior point method to solve generic SDPs with variable size $n \times n$ and $m$ constraints in time 
\begin{align*}
\widetilde{O}(\sqrt{n}( mn^2 + m^\omega + n^\omega) \log(1 / \epsilon) ),
\end{align*} 
where $\omega$ is the exponent of matrix multiplication and $\epsilon$ is the relative accuracy. In the predominant case of $m \geq n$, our runtime outperforms that of the previous fastest SDP solver, which is based on the cutting plane method \cite{jlsw20}. 

Our algorithm's runtime can be naturally interpreted as follows: $\widetilde{O}(\sqrt{n} \log (1/\epsilon))$ is the number of iterations needed for our interior point method, $mn^2$ is the input size, and $m^\omega + n^\omega$ is the time to invert the Hessian and slack matrix in each iteration.  
These constitute natural barriers to further improving the runtime of interior point methods for solving generic SDPs.


\end{abstract}

\else
\begin{titlepage}
\maketitle
\begin{abstract}

\end{abstract}
\thispagestyle{empty}
\end{titlepage}

\newpage 

{\hypersetup{linkcolor=black}
\tableofcontents
}

\newpage
\fi

\section{Introduction}
\label{sec:intro}

Semidefinite programs (SDPs) constitute a class of convex optimization problems that optimize a linear objective  over the intersection of the cone of positive semidefinite matrices with an affine space. 
SDPs generalize linear programs and have a plethora of applications in operations research, control theory, and theoretical computer science~\cite{vb96}. Applications in theoretical computer science include improved approximation algorithms for fundamental problems (e.g., Max-Cut~\cite{gw95}, coloring 3-colorable graphs~\cite{kms94}, and sparsest cut~\cite{arv09}), quantum complexity theory~\cite{jjuw11}, robust learning and estimation~\cite{cg18,cdg19,cdgw19}, and algorithmic discrepancy and rounding~\cite{bdg16,bg17,b19}. We formally define SDPs with variable size $n \times n$ and $m$ constraints: 
\begin{definition}[Semidefinite programming] \label{defn:sdpprimal}
Given symmetric\footnote{We can assume that $C,A_1,\cdots,A_m$ are symmetric, since given any $M \in \{C,A_1,\cdots,A_m\}$, we have $\sum_{i, j} M_{ij} X_{ij} = \sum_{i, j} M_{ij} X_{ji} = \sum_{i, j} (M^\top)_{ij} X_{ij}$, and therefore we can replace $M$ with $(M+ M^\top)/2$.} matrices $C, A_1, \cdots, A_m \in \mathbb{R}^{n \times n}$ and $b_i \in \R$ for all $i \in [m]$,  
the goal is to solve the convex optimization problem
\begin{align}
\label{eq:sdpprimal}
\max \langle C, X \rangle  \textup{ subject to } X \succeq 0, \langle A_i, X \rangle = b_i \text{ }\forall i \in [m]
\end{align}
where $\langle A,B\rangle := \sum_{i,j} A_{i,j} B_{i,j}$ is the trace product. 
\end{definition} 

\paragraph{Cutting plane and interior point methods}
Two prominent methods for solving SDPs, with runtimes depending {\em logarithmically} on the accuracy parameter $\epsilon$, are the {\em cutting plane method} and the {\em interior point method}. 

The cutting plane method maintains a convex set containing the optimal solution. 
In each iteration, the algorithm queries a separation oracle, which returns a hyperplane that divides the convex set into two subsets. 
The convex set is then updated to contain the subset with the optimal solution.
This process is repeated until the volume of the maintained set becomes small enough and a near-optimal solution can be found. 
Since Khachiyan proved~\cite{k80} that the ellipsoid method solves linear programs in polynomial time, cutting plane methods have played a crucial role in both discrete and continuous optimization~\cite{gls81,gv02}. 

In contrast, interior point methods add a barrier function to the objective and, by adjusting the weight of this barrier function, solve a different optimization problem in each iteration. The solutions to these successive problems form a well-defined {\em central path}. Since 
Karmarkar proved~\cite{k84} that interior point methods can solve linear programs in polynomial time, these methods have become an active research area.
Their number of iterations is usually the square root of the number of dimensions, as opposed to the linear dependence on dimensions in cutting plane methods.

Since cutting plane methods use less structural information than interior point methods, they are  slower at solving almost all  problems where interior point methods are known to apply. However, 
SDPs remain one of the most fundamental optimization problems where the state of the art is, in fact, the opposite: the current fastest cutting plane methods\footnote{\cite{jlsw20} improves upon the runtime of~\cite{lsw15} in terms of the dependence on $\log(n/\epsilon)$, while the polynomial factors are the same in both runtimes.} of~\cite{lsw15,jlsw20} solve a general SDP in time $m(mn^2 +  m^2+ n^\omega)$, while the fastest SDP solvers based on interior point methods in the work of~\cite{nn92} and~\cite{a00} achieve runtimes of $\sqrt{n}(m^2 n^2 + mn^\omega + m^\omega)$ and $(mn)^{1/4}(m^4 n^2 + m^3 n^\omega)$, respectively, which are slower in the most common regime of $m \in [n,n^2]$ (see Table~\ref{table:runtimeComparison}).
This apparent paradox raises the following natural question: 
\begin{center}
{\em How fast can SDPs be solved using interior point methods?}
\end{center}

\subsection{Our results} \label{subsec:results}
We present a faster interior point method for solving SDPs. Our main result is the following theorem, the formal version of which is given in Theorem~\ref{thm:main_formal}. 

\begin{theorem}[Main result, informal]\label{thm:main}
There is an interior point method that solves a general SDP with variable size $n \times n$ and $m$ constraints in time\footnote{We use $O^*$ to hide $n^{o(1)}$ and $\log^{O(1)}(n/\epsilon)$ factors and $\wt{O}$ to hide $\log^{O(1)}(n/\epsilon)$ factors, where $\epsilon$ is the accuracy parameter. } $O^*(\sqrt{n}(mn^2 + m^\omega + n^\omega))$. 
\end{theorem}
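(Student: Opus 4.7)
The plan is to instantiate a short-step primal-dual interior point method for the SDP in Definition~\ref{defn:sdpprimal} using the log-determinant barrier $F(X) = -\log\det X$ on the PSD cone, which is $n$-self-concordant. A standard Renegar-style analysis then gives the $\widetilde{O}(\sqrt{n}\log(1/\epsilon))$ bound on the number of Newton iterations along the central path, accounting for the $\sqrt{n}\log(1/\epsilon)$ factor in the claimed runtime. All of the technical content therefore lies in driving the per-iteration cost to $\widetilde{O}(mn^2 + m^\omega + n^\omega)$, which is substantially below the naive $\widetilde{O}(mn^\omega + m^2 n^2)$ cost of forming the Newton system exactly.

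Each iteration reduces to the following subproblem: given the current slack matrix $S\succ 0$, compute a Newton step in the $m$-dimensional dual space, which requires $S^{-1}$ and the dual Hessian $H\in\mathbb{R}^{m\times m}$ with entries $H_{ij}=\tr(S^{-1}A_i S^{-1}A_j)$, or equivalently $H=\mathsf{A}(S^{-1}\otimes S^{-1})\mathsf{A}^\top$, where $\mathsf{A}\in\mathbb{R}^{m\times n^2}$ has rows $\mathrm{vec}(A_i)$. Inverting $S$ directly takes $O(n^\omega)$. I plan to compute only an approximation $\widetilde{H}$ of $H$ that is spectrally close enough to preserve the central-path invariants, but producible in $\widetilde{O}(mn^2)$ time by sketching the Kronecker factor $S^{-1/2}\otimes S^{-1/2}$ with a suitable random projection $\Pi$ (e.g., a TensorSRHT or sparse Johnson--Lindenstrauss map respecting the tensor structure), so that the sketched matrix $\widetilde{W}=\mathsf{A}(S^{-1/2}\otimes S^{-1/2})\Pi^\top$ is assembled row by row through $m$ independent products of the form $(\Pi_2 S^{-1/2})A_i(S^{-1/2}\Pi_1^\top)$, at amortized cost $O(n^2)$ per row once the $O(n^\omega)$ left/right factors are precomputed; the resulting approximate Hessian $\widetilde{H}=\widetilde{W}\widetilde{W}^\top$ is then inverted in $O(m^\omega)$.

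The third ingredient is a \emph{robust} IPM convergence analysis showing that the short-step method still decreases the central-path potential by a factor of $1-\Omega(1/\sqrt{n})$ per iteration when the Hessian is replaced by a constant-multiplicative spectral approximation and the slack matrix is accessed only through an approximate factorization. Combined with the sketched Hessian and the $O(n^\omega)$ slack inversion described above, each iteration costs $\widetilde{O}(mn^2 + m^\omega + n^\omega)$, and multiplying by the $\widetilde{O}(\sqrt{n}\log(1/\epsilon))$ iteration count yields the runtime claimed in Theorem~\ref{thm:main}.

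I expect the main obstacle to be the interplay between the sketch dimension, the spectral-approximation tolerance demanded by the IPM analysis, and the per-iteration budget. To force an honest spectral $(1\pm\alpha)$-approximation of $H$ one generally needs sketch dimension comparable to the stable rank of $\mathsf{A}(S^{-1/2}\otimes S^{-1/2})$, which can be as large as $\Theta(m)$; absorbing this blow-up into the $m^\omega$ Hessian-inversion term rather than the $mn^2$ assembly term requires a delicate combination of a Kronecker-structured sketch (so each row of $\widetilde{W}$ truly costs $\widetilde{O}(n^2)$) with distortion control beyond what vanilla JL provides. Verifying that the compounded errors remain inside the Dikin-ellipsoid tolerance of the short-step method over all $\sqrt{n}$ iterations, and that an oblivious once-and-for-all draw of the sketch (or a controlled refresh schedule) is consistent with this analysis, is where I expect the bulk of the technical work to reside.
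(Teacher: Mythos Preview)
Your sketching step does not deliver the claimed $O(n^2)$ cost per row. For $\widetilde{W}\widetilde{W}^\top$ to be a constant-factor spectral approximation of the $m\times m$ Hessian, the sketch dimension must be $k = \Omega(m)$; a Kronecker sketch $\Pi_1 \otimes \Pi_2$ then forces each factor to have $s = \Omega(\sqrt{m})$ rows, so $(\Pi_2 S^{-1/2}) A_i$ is an $s \times n$ by $n \times n$ product costing $\Omega(\sqrt{m}\, n^2)$, and the total over all $m$ constraints is $\Omega(m^{1.5} n^2)$ rather than $mn^2$. Batching across $i$ with fast rectangular multiplication does not close the gap (e.g.\ at $m=n$ one obtains $\Tmat(\sqrt{n}, n, n^2) \geq n^{1+\omega(1/2)} > n^3 = mn^2$). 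More generally, any scheme that produces an $m \times k$ matrix $\widetilde{W}$ with $k \geq m$ from the data $\{S^{-1/2}A_i S^{-1/2}\}$ appears to require at least the $\Tmat(m, n^2, m)$ work of the exact computation, and $\Tmat(m, n^2, m)$ already exceeds $mn^2 + m^\omega + n^\omega$ by a polynomial factor in typical regimes (by $n^{\omega(2)-3} \approx n^{0.25}$ when $m=n$). The paper in fact explicitly argues in Section~\ref{subsubsec:fail_attempts} that the sampling and sketching ideas that accelerate LP solvers do not transfer to SDP for essentially this reason.

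The paper's actual mechanism is entirely different: it does not make each iteration cheap but amortizes across them. It maintains an approximate slack $\widetilde{S}$ that is always a constant-factor spectral approximation of $S$ yet changes by rank only $r_t$ in iteration $t$, obtained by zeroing out the largest eigenvalues of $S_{\new}^{-1/2}\widetilde{S}S_{\new}^{-1/2} - I$. A rank-$r_t$ change in $\widetilde{S}^{-1}$ then induces an update to $\widetilde{H}$ costing $\Tmat(n, m r_t, n) + \Tmat(m, n r_t, m)$. The central technical step is a potential-function argument (Theorem~\ref{thm:rankineq}) showing $\sum_t \sqrt{r_t} = \widetilde{O}(\sqrt{n})$; combined with convexity of the rectangular matrix-multiplication exponent $\omega(\cdot)$, this bounds the \emph{total} Hessian-maintenance cost over all iterations by $O^*(\sqrt{n}(mn^2 + m^\omega + n^\omega))$. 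Your robust-IPM ingredient (tolerating a spectrally approximate Hessian) is indeed part of the paper's argument, but the savings come from low-rank slack maintenance and amortization, not from randomized sketching.
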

\noindent Our runtime can be roughly interpreted as follows:
\begin{itemize}
\item $\sqrt{n}$ is the iteration complexity of the interior point method with the log barrier function. 
\item $mn^2$ is the input size.
\item $m^\omega$ is the cost of inverting the Hessian of the log barrier.
\item $n^\omega$ is the cost of inverting the slack matrix. 
\end{itemize} 
Thus, the terms in the runtime of our algorithm arise as a natural barrier to further speeding up SDP solvers. See  Section~\ref{subsubsec:our_tech}, \ref{subsubsec:bottleneck}, and \ref{subsubsec:fail_attempts} for more detail. 

 Table~\ref{table:SDPhistory} compares our result with previous SDP solvers. The first takeaway of this table and Theorem \ref{thm:main} is that our interior point method always runs faster than that in~\cite{nn92} and is faster than that in~\cite{nn94} and~\cite{a00} when $m \geq n^{1/13}$.  
A second consequence is that whenever $m \geq \sqrt{n}$,  our interior point method is faster than the current fastest cutting plane method~\cite{lsw15,jlsw20}.
We note that $n \leq m \leq n^2$ is satisfied in most SDP applications known to us, such as classical combinatorial optimization problems over graphs, experiment design problems in statistics and machine learning, and sum-of-squares problems. An explicit comparison to previous algorithms in the cases of $m = n$ and $m = n^2$ is shown in Table~\ref{table:runtimeComparison}. 

\begin{table}[!h]
\centering
 \begin{tabular}{ | l | l | l | l | l | }
    \hline
    {\bf Year} & {\bf References} & {\bf Method} & {\bf \#Iters} &  {\bf Cost per iter}   \\ \hline
    1979 & \cite{s77,yn76,k80} & CPM & $m^2$ & $mn^2 + m^2 + n^\omega$ \\ \hline
    1988 & \cite{kte88,nn89} & CPM & $m$ & $mn^2 + m^{3.5}+ n^\omega$ \\ \hline
    1989 & \cite{v89} & CPM & $m$ & $mn^2 + m^\omega + n^\omega $  \\ \hline
    1992 & \cite{nn92} & IPM  & $\sqrt{n}$ & $ m^2 n^2 + m n^\omega + m^\omega $ \\ \hline
    1994 & \cite{nn94, a00} & IPM & $(mn)^{1/4}$ & $m^4 n^2 + m^3 n^\omega$ \\ \hline
    2003 & \cite{km03} & CPM & $m$ & $mn^2 + m^\omega + n^\omega$ \\ \hline
    2015 & \cite{lsw15} & CPM & $m$ & $mn^2 + m^2 + n^\omega $ \\ \hline
    2020 & \cite{jlsw20} & CPM & $m$ & $mn^2 + m^2 + n^\omega$ \\ \hline
    2020 & Our result & IPM & $\sqrt{n}$ & $mn^2 + m^\omega + n^{\omega}$\\ \hline
  \end{tabular}
  \caption{Summary of key SDP algorithms. CPM stands for cutting plane method, and IPM, interior point method. $n$ is the size of the variable matrix, and $m \leq n^2$ is the number of constraints. Runtimes hide $n^{o(1)}$, $m^{o(1)}$ and $\poly\log(1/\epsilon)$ factors, where $\epsilon$ is the accuracy parameter. \cite{a00} simplifies the proofs in~\cite[Section 5.5]{nn94}. Neither~\cite{a00} nor~\cite{nn94} explicitly analyzed their runtimes, and their runtimes shown here are our best estimates.
} \label{table:SDPhistory}
\end{table}

\begin{table*}[!h]
\centering
 \begin{tabular}{ | l | l | l | l | l | l | l | }
    \hline
    \multirow{2}{*}{\bf Year} & \multirow{2}{*}{\bf References} & \multirow{2}{*}{\bf Method} &  \multicolumn{2}{|c|}{\bf Runtime} \\ 
    \cline{4-5}
    & & &  $m=n$ & $m=n^2$ \\ \hline
    1979 & \cite{s77,yn76,k80} & CPM & $n^5$ & $n^8$ \\ \hline
    1988 & \cite{kte88,nn89} & CPM & $n^{4.5}$ & $n^9$ \\ \hline
    1989 & \cite{v89} & CPM & $n^4$ & $n^{6.746}$ \\ \hline
    1992 & \cite{nn92} & IPM  & $n^{4.5}$ & $n^{6.5} $ \\ \hline
    1994 & \cite{nn94, a00} & IPM & $n^{6.5}$ & $n^{10.75}$ \\ \hline
    2003 & \cite{km03} & CPM & $n^4$ & $n^{6.746}$ \\ \hline
    2015 & \cite{lsw15} & CPM & $n^4$ & $n^6$ \\ \hline
    2020 & \cite{jlsw20} & CPM & $n^4$ & $n^6$ \\ \hline
    2020 & Our result & IPM & $n^{3.5}$ & $n^{5.246}$ \\ \hline
  \end{tabular}
  \caption{Total runtimes for the algorithms in Table~\ref{table:SDPhistory} for SDPs when $m=n$ and $m=n^2$, where $n$ is the size of matrices, and $m$ is the number of constraints. The runtimes shown in the table hide $n^{o(1)}$, $m^{o(1)}$ and $\poly\log(1/\epsilon)$ factors, where $\epsilon$ is the accuracy parameter and assume $\omega$  to equal its currently best known upper bound of $2.373$. } \label{table:runtimeComparison}
\end{table*}

Even in the more general case where the SDP might not be dense, where $\nnz(A)$ is the input size (i.e., the total number of non-zeroes in all matrices $A_i$ for $i \in [m]$ and $C$), our interior point method runs faster than the current fastest cutting plane methods\cite{lsw15,jlsw20}, which run in time $O^*(m(\nnz(A)  + m^2+ n^\omega))$.
\begin{restatable}[Comparison with Cutting Plane Method]{theorem}{CompareCuttingPlane} \label{thm:compare_cutting_plane} 
When $m \geq n$, there is an interior point method that solves an SDP with $n \times n$ matrices, $m$ constraints, and $\nnz(A)$ input size, faster than the current best cutting plane method~\cite{lsw15,jlsw20}, over all regimes of $\nnz(A)$.  
\end{restatable}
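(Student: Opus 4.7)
The plan is to perform a direct term-by-term comparison of the two runtimes under the hypothesis $m \geq n$, for every value of $\nnz(A) \in [0, mn^2]$ simultaneously. First I would argue that Theorem~\ref{thm:main} extends naturally to sparse inputs: the $mn^2$ summand in its dense runtime captures the per-iteration cost of touching the constraint data once (for example, in computing quantities of the form $\langle A_i, X \rangle$ and $\sum_i y_i A_i$), and this cost scales with the total number of nonzeros in the $A_i$'s and in $C$. Consequently, the sparse-case interior point runtime reads $O^*(\sqrt{n}(\nnz(A) + m^\omega + n^\omega))$, to be compared with the cutting plane runtime $O^*(m(\nnz(A) + m^2 + n^\omega))$ of~\cite{lsw15,jlsw20} stated in the surrounding paragraph.

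Next, I would compare the two expressions summand by summand. The sparsity-dependent contributions satisfy $\sqrt{n}\cdot \nnz(A) \leq m\cdot \nnz(A)$, since $m \geq n \geq \sqrt{n}$, so the interior point method wins on input reading. For the Hessian-inversion term, I would compare $\sqrt{n}\cdot m^\omega$ against $m\cdot m^2 = m^3$: using $m \geq n$ and the current best bound $\omega \leq 2.373 < 5/2$, we have $m^{3-\omega} \geq n^{3-\omega} \geq n^{1/2}$, which gives $\sqrt{n}\cdot m^\omega \leq m^3$. Finally, the slack-inversion term satisfies $\sqrt{n}\cdot n^\omega \leq m\cdot n^\omega$ trivially. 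Summing the three inequalities yields that $\sqrt{n}(\nnz(A) + m^\omega + n^\omega) \leq m(\nnz(A) + m^2 + n^\omega)$, and in fact with a polynomial gap in at least one summand, for every admissible $\nnz(A)$.

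The only nontrivial step is verifying the sparsity-aware interior point runtime, which is not stated explicitly in Theorem~\ref{thm:main}. The hard part will be inspecting the main algorithm (Theorem~\ref{thm:main_formal}) to confirm that every access to a constraint matrix $A_i$ can be performed through operations whose cost is proportional to $\nnz(A_i)$ rather than $n^2$, and that these accesses are the sole source of the $mn^2$-type contribution, so all remaining work is accounted for by the $m^\omega$ Hessian inversion and the $n^\omega$ slack inversion. Once this accounting is in place, the comparison reduces to the three inequalities above, each a single line of arithmetic, establishing that the interior point method strictly dominates the cutting plane method of~\cite{lsw15,jlsw20} across all sparsity regimes whenever $m \geq n$.
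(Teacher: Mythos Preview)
Your proposal has a genuine gap: the sparse-case interior point runtime you posit, $O^*(\sqrt{n}(\nnz(A) + m^\omega + n^\omega))$, is not what the algorithm actually achieves. The $mn^2$ term in Theorem~\ref{thm:main} does \emph{not} arise solely from ``touching the constraint data once.'' A major contributor is the Hessian computation, which (after all the low-rank bookkeeping) costs $\Tmat(n,mn,n) + \Tmat(m,n^2,m)$; see Lemma~\ref{lem:totalcost} and Lemma~\ref{lem:mnri_cheaper}. These rectangular matrix multiplications operate on dense intermediate quantities such as $\wt{S}^{-1/2} A_j V$ and the stacked matrix $\wt{\mathcal{B}} \wt{\mathcal{B}}^\top$, and their cost does not shrink with $\nnz(A)$. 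Concretely, at $m=n$ one has $\Tmat(m,n^2,m) = n^{\omega(2)+o(1)}$ with $\omega(2) \approx 3.25$, which already exceeds your claimed bound $\sqrt{n}(m^\omega + n^\omega) \approx n^{2.87}$. So the ``inspection'' you defer to the end would fail: the Hessian update is not an $\nnz(A)$-proportional access to the $A_i$'s.

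The paper's proof instead keeps the Hessian-computation term explicit as $\Tmat(m,n^2,m)$ in the refined runtime bound
\[
O^*\Big(\min\{n\cdot \nnz(A),\, mn^{2.5}\} + \sqrt{n}\, m^\omega + \Tmat(m,n^2,m)\Big),
\]
and then invokes the separate, nontrivial estimate $\Tmat(m,n^2,m) = o(m^3 + mn^{2.37})$ (Lemma~\ref{lem:mn2_cheaper}, proved via convexity of $\omega(\cdot)$ and the numerical values in Lemma~\ref{lem:omega2_bound}) to compare against the cutting-plane terms $m^3$ and $mn^{\omega}$. Your three one-line inequalities are correct for the terms they address, but they miss precisely this $\Tmat(m,n^2,m)$ contribution, which is the heart of the comparison.
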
 

\subsection{Technique overview}

\subsubsection{Interior point method for solving SDPs}\label{subsubsec:ipmintro}

By removing redundant constraints, we can, without loss of generality, assume $m \leq n^2$ in the primal formulation of the SDP~\eqref{eq:sdpprimal}. Thereafter, instead of solving the primal SDP, which has variable size $n \times n$, we solve its dual formulation, which has dimension $m \leq n^2$:
\begin{equation}
\begin{aligned}
\label{eq:sdpdual}
\min b^\top y  \textup{ subject to } S = \sum_{i=1}^m y_i A_i - C,\textup{ and } S \succeq 0. 
\end{aligned}
\end{equation}

Interior point methods solve \eqref{eq:sdpdual} by minimizing the penalized objective function: 
\begin{align} \label{eq:pathfollowing}
\min_{y \in \mathbb{R}^m} f_\eta(y), \textup{ where }  f_\eta(y)  := \eta \cdot b^\top y + \phi(y) ,
\end{align}
where $\eta > 0$ is a parameter and $\phi: \mathbb{R}^m \rightarrow \mathbb{R}$ is a {\em barrier} function that approaches infinity as $y$ approaches the boundary of the feasible set $\{y \in \mathbb{R}^m: \sum_{i=1}^m y_i A_i \succeq C\}$. These methods first obtain an approximate minimizer of $f_\eta$ for some small $\eta>0$, which they then use as an initial point to minimize $f_{(1+c)\eta}$, for some constant $c > 0$, via the Newton method. 
This process repeats until the parameter $\eta$ in~\eqref{eq:pathfollowing} becomes sufficiently large, at which point the minimizer of $f_\eta$ is provably close to the optimal solution of~\eqref{eq:sdpdual}. The iterates $y$ generated by this method follow a \textit{central path}. Different choices of the barrier function $\phi$ lead to different run times in solving ~\eqref{eq:pathfollowing}, as we next describe. 

\paragraph{The log barrier} Nesterov and Nemirovski~\cite{nn92} use the {\em log barrier} function, 
\begin{align} \label{eq:logbarrier}
\phi(y) = g(y) := - \log \det \left(\sum_{i=1}^m y_i A_i - C \right),
\end{align}   in ~\eqref{eq:pathfollowing} and, in $O(\sqrt{n} \log(n/\epsilon))$ iterations, obtain a feasible dual solution $y$ that satisfies $b^\top y \leq b^\top y^* + \epsilon$, where $y^* \in \R^m$ is the optimal solution for ~\eqref{eq:sdpdual}. Within each iteration, the costliest step is to compute the inverse of the Hessian of the log barrier function for the Newton step. For each $(j,k) \in [m] \times [m]$, the $(j,k)$-th entry of $H$ is given by 
\begin{align} \label{eq:HessianLogBarrier}
H_{j,k} = \tr[S^{-1} A_j S^{-1} A_k]. 
\end{align} 
The analysis of~\cite{nn92} first computes $S^{-1/2} A_j S^{-1/2}$ for all $j \in [m]$, which takes time $O^*(m n^\omega)$, and then calculates the $m^2$ trace products $\tr[S^{-1} A_j S^{-1} A_k]$ for all $(j,k) \in [m] \times [m]$, each of which takes $O(n^2)$ time. 
Inverting the Hessian costs $O^*(m^\omega)$, which results in a total runtime of $O^*(\sqrt{n}(m^2 n^2 + m n^\omega + m^\omega ))$. 

\paragraph{The volumetric barrier} Vaidya~\cite{v89} introduced the {\em volumetric barrier} for a polyhedral set $\{x \in \mathbb{R}^n: A x \geq c\}$, where $A \in \mathbb{R}^{m \times n}$ and $c \in \mathbb{R}^m$. Nesterov and Nemirovski~\cite{nn94} studied the following extension of the volumetric barrier to the convex subset $\{y \in \mathbb{R}^m: \sum_{i=1}^m y_i A_i \succeq C\}$ of the polyhedral cone:
\begin{align*}
V(y) = \frac{1}{2} \log \det(\nabla^2 g(y)) ,
\end{align*}
where $g(y)$ is the log barrier function defined in \eqref{eq:logbarrier}. They proved that choosing $\phi(y) = \sqrt{n} V(y)$ in~\eqref{eq:pathfollowing} makes the interior point method converge in $\widetilde{O}(\sqrt{m} n^{1/4})$ iterations, which is smaller than the $\widetilde{O}(\sqrt{n})$ iteration complexity of~\cite{nn92} when $m \leq \sqrt{n}$.  
They also studied the {\em combined volumetric-logarithmic barrier}
\begin{align*}
V_\rho(y) = V(y) + \rho \cdot g(y) 
\end{align*}
and showed that taking $\phi(y) = \sqrt{n/m} \cdot V_\rho(y)$ for $\rho = (m-1)/(n-1)$ yields an iteration complexity of $\widetilde{O}((mn)^{1/4})$. when $m \leq n$, this iteration complexity is lower than $\widetilde{O}(\sqrt{n})$ of~\cite{nn92}. We refer readers to the much simpler proofs in~\cite{a00} for these results. 

However, the volumetric barrier (and thus the combined volumetric-logarithmic barrier) leads to complicated expressions for the gradient and Hessian that make each iteration costly. 
For instance, the Hessian of the volumetric barrier is 
\begin{align*}
\nabla^2 V(y) = 2 Q(y) + R(y) - 2 T(y) ,
\end{align*}
where $Q(y)$, $R(y)$, and $T(y)$ are $m \times m$ matrices such that for each $(j,k) \in [m] \times [m]$,
\ifdefined\isfocs
\begin{align} \label{eq:volumetricbarrier_Hessian}
Q(y)_{j,k} & = \tr\left[\mathcal{A} H^{-1} \mathcal{A}^\top \left( \left( B_j S B_k \right) \wh{\otimes} S^{-1} \right) \right], \nonumber\\
R(y)_{j,k} & = \tr \left[\mathcal{A} H^{-1} \mathcal{A}^\top \left( B_j \wh{\otimes} B_k \right) \right], \\
T(y)_{j,k} & = \tr\left[ \mathcal{A} H^{-1} \mathcal{A}^\top ( B_j \wh{\otimes} S^{-1} )  \mathcal{A} H^{-1} \mathcal{A}^\top ( B_k \wh{\otimes} S^{-1} ) \right] . \nonumber 
\end{align} 
where $B_j :=  S^{-1} A_j S^{-1} $.

\else
\begin{align} \label{eq:volumetricbarrier_Hessian}
Q(y)_{j,k} & = \tr\left[\mathcal{A} H^{-1} \mathcal{A}^\top \left( \left(S^{-1} A_j S^{-1} A_k S^{-1} \right) \wh{\otimes} S^{-1} \right) \right], \nonumber\\
R(y)_{j,k} & = \tr \left[\mathcal{A} H^{-1} \mathcal{A}^\top \left( \left(S^{-1} A_j S^{-1} \right) \wh{\otimes} \left( S^{-1} A_k S^{-1}  \right) \right) \right], \\
T(y)_{j,k} & = \tr\left[ \mathcal{A} H^{-1} \mathcal{A}^\top \left( \left( S^{-1} A_j S^{-1} \right) \wh{\otimes} S^{-1} \right)  \mathcal{A} H^{-1} \mathcal{A}^\top \left( \left( S^{-1} A_k S^{-1} \right) \wh{\otimes} S^{-1} \right) \right] . \nonumber 
\end{align}  
\fi
Here, $\mathcal{A} \in \mathbb{R}^{n^2 \times m}$ is the $n^2 \times m$ matrix whose $i$th column is obtained by flattening $A_i$ into a vector of length $n^2$, and $\wh{\otimes}$ is the symmetric Kronecker product
\begin{align*}
A \wh{\otimes} B := \frac{1}{2} (A \otimes B + B \otimes A) ,
\end{align*}
where $\otimes$ is the Kronecker product (see \ifdefined\isfocs full version \else Section~\ref{ssec:notation}\fi for formal definition). 
Due to the complicated formulas in \eqref{eq:volumetricbarrier_Hessian}, efficient computation of Newton step in each iteration of the interior point method is difficult; in fact, each iteration runs slower than the Nesterov-Nemirovski interior point method by a factor of $m^2$. Since most  applications of SDPs known to us have the number of constraints $m$ be at least linear in $n$,
the total runtime of interior point methods based on the volumetric barrier and the combined volumetric-logarithmic barrier is inevitably slow.


\subsubsection{Our techniques}  \label{subsubsec:our_tech}

Given the inefficiency of implementing the volumetric and volumetric-logarithmic barriers discussed above, this paper uses the log barrier in \eqref{eq:logbarrier}. 
We now describe some of our key techniques that improve the runtime of the Nesterov-Nemirovski interior point method~\cite{nn92}. 

\medskip

\paragraph{Hessian computation using fast rectangular matrix multiplication} As noted in Section~\ref{subsubsec:ipmintro}, the runtime bottleneck in~\cite{nn92} is computing the inverse of the Hessian  of the log barrier function, where the Hessian is described in \eqref{eq:HessianLogBarrier}. In~\cite{nn92}, each of these $m^2$ entries is computed separately, resulting in a runtime of $O(m^2 n^2)$ per iteration.

Instead contrast, we show below how to group these computations using rectangular matrix multiplication. 
The expression from \eqref{eq:HessianLogBarrier} can be re-written as 
\begin{align} \label{eq:Hessiantrue}
H_{j,k} = \tr[S^{-1/2} A_j S^{-1/2} \cdot S^{-1/2} A_k S^{-1/2}].
\end{align}
We first compute the key quantity $S^{-1/2} A_j S^{-1/2} \in \R^{n \times n}$ for all $j \in [m]$ by stacking all matrices $A_j \in \R^{n \times n}$ into a tall matrix of size $mn \times n$, and then compute the product of $S^{-1/2} \in \R^{n \times n}$ with this tall matrix. 
This matrix product can be computed in time $\Tmat(n,mn,n)$\footnote{\ifdefined\isfocs We define $\Tmat(n,r,m)$ to be the number of operations needed to compute the product of matrices of dimensions $n \times r$ and $r \times m$.  \else See Section~\ref{subsec:matproperty} for the definition. \fi } using fast rectangular matrix multiplication. We then flatten each $S^{-1/2} A_j S^{-1/2}$ into a row vector of length $n^2$ and stack all $m$ vectors to form a matrix $\mathcal{B}$ of size $m \times n^2$, i.e., the $j$-th row of $\mathcal{B}$ is $\mathcal{B}_{j} = \vect(S^{-1/2} A_j S^{-1/2})$. It follows that the Hessian can be computed as
\begin{align} \label{eq:Hessianrect}
H = \mathcal{B}\mathcal{B}^\top,
\end{align}
which takes time $\Tmat(m,n^2,m)$ by applying fast rectangular matrix multiplication. 
By leveraging recent developments in this area~\cite{gu18}, this approach already improves upon the runtime in~\cite{nn92}. 

Thus far, we have reduced the per iteration cost of $O^*(m^2 n^2 + m n^\omega)$ for Hessian computation down to
\begin{align*}
\Tmat( n, mn, n ) + \Tmat( m, n^2, m ) .
\end{align*}

\medskip

\paragraph{Low rank update on the slack matrix}
The fast rectangular matrix multiplication approach noted above, however, is still not very efficient, because the Hessian must be computed from scratch in each iteration of the interior point method. If there are $T$ iterations in total, it then takes time
\begin{align*}
T \cdot ( \Tmat(n,mn,n) + \Tmat(m,n^2,m) ) .
\end{align*}
To further improve the runtime, we need to efficiently update the Hessian for the current iteration from the Hessian computed in the previous one. 
Generally, this is not possible, as the slack matrix $S \in \R^{n \times n}$ in \eqref{eq:Hessiantrue} might change arbitrarily in the Nesterov-Nemirovski interior point method. 

To overcome this problem, we propose a new interior point method that maintains an approximate slack matrix $\widetilde{S} \in \R^{n \times n}$, which is a spectral approximation of the true slack matrix $S \in \R^{n \times n}$ such that $\widetilde{S}$ admits a {\em low-rank update} in each iteration. Where needed, we will now use the subscript $t$ to denote a matrix in the $t$-th iteration. 
Our algorithm updates only the directions in which $\widetilde{S}_t$ deviates too much from $S_{t+1}$; the changes to $S_t$ for the remaining directions are not propagated in $\widetilde{S}_t$. 
This process of selective update ensures a low-rank change in $\widetilde{S}_t$ even when $S_t$ suffers from a full-rank update; it also guarantees the proximity of the algorithm's iterates to the central path. 
Specifically, for each iteration $t \in [T]$, we define the {\em difference matrix} 
\begin{align*}
Z_t = S_t^{-1/2} \widetilde{S}_t S_t^{-1/2} - I ~~~ \in \R^{n \times n},
\end{align*}
which intuitively captures how far the approximate slack matrix $\widetilde{S}_t$ is from the true slack matrix $S_t$. 
We maintain the invariant $\| Z_t \|_{\op} \leq c$ for some sufficiently small constant $c > 0$.  
In the $(t+1)$-th iteration when $S_t$ gets updated to $S_{t+1}$, our construction of $\widetilde{S}_{t+1}$ involves a novel approach of zeroing out some of the largest eigenvalues of $|Z_t|$ to bound the rank of the update  on the approximate slack matrix.

We prove that with this approach, the updates on $\widetilde{S} \in \R^{n \times n}$ over all $T = \widetilde{O}(\sqrt{n})$ iterations satisfy the following {\em rank inequality} (see \ifdefined\isfocs full version \else Theorem~\ref{thm:rankineq}\fi for the formal statement).
\begin{theorem}[Rank inequality, informal version] \label{thm:rankineq_intro} 
Let $\wt{S}_1, \wt{S}_2, \cdots, \wt{S}_T \in \R^{n \times n}$ denote the sequence of approximate slack matrices generated in our interior point method. For each $t \in [T-1]$, denote by $r_t = \rank(\wt{S}_{t+1} - \wt{S}_t )$  the rank of the update on $\wt{S}_t$.
Then, the  sequence $r_1, r_2, \cdots, r_T$ satisfies 
\begin{align*}
\sum_{t=1}^T \sqrt{r_t} = \widetilde{O}(T) . 
\end{align*} 
\end{theorem}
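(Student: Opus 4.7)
The plan is to prove the inequality by a potential-function argument in the spirit of the rank-maintenance schemes developed for dynamic linear-programming solvers (Cohen--Lee--Song, Lee--Sidford). Concretely, I would introduce a scalar potential $\Phi_t$ on the sorted eigenvalues of $|Z_t|$ and show that each IPM iteration can raise $\Phi$ by at most $\widetilde O(1)$, whereas each low-rank repair of rank $r_t$ lowers $\Phi$ by at least $\Omega(\sqrt{r_t})$. Telescoping these two statements over $t = 1, \dots, T$ then yields
\[
    \sum_{t=1}^{T-1}\sqrt{r_t} \;\le\; \Phi_1 \;+\; \sum_{t=1}^{T-1}\widetilde O(1) \;=\; \widetilde O(T),
\]
which is the claimed rank inequality.

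The first ingredient is a sharp bound on the drift of $Z_t$ during a single IPM iteration \emph{before} the repair. For a log-barrier Newton step of size $\widetilde O(1/\sqrt n)$, the central-path analysis gives $\|S_t^{-1/2}(S_{t+1}-S_t)S_t^{-1/2}\|_F = \widetilde O(1/\sqrt n)$, and combined with the invariant $\|Z_t\|_{\op}\le c$ this implies $\|Z_{t+1}' - Z_t\|_F = \widetilde O(1/\sqrt n)$, where $Z_{t+1}' := S_{t+1}^{-1/2}\widetilde S_t S_{t+1}^{-1/2} - I$ is the pre-repair value. By Weyl's inequality this controls the $\ell_2$-drift of the sorted eigenvalues of $|Z|$, and hence the growth of any sufficiently Lipschitz spectral potential.

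For the potential itself, I would use a dyadic-band construction of the form
\[
    \Phi_t \;=\; \sum_{k\ge 0} \alpha_k \sqrt{N_{t,k}},
\]
where $N_{t,k}$ is the number of eigenvalues of $|Z_t|$ lying in the band $[2^{-k}c,\,2^{-k+1}c)$ and the weights $\alpha_k$ decrease with $k$. The repair rule will zero out exactly those eigenvalues whose magnitude exceeds the triggering threshold. Because $\sqrt{\sum_k s_k}\ge \max_k \sqrt{s_k}$ and the bands have only $O(\log n)$ levels, eliminating $r_t$ eigenvalues forces a potential decrease of at least $\Omega(\sqrt{r_t})$ once one charges to the heaviest band; on the other side, the $\widetilde O(1/\sqrt n)$ Frobenius drift of $Z$ per iteration limits the net number of eigenvalues that cross any band boundary, so across all $O(\log n)$ bands the total potential increase is $\widetilde O(1)$ per IPM step.

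The main obstacle I expect is the joint design of the triggering threshold and the weights $\alpha_k$ so that both properties hold simultaneously, while preserving the invariant $\|Z_t\|_{\op}\le c$. The delicate point is that on a repair of rank $r_t$, the removed eigenvalues could a priori be distributed across many bands; one must verify that the resulting drop in $\sum_k \alpha_k \sqrt{N_{t,k}}$ is still $\Omega(\sqrt{r_t})$, and conversely must charge the cross-band migration of eigenvalues during an IPM step against the $\widetilde O(1)$ per-step budget. This cross-band bookkeeping is where the polylogarithmic factors hidden inside the $\widetilde O(\cdot)$ arise and is, in my view, the heart of the technical argument.
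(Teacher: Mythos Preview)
Your high-level strategy---a spectral potential on $Z_t$ that rises by $\widetilde O(1)$ at each IPM step and falls by $\Omega(\sqrt{r_t})$ at each repair, then telescoping---is exactly the route the paper takes.  There are, however, two concrete problems with the proposal as written.

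\textbf{The drift bound is off by a factor of $\sqrt n$.}  You assert that a log-barrier Newton step gives $\|S_t^{-1/2}(S_{t+1}-S_t)S_t^{-1/2}\|_F = \widetilde O(1/\sqrt n)$.  In this paper the maintained invariant is $g_\eta(y)^\top H(y)^{-1}g_\eta(y)\le \epsilon_N^2$ with $\epsilon_N$ a \emph{constant} (e.g.\ $0.1$), and the computation in \eqref{s-change-1}--\eqref{h-change-1} shows this is exactly $\|S^{-1/2}S_{\new}S^{-1/2}-I\|_F^2$.  So the per-step Frobenius drift of $Z$ is $\Theta(1)$, not $\widetilde O(1/\sqrt n)$.  (The $1/\sqrt n$ you have in mind is the multiplicative increment of the barrier parameter $\eta$, not the size of the Newton step in $y$.)  Were the drift really $\widetilde O(1/\sqrt n)$, no repair scheme would be needed at all: over $T=\widetilde O(\sqrt n)$ steps the cumulative drift would be $\widetilde O(1)$ and $\widetilde S$ could simply stay fixed.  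All of the content of the rank inequality comes from having to absorb a \emph{constant} Frobenius kick per iteration.

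\textbf{The dyadic-band potential does not obviously survive a constant drift.}  With $\sum_i(\Delta\lambda_i)^2=O(1)$ per step, the number of eigenvalues that can cross into band $k$ (width $\approx 2^{-k}c$) is $O(2^{2k})$, so $\sqrt{N_{t,k}}$ can jump by $O(2^{k})$ at a single step; summing over $k$ with decreasing $\alpha_k$ does not give an $\widetilde O(1)$ bound without further ideas.  The paper sidesteps this by choosing the potential
\[
\Phi(Z)=\sum_{i=1}^n \frac{|\lambda(Z)|_{[i]}}{\sqrt i},
\]
which is $1$-Lipschitz in the sorted-eigenvalue vector under the weighted inner product with weights $1/\sqrt i$.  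A single application of Hoffman--Wielandt plus Cauchy--Schwarz then gives the per-step increase $\le (\sum_i 1/i)^{1/2}\cdot\|\Delta\lambda\|_2 = O(\sqrt{\log n})$ directly, with no band bookkeeping.

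A secondary point: the repair rule in the paper is not ``zero out everything above a threshold.''  It searches for the smallest $r$ with $|\lambda|_{[2r]}\le\epsilon_S$ \emph{and} $|\lambda|_{[2r]}\le(1-1/\log n)\,|\lambda|_{[r]}$, and then zeroes the top $2r$ eigenvalues.  The second condition is what guarantees $|\lambda|_{[i]}-|\lambda^{\new}|_{[i]}\ge \Omega(1/\log n)$ for each $i\le r$, which is where the $\Omega(\sqrt r/\log n)$ potential drop comes from; a plain threshold rule does not give you this gap.
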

\noindent The key component to proving Theorem~\ref{thm:rankineq_intro} is the  potential function $\Phi : \R^{n \times n} \rightarrow \R_{\geq 0}$
\begin{align*}
\Phi(Z) := \sum_{ \ell = 1 }^n \frac{|\lambda(Z)|_{[\ell]}}{\sqrt{\ell}} ,
\end{align*}
where $|\lambda(Z)|_{[\ell]}$ is the $\ell$-th in the list of eigenvalues of $Z \in \R^{n \times n}$ sorted in decreasing order of their absolute values. We show an upper bound on the increase in this potential when $S$ is updated, a lower bound on its decrease when $\widetilde{S}$ is updated, and combine the two with non-negativity of the potential to obtain Theorem~\ref{thm:rankineq_intro}. 

Specifically, first we prove that whenever $S$ is updated in an iteration, the potential function {\em increases} by at most $\widetilde{O}(1)$ (see \ifdefined\isfocs full version \else Lemma~\ref{lem:S_change_pot}\fi). 
The proof of this statement crucially uses the structural property of interior point method that slack matrices in consecutive steps are sufficiently close to each other. 
Formally, for any iteration $t \in [T]$, we show in \ifdefined\isfocs the full version \else Theorem~\ref{thm:approx-central-path} \fi that the consecutive slack matrices $S_t$ and $S_{t+1}$ satisfy  
\begin{align}\label{eq:S_change_slowly} 
\| S_t^{-1/2} S_{t+1} S_t^{-1/2} - I \|_F = O(1) 
\end{align}
and combine this bound with the Hoffman-Wielandt theorem \cite{hj12}, which relates the $\ell_2$ distance between the spectrum of two matrices with the Frobenius norm of their difference (see \ifdefined\isfocs full version\else Fact~\ref{fac:hw_thm}\fi). Next, when $\widetilde{S}$ gets updated, we prove that our method of zeroing out the $r_t$ largest eigenvalues of $|Z_t|$, thereby incurring a rank-$r_t$ update to $\widetilde{S}_t$, results in a potential decrease of at least $\widetilde{O}(\sqrt{r_t})$ (see \ifdefined\isfocs full version\else Lemma~\ref{lem:tildeS_change_pot}\fi).

\medskip 

\paragraph{Maintaining rectangular matrix multiplication for Hessian computation\ifdefined\isfocs\else.\fi}

Given the low-rank update on $\wt{S}$ described above, we show how to efficiently update the {\em approximate Hessian} $\widetilde{H}$, defined as 
\begin{align} \label{eq:approxHessiantrue}
\widetilde{H}_{j,k} = \tr[\wt{S}^{-1} A_j \wt{S}^{-1} A_k]
\end{align} 
for each entry $(j,k) \in [m] \times [m]$. The approximate slack matrix $\wt{S}$ being a spectral approximation of the true slack matrix $S$ implies that  the approximate Hessian $\wt{H}$ is also a spectral approximation of the true Hessian $H$ (see \ifdefined\isfocs full version\else Lemma~\ref{lem:appslackappH}\fi). This approximate Hessian therefore suffices  for our algorithm to approximately follow the central path. 

To efficiently update the approximate Hessian $\wt{H}$ in \eqref{eq:approxHessiantrue}, 
we notice that a rank-$r$ update on $\wt{S}$ implies a rank-$r$ update on $\wt{S}^{-1}$ via the Woodbury matrix identity (see \ifdefined\isfocs full version\else Fact~\ref{fac:woodbury}\fi). 
The change in $\wt{S}^{-1}$ can be expressed as  
\begin{align} \label{eq:DeltaStildeInv}
\Delta (\wt{S}^{-1}) = V_+ V_+^\top - V_- V_-^\top, 
\end{align}
where $V_+, V_- \in \mathbb{R}^{n \times r}$. 
Plugging \eqref{eq:DeltaStildeInv} into \eqref{eq:approxHessiantrue}, we can express $\Delta \wt{H}_{j,k}$ as the sum of multiple terms, among the costliest of which are those of the form $\tr[\wt{S}^{-1} A_j V V^\top A_k]$, where $V \in \mathbb{R}^{n \times r}$ is either $V_+$ or $V_-$.  
We compute $\tr[\wt{S}^{-1} A_j V V^\top A_k]$ for all $(j,k) \in [m] \times [m]$ in time $\Tmat(r, n, mn)$ by first computing $V^\top A_k$ for all $k \in [m]$ by horizontally concatenating all $A_k$'s into a wide matrix of size $n \times mn$. We then compute the product of $\wt{S}^{-1/2}$ with $A_j V$ for all $j \in [m]$, which can be done in time $\Tmat(n, n, m r)$, which equals $\Tmat(n, m r, n)$ (see \ifdefined\isfocs full version\else Lemma~\ref{lem:OrderOfTmat}\fi). 
Finally, by flattening each $\wt{S}^{-1/2} A_j V$ into a vector of length $n r$ and stacking all these vectors to form a matrix $\wt{\mathcal{B}} \in \mathbb{R}^{m \times n r}$ with $j$-th row  
\begin{align*}
\wt{\mathcal{B}}_j = \vect(\wt{S}^{-1/2} A_j V) ,
\end{align*}
the task of computing $\tr[\wt{S}^{-1} A_j V V^\top A_k]$ for all $(j,k) \in [m] \times [m]$ reduces to computing $\wt{\mathcal{B}}  \wt{\mathcal{B}}^\top$, which costs $\Tmat(m, n r, m)$. 

In this way, we reduce the runtime of $T \cdot (\Tmat(n,mn,n) + \Tmat(m,n^2,m) )$ for computing the Hessian using fast rectangular matrix multiplication down to
\begin{align} \label{eq:faster_runtime}
\sum_{t=1}^T  ~\left( \Tmat(r_t, n, mn) + \Tmat(n, m r_t,n)+ \Tmat(m, n r_t, m) \right) ,
\end{align}
where $r_t$ is the rank of the update on $\wt{S}_t$. 
Applying Theorem~\ref{thm:rankineq_intro} with several properties of fast rectangular matrix multiplication that we prove in \ifdefined\isfocs the full version \else Section~\ref{subsec:matproperty} \fi, we upper bound the runtime in \eqref{eq:faster_runtime} by 
\begin{align*}
O^*(\sqrt{n}(mn^2 + m^\omega + n^\omega)),
\end{align*}  
which implies Theorem~\ref{thm:main}. In Section~\ref{subsubsec:bottleneck} and~\ref{subsubsec:fail_attempts}, we discuss bottlenecks to further improving our runtime.


\subsubsection{Bottlenecks of our interior point method} \label{subsubsec:bottleneck}

In most cases, the costliest term in our runtime is the per iteration cost of $mn^2$, which corresponds to reading the entire input in each iteration. 
Our subsequent discussions therefore focus on the steps in our algorithm that require at least $mn^2$ time per iteration. 


\medskip

\paragraph{Slack matrix computation\ifdefined\isfocs\else.\fi}

When $y$ is updated in each iteration of our interior point method, we need to compute the true slack matrix $S$ as 
\begin{align*}
S = \sum_{i \in [m]} y_i A_i - C.
\end{align*}
Computing $S$ is needed to update the approximate slack matrix $\wt{S}$ so that $\wt{S}$ remains a spectral approximation to $S$.
As $S$ might suffer from full-rank changes, it naturally requires $m n^2$ time to compute in each iteration. 
This is the first appearance of the $mn^2$ cost per iteration. 

\medskip

\paragraph{Gradient computation\ifdefined\isfocs\else.\fi}

Recall from \eqref{eq:pathfollowing} that our interior point method follows the central path defined via the penalized objective function
\begin{align*}
\min_{y \in \mathbb{R}^m} f_\eta(y) \quad \text{where} \quad f_\eta(y)  := \eta b^\top y + \phi(y),
\end{align*}
for a parameter $\eta > 0$ and $\phi(y) = - \log \det S$.
In each iteration, to perform the Newton step, the gradient of the penalized objective is computed as 
\begin{align} \label{eq:gradientcomputation}
g_\eta(y)_j = \eta \cdot b_j - \tr[S^{-1} A_j]
\end{align} 
for each coordinate $j \in [m]$. Even if we are given $S^{-1}$, it still requires $mn^2$ time to compute \eqref{eq:gradientcomputation} for all $j \in [m]$. 
This is the second appearance of the per iteration cost of $mn^2$.

\medskip

\paragraph{Approximate Hessian computation\ifdefined\isfocs\else.\fi}

Recall from Section~\ref{subsubsec:our_tech} that updating the approximate slack matrix $S$  by rank $r$ means the time needed to update the approximate Hessian is dominated by computing the term
\begin{align*}
 \Delta_{j,k} = \tr[\wt{S}^{-1/2} A_j V \cdot V^\top A_k \wt{S}^{-1/2}] ,
\end{align*}
where $V \in \mathbb{R}^{n \times r}$ is a tall, skinny matrix that comes from the spectral decomposition of $\Delta \wt{S}^{-1}$. 
Computing $\Delta_{j,k}$ for all $(j,k) \in [m] \times [m]$ requires reading at least $A_j$ for all $j \in [m]$, which takes time $mn^2$. 
This is the third bottleneck that leads to  the $mn^2$ term in the cost per iteration.


\subsubsection{LP techniques unlikely to improve SDP runtime\ifdefined\isfocs\else\fi} \label{subsubsec:fail_attempts}

The preceeding discussion of bottlenecks suggests that reading the entire input in each iteration, which takes $mn^2$ time per iteration, stands as a natural barrier to further improving the runtime of SDP solvers based on interior point methods. 

In the context of linear programming (LP), several recent results~\cite{cls19,blss20} yield faster interior point methods that bypass reading the entire input in 
every iteration. 
Two techniques crucial to these results are: (1) showing that the Hessian (projection matrix) admits low-rank updates, and (2) speeding computation of the Hessian via sampling. 

We now describe these techniques in the context of SDP and argue that they are unlikely to improve our runtime. 

\medskip

\paragraph{Showing that the Hessian admits low-rank updates\ifdefined\isfocs\else.\fi}

We saw in Section~\ref{subsubsec:our_tech} that constructing an approximate slack matrix $\wt{S}$ that admits low-rank updates in each iterations leveraged the fact that the true slack matrix $S$ changes ``slowly'' throughout our interior point method as described in~\eqref{eq:S_change_slowly}. One natural question that follows is whether a similar upper bound can be obtained for the Hessian. 
If such a result could be proved, then one could maintain an approximate Hessian that admitted low-rank updates, which would speed up the approximate Hessian computation. 
Indeed, in the context of LP, such a bound for the Hessian  can be proved (e.g., ~\cite[Lemma 47]{blss20}). 

Unfortunately, it is impossible to prove such a statement for the Hessian in the context of SDP. 
To show this, it is convenient to express the Hessian using the Kronecker product \ifdefined\isfocs full version \else (Section~\ref{ssec:notation})\fi as
\begin{align*}
H = \mathcal{A}^\top \cdot (S^{-1} \otimes S^{-1}) \cdot \mathcal{A} ,
\end{align*}
where $\mathcal{A} \in \mathbb{R}^{n^2 \times m}$ is the $n^2 \times m$ matrix whose $i$th column is obtained by flattening $A_i$ into a vector of length $n^2$. 
By proper scaling, we can assume without loss of generality that the current slack matrix is $S = I$, and the slack matrix in the next iteration is $S_{\mathrm{new}} = I + \Delta S$, which satisfies $\norm{\Delta S}_F = c$ for some tiny constant $c > 0$.
Consider the simple example where $\mathcal{A} = I$ (we are assuming here that $m=n^2$ so that $\mathcal{A}$ is a square matrix), which implies that the change in the Hessian can be approximately computed as
\ifdefined\isfocs
\begin{align*}
 & ~ \norm{H^{-1/2} \Delta H H^{-1/2}}_F^2 \\
& \approx \tr\left[ \left( \left(I - \Delta S \right) \otimes \left(I - \Delta S \right) - I \otimes I \right)^2 \right] \\
& \approx \tr\left[ (I \otimes \Delta S + \Delta S \otimes I)^2  \right] \\
& \geq 2 \cdot \tr[I^2] \cdot \tr \left [ (\Delta S)^2 \right] \\
& = 2 n \norm{\Delta S}_F^2 ~ \gg ~ 1 .
\end{align*}
\else
\begin{align*}
\norm{H^{-1/2} \Delta H H^{-1/2}}_F^2 
& \approx \tr\left[ \left( \left(I - \Delta S \right) \otimes \left(I - \Delta S \right) - I \otimes I \right)^2 \right] \\
& \approx \tr\left[ (I \otimes \Delta S + \Delta S \otimes I)^2  \right] \\
& \geq 2 \cdot \tr[I^2] \cdot \tr \left [ (\Delta S)^2 \right] \\
& = 2 n \norm{\Delta S}_F^2 ~ \gg ~ 1 .
\end{align*}
\fi
This large change indicates that we are unlikely to obtain an approximation to the Hessian that admits low-rank updates, which is a key difference between LP and SDP.  

\medskip

\paragraph{Sampling for faster Hessian computation\ifdefined\isfocs\else.\fi}

Recall from \eqref{eq:Hessianrect} that the Hessian can be computed as
\begin{align*}
H = \mathcal{B} \cdot \mathcal{B}^\top, 
\end{align*}
where the $j$th row of $\mathcal{B} \in \mathbb{R}^{m \times n^2}$ is $\mathcal{B}_j = \vect(S^{-1/2} A_j S^{-1/2})$ for all $j \in [m]$. 
We might attempt to approximately compute $H$ faster by sampling a subset of columns of $\mathcal{B}$ indexed by $L \subseteq [n^2]$ and compute the product for only the sampled columns. 
This could reduce the dimension of the matrix multiplication and speed up the Hessian computation. 
Indeed, sampling techniques have been successfully used to obtain faster LP solvers~\cite{cls19,blss20}.

For SDP, however, sampling is unlikely to speed up the Hessian computation. 
In general, we must sample at least $m$ columns (i.e. $|L| \geq m$) of $\mathcal{B}$ to spectrally approximate $H$ or the computed matrix will not be full rank.  
However, this requires computing the entries of $S^{-1/2} A_j S^{-1/2}$ that correspond to $L \subseteq [n^2]$ for all $j \in [m]$, which requires reading all $A_j$'s and thus still takes $O(mn^2)$ time.

\subsection{Related work}

\paragraph{Linear Programming.} Linear Programming is a class of fundamental problems in convex optimization. There is a long list of work focused on fast algorithms for linear programming \cite{d47,k80,k84,v87,v89_lp,ls14,ls15,sidford15,lee16,cls19,b20,blss20}.

\paragraph{Cutting Plane Method.} Cutting plane method is a class of optimization methods that iteratively refine a convex set that contains the optimal solution by querying a separation oracle. Since its introduction in the 1950s, there has been a long line of work on obtaining fast cutting plane methods \cite{s77,yn76,k80,kte88,nn89,v89,av95,bv02,lsw15,jlsw20}.

\paragraph{First-Order SDP Algorithms.} 
As the focus of this paper, cutting plane methods and interior point methods solve SDPs in time that depends {\em logarithmically} on $1/\epsilon$, where $\epsilon$ is the accuracy parameter. 
A third class of algorithms, the {\em first-order methods}, solve SDPs at runtimes that depend {\em polynomially} on $1/\epsilon$. While having worse dependence on $1/\epsilon$ compared to IPM and CPM, these first-order algorithms usually have better dependence on the dimension. There is a long list of work on first-order methods for general SDP or special classes of SDP (e.g. Max-Cut SDP \cite{ak07, gh16, al17, cdst19, lp19, ytfuc19}, positive SDPs \cite{jy11,pt12,alo16,jllpt20}.)



\ifdefined\isfocs
 
\else
\section{Preliminaries}\label{sec:prelims}

\subsection{Notation}
\label{ssec:notation} 
For any integer $d$, we use $[d]$ to denote the set $\{1,2,\cdots,d\}$. We use $\mathbb{S}^{n \times n}$ to denote the set of symmetric $n \times n$ matrices, $\mathbb{S}^{n \times n}_{\geq 0}$ for the set of $n \times n$ positive semidefinite matrices, and $\mathbb{S}^{n \times n}_{>0}$ for the set of $n\times n$ positive definite matrices. For two matrices $A,B \in \mathbb{S}^{n \times n}$, the notation $A \preceq B$ means that $B - A \in \mathbb{S}^{n \times n}_{\geq 0}$. When clear from the context, we use $0$ to denote the all-zeroes matrix (e.g. $A \succeq 0$). For a vector $v \in \R^n$, we use $\diag(v)$ to denote the diagonal $n \times n$ matrix with $\diag(v)_{i,i} = v_i$. For $A, B \in \mathbb{S}^{n \times n}$, we define the inner product to be the trace product of $A$ and $B$, defined as $\langle A, B \rangle := \tr[A^\top B] = \sum_{i, j \in [n]} A_{i,j} B_{i,j}$. For two matrices $A \in \mathbb{R}^{m \times n}$ and $B \in \mathbb{R}^{k \times l}$, the {\em Kronecker product} of $A$ and $B$, denoted as $A \otimes B$, is defined as the $mk \times nl$ block matrix whose $(i,j)$ block is $A_{i,j} B$, for all $(i,j) \in [m] \times [n]$.  

Throughout this paper, unless otherwise specified, $m$ denotes the number of constraints for the primal SDP \eqref{eq:sdpprimal}, and the variable matrix $X$ is of size $n\times n$. The number of non-zero entries in all the $A_i$ and $C$  of \eqref{eq:sdpprimal} is denoted by $\nnz(A)$. 

\subsection{Useful facts}
\label{ssec:facts}
\paragraph{Linear algebra.} Some matrix norms we frequently use in this paper are the Frobenius and operator norms, defined as follows. The Frobenius norm of a matrix $A \in \R^{n\times n}$ is defined to be $\| A \|_F := \sqrt{\tr [ A^\top A ]}$. The operator (or spectral) norm $\|A\|_{\mathrm{op}}$ of $A \in \R^{n \times n}$ is defined to be the largest singular value of $A$. In the case of symmetric matrices (which is what we encounter in this paper), this can be shown to equal the largest absolute eigenvalue of the matrix. A property of trace we frequently use is the following: given matrices $A_1 \in \R^{m \times n_1}, A_2 \in \R^{n_1 \times n_2}, \dotsc, A_k \in \R^{n_{k-1} \times n_k}$, the trace of their product is invariant under cyclic permutation $\tr [ A_1 A_2 \dotsc A_k ] = \tr [ A_2 A_3 \dotsc A_k A_1 ] = \cdots = \tr [ A_k A_1 \dotsc A_{k-2} A_{k-1} ] $. 
A matrix $A \in \mathbb{R}^{n \times n}$ is called {\em normal} if $A$ commutes with its transpose, i.e. $A A^\top = A^\top A$. 
We note that all symmetric $n \times n$ matrices are normal. 
Two matrices $A, B \in \mathbb{R}^{n \times n}$ are said to be similar if there exists a nonsingular matrix $S \in \mathbb{R}^{n \times n}$ such that $A = S^{-1} B S$.
In particular, if matrices $A$ and $B$ are similar, then they have the same set of eigenvalues.  
We use the following simple fact involving Loewner ordering: given two invertible matrices $A$ and $B$ satisfying $\frac{1}{\alpha} B \preceq A \preceq \alpha B$ for some $\alpha > 0$, we have $\frac{1}{\alpha} B^{-1} \preceq A^{-1} \preceq \alpha B^{-1}$. We further need the following facts. 

\begin{fact}[Generalized Lieb-Thirring Inequality~\cite{e13, alo16, jllpt20}] \label{fact:gen_Lieb}
 Given a symmetric matrix $B$, a positive semi-definite matrix $A$ and $\alpha \in [0, 1]$, we have
\begin{align*}
\tr[A^\alpha B A^{1-\alpha} B] \leq \tr[A B^2]. 
\end{align*}
\end{fact}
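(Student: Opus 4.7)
The plan is to reduce the inequality to a pointwise scalar inequality between the eigenvalues of $A$, via spectral decomposition, and then finish with an application of Young's inequality combined with the symmetry of $B$.

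First I would diagonalize $A$. Since $A \succeq 0$, write $A = U\Lambda U^\top$ with $\Lambda = \diag(\lambda_1,\ldots,\lambda_n)$ and $\lambda_i \geq 0$. Set $\widetilde B := U^\top B U$; this matrix is symmetric because $B$ is. By cyclic invariance of the trace,
\begin{align*}
\tr[A^\alpha B A^{1-\alpha} B] = \tr[\Lambda^\alpha \widetilde B\, \Lambda^{1-\alpha} \widetilde B] = \sum_{i,j} \lambda_i^\alpha \lambda_j^{1-\alpha}\, \widetilde B_{ij}\widetilde B_{ji} = \sum_{i,j} \lambda_i^\alpha \lambda_j^{1-\alpha}\, \widetilde B_{ij}^{\,2},
\end{align*}
where the last equality uses $\widetilde B_{ji} = \widetilde B_{ij}$. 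By the same reasoning, $\tr[A B^2] = \sum_{i,j} \lambda_i \widetilde B_{ij}^{\,2}$.

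Next I would symmetrize the left-hand side. Because $\widetilde B_{ij}^{\,2} = \widetilde B_{ji}^{\,2}$, I can swap the roles of $i$ and $j$ and average, giving
\begin{align*}
\sum_{i,j} \lambda_i^\alpha \lambda_j^{1-\alpha}\, \widetilde B_{ij}^{\,2} = \tfrac12 \sum_{i,j} \bigl(\lambda_i^\alpha \lambda_j^{1-\alpha} + \lambda_i^{1-\alpha}\lambda_j^{\alpha}\bigr)\, \widetilde B_{ij}^{\,2},
\end{align*}
and similarly $\tr[AB^2] = \tfrac12 \sum_{i,j}(\lambda_i + \lambda_j)\,\widetilde B_{ij}^{\,2}$. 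Thus the claim reduces to the pointwise scalar inequality $\lambda_i^\alpha \lambda_j^{1-\alpha} + \lambda_i^{1-\alpha}\lambda_j^{\alpha} \le \lambda_i + \lambda_j$ for all $i,j$.

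Finally, I would dispatch this by Young's inequality applied to each nonnegative pair $(\lambda_i,\lambda_j)$: for $\alpha \in [0,1]$,
\begin{align*}
\lambda_i^\alpha \lambda_j^{1-\alpha} \le \alpha \lambda_i + (1-\alpha)\lambda_j, \qquad \lambda_i^{1-\alpha}\lambda_j^{\alpha} \le (1-\alpha)\lambda_i + \alpha\lambda_j,
\end{align*}
and adding the two yields exactly $\lambda_i + \lambda_j$. Multiplying by $\widetilde B_{ij}^{\,2} \ge 0$ and summing over $(i,j)$ closes the argument. There is no real obstacle; the only subtle point is noticing that the symmetry $\widetilde B_{ij} = \widetilde B_{ji}$ is what lets the exponent-swap symmetrization legitimately invoke Young's inequality. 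The hypothesis $A \succeq 0$ is used only to ensure $\lambda_i \ge 0$ so that the fractional powers $\lambda_i^\alpha$ are well defined and so that Young's inequality applies.
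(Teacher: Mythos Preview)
Your argument is correct: diagonalizing $A$, expanding the traces entrywise, symmetrizing in $(i,j)$, and then invoking Young's inequality (weighted AM--GM) is a clean and complete proof of the stated inequality. There is nothing to compare against in the paper itself, however: the paper records this as a cited \emph{Fact} from \cite{e13,alo16,jllpt20} and gives no proof of its own.
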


\begin{fact}[Hoffman-Wielandt Theorem, \cite{hoffman1953, hj12}]\label{fac:hw_thm} Let $A, E \in \R^{n\times n}$ such that $A$ and $A+ E$ are both normal matrices. Let $\lambda_1, \lambda_2, \hdots, \lambda_n$ be the eigenvalues of $A$, and let $\widehat{\lambda}_1, \widehat{\lambda}_2, \hdots, \widehat{\lambda}_n$ be the eigenvalues of $A+E$ in any order. There is a permutation $\sigma$ of the integers $1, \hdots, n$ such that $\sum_{i\in [n]} |\widehat{\lambda}_{\sigma(i)} - \lambda_i |^2 \leq \| E \|_F^2 = \tr [E^* E]$. 
\end{fact}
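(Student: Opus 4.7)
The plan is to reduce the Hoffman--Wielandt inequality to an averaging argument over permutations, using the spectral theorem for normal matrices together with Birkhoff's theorem on doubly stochastic matrices.

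First I would invoke normality of both $A$ and $A+E$ to write $A = U\Lambda U^*$ and $A+E = V\Gamma V^*$, where $U, V \in \C^{n \times n}$ are unitary and $\Lambda = \diag(\lambda_1,\ldots,\lambda_n)$, $\Gamma = \diag(\wh{\lambda}_1,\ldots,\wh{\lambda}_n)$. Setting $W := U^* V$ (itself unitary), the unitary invariance of the Frobenius norm gives
\begin{align*}
\norm{E}_F^2 = \norm{V\Gamma V^* - U\Lambda U^*}_F^2 = \norm{W\Gamma W^* - \Lambda}_F^2 .
\end{align*}
Expanding the right-hand side as $\tr[(W\Gamma W^* - \Lambda)^*(W\Gamma W^* - \Lambda)]$ and using $W W^* = W^* W = I$, I would obtain
\begin{align*}
\norm{E}_F^2 = \sum_i |\lambda_i|^2 + \sum_j |\wh{\lambda}_j|^2 - 2\operatorname{Re}\sum_{i,j} D_{ij}\,\overline{\lambda_i}\,\wh{\lambda}_j ,
\end{align*}
where $D_{ij} := |W_{ij}|^2$. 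The unitarity of $W$ forces every row sum and every column sum of $D$ to equal $1$, so $D$ is doubly stochastic.

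Next I would apply Birkhoff's theorem to decompose $D = \sum_l c_l P_{\pi_l}$ as a convex combination of permutation matrices $P_{\pi_l}$, with $c_l \geq 0$ and $\sum_l c_l = 1$. Since for any permutation $\pi$ of $[n]$,
\begin{align*}
\sum_i \bigl|\wh{\lambda}_{\pi(i)} - \lambda_i\bigr|^2 = \sum_i |\lambda_i|^2 + \sum_j |\wh{\lambda}_j|^2 - 2\operatorname{Re}\sum_i \overline{\lambda_i}\,\wh{\lambda}_{\pi(i)} ,
\end{align*}
substituting the Birkhoff decomposition into the previous identity and using $\sum_l c_l = 1$ yields
\begin{align*}
\norm{E}_F^2 = \sum_l c_l \sum_i \bigl|\wh{\lambda}_{\pi_l(i)} - \lambda_i\bigr|^2 .
\end{align*}
Thus $\norm{E}_F^2$ is realized as a convex combination of the quantities $Q_l := \sum_i |\wh{\lambda}_{\pi_l(i)} - \lambda_i|^2$, so at least one of them satisfies $Q_l \leq \norm{E}_F^2$. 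Taking $\sigma$ to be any $\pi_l$ achieving this bound (and, if a prior labelling of the $\wh{\lambda}_j$ is fixed, composing with that labelling, which the ``in any order'' clause of the statement permits) gives the desired permutation.

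The main obstacle I anticipate is the clean algebraic step that exhibits $\norm{E}_F^2$ as a convex combination indexed by permutations; once $D_{ij} = |W_{ij}|^2$ is identified as doubly stochastic and Birkhoff's theorem is invoked, the remainder is an elementary averaging argument. A secondary conceptual point is that normality of both $A$ and $A+E$ is essential, since it is what permits the simultaneous reduction to diagonal form under unitary conjugation; for non-normal perturbations eigenvalues can jump by far more than $\norm{E}_F$, and the inequality simply fails.
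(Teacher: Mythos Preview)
Your argument is correct and is essentially the classical proof of the Hoffman--Wielandt theorem: unitarily diagonalize both normal matrices, recognize the matrix $D_{ij}=|W_{ij}|^2$ as doubly stochastic, apply Birkhoff's theorem, and extract a permutation from the resulting convex combination by averaging. Each algebraic step checks out, including the expansion of $\norm{W\Gamma W^*-\Lambda}_F^2$ and the identification of the cross term with $\sum_{i,j}D_{ij}\,\overline{\lambda_i}\,\wh{\lambda}_j$.

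Note, however, that the paper does not supply its own proof of this statement: it is recorded as a \emph{Fact} in the preliminaries and attributed to \cite{hoffman1953,hj12}. So there is nothing in the paper to compare against beyond the citation; what you have written is precisely the textbook argument one finds in those references.
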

\begin{fact}[Corollary of the Hoffman-Wielandt Theorem, \cite{hj12}]\label{fac:hoffmanwielandt} Let $A, E \in \R^{n \times n}$ such that $A$ is Hermitian and $A+ E $ is normal. Let $\lambda_1, \dotsc, \lambda_n$ be the eigenvalues of $A$ arranged in increasing order $\lambda_1 \leq \dotsc \leq  \lambda_n$. Let $\widehat{\lambda}_1, \dotsc, \widehat{\lambda}_n$ be the eigenvalues of $A+ E$, ordered so that $\mathrm{Re}(\widehat{\lambda}_1) \leq \dotsc \leq \mathrm{Re}(\widehat{\lambda}_n)$. Then, $\sum_{i\in [n]} |\widehat{\lambda}_i - \lambda_i|^2 \leq \| E \|_F^2$. 
\end{fact}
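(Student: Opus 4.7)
The plan is to reduce the corollary to Fact~\ref{fac:hw_thm} by showing that, among all permutations $\sigma$ of $[n]$, the one induced by sorting the $\widehat{\lambda}_i$ in increasing order of real part minimizes the objective $\sum_i |\widehat{\lambda}_{\sigma(i)} - \lambda_i|^2$. Since Fact~\ref{fac:hw_thm} guarantees that \emph{some} permutation achieves a value at most $\|E\|_F^2$, the minimizing permutation must satisfy the same bound, which is exactly what the corollary asserts.

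First, I would exploit that $A$ is Hermitian, so every $\lambda_i$ is real. Writing each $\widehat{\lambda}_{\sigma(i)}$ in terms of its real and imaginary parts, I would decompose
\begin{align*}
|\widehat{\lambda}_{\sigma(i)} - \lambda_i|^2 = \bigl(\mathrm{Re}(\widehat{\lambda}_{\sigma(i)}) - \lambda_i\bigr)^2 + \mathrm{Im}(\widehat{\lambda}_{\sigma(i)})^2.
\end{align*}
Summing over $i \in [n]$, the imaginary contribution $\sum_i \mathrm{Im}(\widehat{\lambda}_{\sigma(i)})^2$ is permutation-invariant (it is a sum over the fixed multiset of imaginary parts), so minimizing the objective over $\sigma$ reduces to minimizing $\sum_i (\mathrm{Re}(\widehat{\lambda}_{\sigma(i)}) - \lambda_i)^2$.

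Next, I would invoke the rearrangement inequality. Expanding the square gives $\sum_i \mathrm{Re}(\widehat{\lambda}_{\sigma(i)})^2 + \sum_i \lambda_i^2 - 2\sum_i \lambda_i \cdot \mathrm{Re}(\widehat{\lambda}_{\sigma(i)})$, whose first two terms do not depend on $\sigma$. Thus minimizing the total amounts to maximizing $\sum_i \lambda_i \cdot \mathrm{Re}(\widehat{\lambda}_{\sigma(i)})$. By the classical rearrangement inequality, this inner-product-like sum is maximized precisely when $\{\lambda_i\}$ and $\{\mathrm{Re}(\widehat{\lambda}_{\sigma(i)})\}$ are sorted in the same order. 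Since the $\lambda_i$ are already in increasing order, the optimal $\sigma$ is the one that sorts the $\widehat{\lambda}$'s by increasing real part, which is exactly the ordering specified in the corollary.

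Finally, I would combine these observations: Fact~\ref{fac:hw_thm} supplies some permutation $\sigma^*$ with $\sum_i |\widehat{\lambda}_{\sigma^*(i)} - \lambda_i|^2 \leq \|E\|_F^2$, and the real-part ordering achieves a value no larger than this minimum, yielding the claimed bound. The only subtlety I anticipate is ties: when several $\mathrm{Re}(\widehat{\lambda}_i)$ coincide, the sorting is not unique, but any tie-breaking produces the same value of $\sum_i (\mathrm{Re}(\widehat{\lambda}_{\sigma(i)}) - \lambda_i)^2$ since swapping equal real parts leaves this sum unchanged, so the conclusion is independent of how ties are resolved.
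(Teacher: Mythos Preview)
Your proof is correct. Note, however, that the paper does not actually prove this statement: it is recorded as a \emph{fact} with a citation to \cite{hj12} (Horn--Johnson), so there is no ``paper's own proof'' to compare against. The argument you give---applying Fact~\ref{fac:hw_thm} to obtain some permutation achieving the bound, then using the rearrangement inequality on the real parts (after observing that the imaginary-part contribution is permutation-invariant because the $\lambda_i$ are real) to conclude that sorting by $\mathrm{Re}(\widehat{\lambda}_i)$ can only do better---is exactly the standard derivation of this corollary from the Hoffman--Wielandt theorem, and is essentially how it is proved in \cite{hj12}.
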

\begin{fact}[Woodbury matrix identity, \cite{w49,w50}]\label{fac:woodbury} 
Given matrices $A\in \R^{n\times n}$, $U\in \R^{n \times k}$, $C\in \R^{k \times k}$, and $V \in \R^{k \times n}$, such that $A$, $C$, and $A+ UCV$ are invertible, we have 
\begin{align*}
(A + UCV)^{-1} = A^{-1} - A^{-1} U (C^{-1} + VA^{-1} U)^{-1} V A^{-1}.
\end{align*}
\end{fact}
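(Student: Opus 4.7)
The plan is to verify the identity by direct algebraic manipulation, checking that the claimed right-hand side
\begin{align*}
M := A^{-1} - A^{-1} U (C^{-1} + V A^{-1} U)^{-1} V A^{-1}
\end{align*}
satisfies $M \cdot (A + UCV) = I$; by uniqueness of inverses this establishes $M = (A + UCV)^{-1}$. Before beginning the computation, I would note that every inverse appearing in $M$ is well defined: $A$ and $C$ are invertible by hypothesis, and the invertibility of $C^{-1} + V A^{-1} U$ follows from Sylvester's determinant identity together with the factorization $\det(A + UCV) = \det(A)\det(C)\det(C^{-1} + V A^{-1} U)$ and the hypothesis that $A + UCV$ is invertible.

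First, I would expand $M(A + UCV)$ by distributing, obtaining the four summands $A^{-1}A$, $A^{-1}UCV$, $-A^{-1}U(C^{-1}+VA^{-1}U)^{-1} V A^{-1} A$, and $-A^{-1}U(C^{-1}+VA^{-1}U)^{-1} V A^{-1} U C V$. The first simplifies to $I$, and I would factor the common prefix $A^{-1}U$ and the common inverse $(C^{-1} + V A^{-1} U)^{-1}$ out of the remaining three terms, consolidating them as
\begin{align*}
A^{-1} U \Bigl[ C V - (C^{-1} + V A^{-1} U)^{-1} \bigl( I + V A^{-1} U C \bigr) V \Bigr] .
\end{align*}

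The crucial one-line identity is the factorization $I + V A^{-1} U C = (C^{-1} + V A^{-1} U)\,C$, which lets the inverse cancel against its argument and collapses the bracket to $C V - C V = 0$. Substituting back yields $M(A + UCV) = I$, as desired. There is no substantive obstacle in this argument; the entire proof hinges on the single algebraic identity above. As an alternative, I would briefly mention that the Woodbury formula can equally well be derived by computing the inverse of the $2 \times 2$ block matrix with blocks $A$, $-U$, $V$, $C^{-1}$ via two different Schur-complement decompositions (eliminating the bottom-right block in one case and the top-left block in the other) and equating the resulting expressions for the $(1,1)$ block; however, the direct verification above is shorter and entirely self-contained.
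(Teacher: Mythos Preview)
Your verification is correct. The paper does not actually prove this statement: it is recorded as a cited fact (the classical Woodbury identity, attributed to \cite{w49,w50}) and used as a black box in the runtime analysis. Your direct check that $M(A+UCV)=I$, hinging on the factorization $I + V A^{-1} U C = (C^{-1} + V A^{-1} U)\,C$, is the standard self-contained argument and fills in what the paper leaves to the references.
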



\section{Matrix Multiplication}\label{subsec:matproperty}

The main goal of this section is to derive upper bounds on the time to perform the following two rectangular matrix multiplication tasks (Lemma~\ref{lem:mn2_less_nmn},~\ref{lem:mn2_trueomega}, and~\ref{lem:mn2_cheaper}): 
\begin{itemize}
	\vspace{-0.2cm}
\item Multiplying a matrix of dimensions $m \times n^2$ with one of dimensions $n^2 \times m$,
\vspace{-0.2cm}
\item Multiplying a matrix of dimensions $n \times mn$ with one of dimensions $mn \times n$. 
\end{itemize} 
\vspace{-0.2cm}
Besides being crucial to the runtime analysis of our interior point method in Section \ref{sec-cost}, these results (as well as several intermediate results) might be of independent interest. 

\subsection{Exponent of matrix multiplication}

We need the following definitions to describe the cost of certain fundamental matrix operations we use. 

\begin{definition}\label{def:cmat}
Define $\Tmat(n,r,m)$ to be the number of operations needed to compute the product of matrices of dimensions $n \times r$ and $r \times m$.
\end{definition}

\begin{definition}\label{def:omegak} 
We define the function $\omega(k)$ to be the
minimum value such that $\Tmat( n, n^k, n ) = n^{ \omega( k ) + o( 1 ) } $. We overload notation and use $\omega$ to denote  the cost of multiplying two $n\times n$ matrices. Thus, we have $\omega(1)= \omega$. 
\end{definition}

The following is a basic property of $\Tmat$ that we frequently use. 

\begin{lemma}[\cite{bcs97,b13}]\label{lem:OrderOfTmat}
For any three positive integers $n,m,r$, we have
\begin{align*}
\Tmat(n,r,m) = O(\Tmat(n,m,r)) = O(\Tmat(m,n,r)).
\end{align*}
\end{lemma}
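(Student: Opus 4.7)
The plan is to reduce the claim to two classical symmetries of the bilinear (and hence arithmetic) complexity of matrix multiplication, both of which follow from viewing $\Tmat(n,r,m)$ as being captured, up to constant factors, by the rank of the matrix multiplication tensor $\langle n,r,m\rangle$, equivalently by the trilinear form
\begin{equation*}
T_{n,r,m}(A,B,C) \;=\; \tr(ABC), \qquad A\in\R^{n\times r},\; B\in\R^{r\times m},\; C\in\R^{m\times n}.
\end{equation*}
The two equalities in the statement correspond respectively to transposing the last two arguments and to transposing the first two arguments; together they generate all of $S_3$ acting on $(n,r,m)$.

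For the first reduction, I would establish $\Tmat(n,r,m)=O(\Tmat(m,r,n))$ by the transposition identity $(AB)^\top = B^\top A^\top$. Concretely, any algorithm that computes the product of an $n\times r$ matrix and an $r\times m$ matrix using $T$ arithmetic operations also computes, after relabeling the inputs and outputs as transposes (which is free in the arithmetic model), the product of an $m\times r$ matrix and an $r\times n$ matrix at the same cost. Iterating this gives $\Tmat(n,r,m)=\Tmat(m,r,n)$ exactly.

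For the second reduction, I would invoke the cyclic duality of matrix multiplication. The trilinear form $\tr(ABC)$ is invariant under cyclic permutation of its arguments, so the tensor ranks of $\langle n,r,m\rangle$, $\langle r,m,n\rangle$, and $\langle m,n,r\rangle$ are equal. Using the standard equivalence, up to constant factors, between the tensor rank of the matrix multiplication tensor and the arithmetic complexity of the associated bilinear problem (Strassen's substitution argument, as set out in \cite{bcs97,b13}), this yields $\Tmat(n,r,m)=\Theta(\Tmat(r,m,n))=\Theta(\Tmat(m,n,r))$. Combining this with the transposition step from the previous paragraph, one obtains invariance of $\Tmat$ under the full group $S_3$ of permutations of $(n,r,m)$ up to constants, which immediately implies both equalities stated in the lemma.

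The main subtle point, which I would not reprove but simply cite, is the passage between tensor rank and arithmetic cost: tensor rank controls the number of essential (non-scalar) bilinear multiplications, and turning this into an $O(\cdot)$ bound on the total number of arithmetic operations $\Tmat$ requires the standard machinery of bilinear algorithms. All of the remaining work is purely combinatorial bookkeeping of dimensions, so I expect no real obstacle beyond correctly invoking \cite{bcs97,b13} for the rank-to-arithmetic-cost translation.
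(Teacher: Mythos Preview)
Your argument is correct and is the standard proof found in the cited references: transposition gives $\Tmat(n,r,m)=\Tmat(m,r,n)$, cyclic invariance of $\tr(ABC)$ gives equality of the tensor ranks $R(\langle n,r,m\rangle)=R(\langle r,m,n\rangle)=R(\langle m,n,r\rangle)$, and the rank--arithmetic-cost equivalence (which the paper itself records as Lemma~\ref{lem:tensorequiv_rank_time}) closes the loop. Note, however, that the paper does \emph{not} supply its own proof of this lemma; it simply states the result with a citation to \cite{bcs97,b13}, so there is no ``paper proof'' to compare against --- you have in effect reconstructed the argument from the references.
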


We refer to Table 3 in \cite{gu18} for the latest upper bounds on $\omega(k)$ for different values of $k$.  In particular, we need the following upper bounds in our paper. 

\begin{lemma}[\cite{gu18}] \label{lem:omega2_bound}
We have: 
\begin{itemize}
\vspace{-0.2cm}
\item  $\omega = \omega(1) \leq 2.372927$, 
\vspace{-0.2cm}
\item  $\omega (1.5) \leq 2.79654$, 
\vspace{-0.2cm}
\item  $\omega (1.75) \leq 3.02159$,
\vspace{-0.2cm}
\item  $\omega(2) \leq 3.251640$. 
\end{itemize}
\end{lemma}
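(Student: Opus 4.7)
The plan is to obtain all four bounds by direct appeal to the tabulated rectangular matrix multiplication exponents in \cite{gu18}; no new argument is needed. Concretely, Table 3 of \cite{gu18} lists numerical upper bounds on the exponent $\omega(1,1,k)$ of multiplying an $n \times n$ matrix by an $n \times n^k$ matrix for a dense grid of values of $k$, and each of the four numerical bounds in the statement appears (or is dominated by an entry) in that table.

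First I would handle $\omega = \omega(1) \le 2.372927$, which is the standard square matrix multiplication bound (Strassen--Coppersmith--Winograd--Vassilevska Williams--Le Gall), as tabulated in \cite{gu18}. Then I would read off the bounds on $\omega(1.5)$, $\omega(1.75)$, and $\omega(2)$ from the corresponding rows of the same table; these are obtained via the laser method applied to appropriate tensor decompositions for rectangular matrix multiplication, a body of machinery far outside the scope of our paper, which is exactly why we cite rather than re-derive.

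The only subtlety is a small notational bridge: the table in \cite{gu18} is usually phrased in terms of multiplying $n \times n$ by $n \times n^k$, whereas our $\omega(k)$ (Definition~\ref{def:omegak}) concerns the product of $n \times n^k$ and $n^k \times n$. These two exponents agree up to the $n^{o(1)}$ factor already absorbed in Definition~\ref{def:omegak}, by Lemma~\ref{lem:OrderOfTmat} (symmetry of $\Tmat(\cdot,\cdot,\cdot)$ under permutation of its arguments). After this identification, each of the four asserted inequalities is an immediate table lookup.

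The main obstacle one might anticipate, namely re-proving the rectangular exponent bounds from scratch, is entirely avoided by this citation strategy; once the notational identification through Lemma~\ref{lem:OrderOfTmat} is in place, the proof is a one-line reference to \cite[Table~3]{gu18}.
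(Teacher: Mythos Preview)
Your proposal is correct and matches the paper's approach exactly: the paper does not prove this lemma but simply cites \cite{gu18}, stating just before the lemma that ``We refer to Table 3 in \cite{gu18} for the latest upper bounds on $\omega(k)$.'' Your additional remark about the notational bridge via Lemma~\ref{lem:OrderOfTmat} is a nice clarification, though the paper itself does not bother to spell it out.
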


\subsection{Technical results for matrix multiplication}
In this section, we derive some technical results on $\Tmat$ and $\omega$ that we extensively use for our runtime analysis. Some of these results can be derived using tensors, and we demonstrate this in Appendix \ref{subsec:tensormatproperty}. We hope that the use of tensors can yield better runtimes for this problem in future. 

\begin{lemma}[Sub-linearity] \label{lem:sublinearity}
For any $p \geq q \geq 1$, we have
\begin{align*}
\omega(p) \leq p - q + \omega(q).
\end{align*}
\end{lemma}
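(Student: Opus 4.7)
The plan is to prove the bound by the standard block decomposition of the $n \times n^p$ by $n^p \times n$ product into many $n \times n^q$ by $n^q \times n$ subproducts, each of which can be performed in time $n^{\omega(q)+o(1)}$ by Definition~\ref{def:omegak}.

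Concretely, given matrices $A \in \R^{n \times n^p}$ and $B \in \R^{n^p \times n}$, I would partition the columns of $A$ into $N := n^{p-q}$ blocks $A_1, \ldots, A_N$ of size $n \times n^q$ each, and correspondingly partition the rows of $B$ into blocks $B_1, \ldots, B_N$ of size $n^q \times n$, so that
\begin{align*}
AB \;=\; \sum_{i=1}^{N} A_i B_i.
\end{align*}
By Definition~\ref{def:omegak}, each individual product $A_i B_i$ can be computed in time $\Tmat(n, n^q, n) = n^{\omega(q) + o(1)}$, so forming all $N$ products costs $n^{p-q} \cdot n^{\omega(q) + o(1)} = n^{p-q+\omega(q) + o(1)}$ operations.

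To finish, I would account for the cost of summing the $N$ resulting $n \times n$ matrices: this is $n^{p-q} \cdot n^2$, which is dominated by $n^{p-q+\omega(q)}$ because $\omega(q) \geq 2$ (which is immediate since even reading the output of a square product requires $n^2$ operations, and more formally follows from $\omega(1) \leq \omega(q)$ together with Lemma~\ref{lem:omega2_bound}). Combining the two costs gives $\Tmat(n, n^p, n) \leq n^{p-q+\omega(q)+o(1)}$, which by Definition~\ref{def:omegak} yields $\omega(p) \leq p - q + \omega(q)$. There is no real obstacle here; the only thing to be a touch careful about is that $n^{p-q}$ need not be an integer when $p-q$ is not rational-with-denominator-dividing-$\log n$, but this is handled in the usual way by replacing $n$ with a suitable power of itself (absorbed in the $o(1)$) or by taking $\lceil n^{p-q} \rceil$ blocks, which changes the exponent only by $o(1)$.
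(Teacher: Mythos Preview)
Your proposal is correct and follows essentially the same block-decomposition argument as the paper: partition the inner dimension $n^p$ into $n^{p-q}$ blocks of size $n^q$, compute each $n \times n^q$ by $n^q \times n$ product in time $n^{\omega(q)+o(1)}$, and read off the exponent bound. Your additional remarks about the summation cost and the integrality of $n^{p-q}$ only make the argument more careful than the paper's (which simply assumes $n^p$ and $n^q$ are integers for notational simplicity).
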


\begin{proof}
We assume that $n^p$ and $n^q$ are integers for notational simplicity. 
Consider multiplying an $n \times  n^p$ matrix with an $ n^p  \times n$ matrix. 
One can cut the $n \times n^p$ matrix into $n^{p-q}$ rectangular blocks of size $n \times n^q$ and the $n^p \times n$ matrix into $n^{p-q}$ rectangular blocks of size $n^q \times n$, and compute the multiplication of the corresponding blocks. 
This approach takes time $n^{p-q + \omega(q) + o(1)}$, from which the desired inequality immediately follows. 
\end{proof}

Key to our analysis is the following lemma, which establishes the convexity of $\omega(k)$. 


\begin{lemma}[Convexity]\label{lem:omegaconvex} 
The fast rectangular matrix multiplication time exponent $\omega(k)$ as defined in Definition~\ref{def:omegak} is convex in $k$.
\end{lemma}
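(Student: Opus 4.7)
The plan is to exploit the classical tensor-product (block-composition) property of bilinear matrix-multiplication algorithms: given algorithms for the $\langle a_1, b_1, c_1 \rangle$ and $\langle a_2, b_2, c_2 \rangle$ rectangular matrix multiplication tasks using $T_1$ and $T_2$ arithmetic operations respectively, one obtains an algorithm for the $\langle a_1 a_2, b_1 b_2, c_1 c_2 \rangle$ task in $O(T_1 T_2)$ operations, by viewing the product of an $(a_1 a_2) \times (b_1 b_2)$ matrix with a $(b_1 b_2) \times (c_1 c_2)$ matrix as an outer $\langle a_1, b_1, c_1 \rangle$ multiplication whose scalar entries are themselves $a_2 \times b_2$ by $b_2 \times c_2$ matrix products. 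Symbolically,
\begin{align*}
\Tmat(a_1 a_2,\, b_1 b_2,\, c_1 c_2) \;\leq\; O\!\big(\Tmat(a_1, b_1, c_1)\cdot \Tmat(a_2, b_2, c_2)\big).
\end{align*}

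First I would establish midpoint convexity. Fix $k_1, k_2 \geq 1$ and apply the tensor product to the two instances $(n, n^{k_1}, n)$ and $(n, n^{k_2}, n)$. By Definition~\ref{def:omegak}, each instance has cost $n^{\omega(k_1) + o(1)}$ and $n^{\omega(k_2) + o(1)}$, so
\begin{align*}
\Tmat(n^2,\, n^{k_1+k_2},\, n^2) \;\leq\; n^{\omega(k_1) + \omega(k_2) + o(1)}.
\end{align*}
Setting $N := n^2$ rewrites this as $\Tmat(N,\, N^{(k_1+k_2)/2},\, N) \leq N^{(\omega(k_1)+\omega(k_2))/2 + o(1)}$, and invoking Definition~\ref{def:omegak} in the variable $N$ yields $\omega\!\left(\tfrac{k_1+k_2}{2}\right) \leq \tfrac{\omega(k_1)+\omega(k_2)}{2}$.

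To go from midpoint convexity to full convexity, I would promote the argument to arbitrary rational weights: for any $p, q \in \mathbb{N}$ with $0 \leq p \leq q$, take $p$ tensor copies of the $\langle n, n^{k_1}, n\rangle$ algorithm and $q-p$ copies of the $\langle n, n^{k_2}, n\rangle$ algorithm to obtain
\begin{align*}
\Tmat(n^q,\, n^{p k_1 + (q-p) k_2},\, n^q) \;\leq\; n^{p\,\omega(k_1) + (q-p)\,\omega(k_2) + o(1)}.
\end{align*}
Taking $N := n^q$ and applying Definition~\ref{def:omegak} gives $\omega(\lambda k_1 + (1-\lambda) k_2) \leq \lambda\,\omega(k_1) + (1-\lambda)\,\omega(k_2)$ for $\lambda = p/q$.

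Finally, I would extend rational convexity to all $\lambda \in [0,1]$ using continuity of $\omega(\cdot)$, which in fact follows directly from the rational bound itself together with the sub-linearity estimate of Lemma~\ref{lem:sublinearity} (which already implies $\omega$ is Lipschitz with slope at most $1$ on any interval where it is finite). The main point requiring care, rather than the main obstacle, is bookkeeping the $o(1)$ slack in the exponent when tensoring and letting $n \to \infty$; once one commits to a single asymptotic variable ($N = n^q$), the inequality on $\omega$ drops out cleanly. No genuinely new ingredient beyond the tensor product and the defining asymptotics of $\omega(\cdot)$ is needed.
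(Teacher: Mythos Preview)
Your proposal is correct and takes essentially the same approach as the paper: both derive convexity from the sub-multiplicativity $\Tmat(a_1 a_2, b_1 b_2, c_1 c_2) \leq O\big(\Tmat(a_1,b_1,c_1)\cdot \Tmat(a_2,b_2,c_2)\big)$ of block/tensor composition. The only cosmetic difference is that the paper applies this once with factors $n^{\alpha}$ and $n^{1-\alpha}$ to obtain the inequality for arbitrary $\alpha$ in one shot (glossing over integrality), whereas you tensor $q$ equal-sized copies to handle rational $\lambda = p/q$ and then pass to irrational $\lambda$ via the Lipschitz bound from Lemma~\ref{lem:sublinearity}.
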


\begin{proof} Let $k = \alpha \cdot p + (1-\alpha) \cdot q$ for $\alpha \in (0,1)$. For notational simplicity, we assume that $n^p$, $n^q$ and $n^k$ are all integers. 
Consider a rectangular matrix of dimensions $n \times n^k$. Since $\alpha  p \leq k$, we can tile this rectangular matrix with matrices of dimensions $n^{\alpha} \times n^{\alpha  p}$. Then, the  product of this tiled matrix with another similarly tiled matrix of dimensions $n^k \times n$ can be obtained by viewing it as a multiplication of a matrix of dimensions $n/n^{\alpha} \times n^{k}/n^{\alpha  p}$ with one of dimensions $n^{k}/n^{\alpha  p} \times n^{1/\alpha}$, where each ``element'' of these two matrices is itself a matrix of dimensions $n^{\alpha} \times n^{\alpha  p}$. With this recursion in tow, we obtain the following upper bound. 
\begin{align*}
\Tmat(n, n^k, n) 
\leq & \Tmat(n^{\alpha}, n^{\alpha p}, n^{\alpha}) \cdot \Tmat(n/n^\alpha, n^k/n^{\alpha  p}, n / n^{\alpha}) \\
= & \Tmat(n^{\alpha}, n^{\alpha  p}, n^{\alpha}) \cdot \Tmat(n^{(1-\alpha)}, n^{(1-\alpha) q}, n^{(1-\alpha)})\\
\leq & n^{\alpha \cdot \omega(p) + o(1)} \cdot n^{(1-\alpha) \cdot \omega(q) + o (1)} .
\end{align*} The final step above follows from denoting $m = n^\alpha$ and observing that multiplying matrices of dimensions $n^\alpha \times n^{\alpha \cdot p}$ costs, by Definition~\ref{def:omegak}, $m^{\omega(p) + o(1)}$, which is exactly $n^{\alpha (\omega (p) + o(1))}$.  Applying Definition~\ref{def:omegak} and comparing exponents, this implies that 
\begin{align*}
\omega(k) &\leq \alpha\cdot\omega(p) + (1-\alpha)\cdot \omega(q) ,
\end{align*} 
which proves the convexity of the function $\omega(k)$. 
\end{proof}

\begin{claim}\label{claim:omegaval168} 
$\omega(1.68568)\leq 2.96370$.
\end{claim}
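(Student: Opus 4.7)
The plan is to apply the convexity of $\omega(\cdot)$ (Lemma~\ref{lem:omegaconvex}) to interpolate the target value $k = 1.68568$ between two nearby values of $k$ for which numerical upper bounds on $\omega(k)$ are already cited in Lemma~\ref{lem:omega2_bound}. The natural choices are $k = 1.5$ and $k = 1.75$, since $1.68568$ lies strictly between them and both $\omega(1.5) \leq 2.79654$ and $\omega(1.75) \leq 3.02159$ are directly available.

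Concretely, I would write $1.68568 = \alpha \cdot 1.5 + (1-\alpha) \cdot 1.75$ and solve for $\alpha$. From $1.68568 = 1.75 - 0.25\alpha$ one obtains $\alpha = (1.75 - 1.68568)/0.25 = 0.25728$, so $1-\alpha = 0.74272$. Convexity (Lemma~\ref{lem:omegaconvex}) then gives
\begin{align*}
\omega(1.68568) \;\leq\; 0.25728 \cdot \omega(1.5) + 0.74272 \cdot \omega(1.75) \;\leq\; 0.25728 \cdot 2.79654 + 0.74272 \cdot 3.02159.
\end{align*}
A short arithmetic check shows the right-hand side is at most $2.96370$, which proves the claim.

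There is no real obstacle here: the only thing to verify is that the numerical interpolation actually lands below $2.96370$, which it does with a small margin (roughly $2.96368$). If the margin turned out to be negative one would have to pick a slightly different pair of anchor values from the table in \cite{gu18} (for example using $\omega(1.5)$ together with a tighter bound on $\omega(\cdot)$ at a point closer to $1.68568$), but with $k=1.5$ and $k=1.75$ the bound works on the nose, so the proof is just convexity plus a one-line numerical verification.
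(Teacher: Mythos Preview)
Your proposal is correct and essentially identical to the paper's proof: the paper also writes $1.68568 = 0.25728 \cdot 1.5 + (1-0.25728)\cdot 1.75$, applies convexity (Lemma~\ref{lem:omegaconvex}), and plugs in the bounds $\omega(1.5) \leq 2.79654$ and $\omega(1.75) \leq 3.02159$ from Lemma~\ref{lem:omega2_bound} to obtain the claimed inequality.
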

\begin{proof}
We can upper bound $\omega(1.68568)$ in the following sense 
\begin{align*}
\omega(1.68568) 
= & ~ \omega(0.25728 \cdot 1.5 + (1 - 0.25728) \cdot 1.75) \\
\leq & ~ 0.25728 \cdot \omega(1.5) + (1 - 0.25728) \cdot \omega(1.75) \\
\leq & ~ 0.25728 \cdot 2.79654 + (1- 0.25728) \cdot 3.02159\\
\leq & ~ 2.96370,
\end{align*}  
where the first step follows from convexity of $\omega$ (Lemma~\ref{lem:omegaconvex}), the third step follows from $\omega(1.5) \leq 2.79654$ and $\omega(1.75) \leq 3.02159$ (Lemma~\ref{lem:omega2_bound}).
\end{proof}

\begin{lemma} \label{lem:hlk_less_hkl}
Let $\Tmat$ be defined as in Definition~\ref{def:cmat}. Then for any positive integers $h$, $\ell$, and $k$, we have
\begin{align*}
\Tmat(h, \ell k, h) \leq O(\Tmat(h k , \ell , h k)).
\end{align*}
\end{lemma}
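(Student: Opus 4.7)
The plan is to give a direct block-decomposition argument that reduces the desired product to a single rectangular multiplication of the form described on the right-hand side, plus a negligible amount of post-processing. Concretely, let $A \in \mathbb{R}^{h \times \ell k}$ and $B \in \mathbb{R}^{\ell k \times h}$ be two arbitrary input matrices for the product $\Tmat(h, \ell k, h)$. I would first partition $A$ horizontally into $k$ blocks $A = [A_1 \,|\, A_2 \,|\, \cdots \,|\, A_k]$ with each $A_i \in \mathbb{R}^{h \times \ell}$, and partition $B$ vertically into $k$ blocks $B = [B_1^\top \,|\, B_2^\top \,|\, \cdots \,|\, B_k^\top]^\top$ with each $B_i \in \mathbb{R}^{\ell \times h}$, so that
\begin{align*}
A B = \sum_{i=1}^{k} A_i B_i .
\end{align*}

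Next, I would construct two auxiliary matrices by rearranging these blocks: let $\wt{A} \in \mathbb{R}^{hk \times \ell}$ be the vertical stacking of $A_1, A_2, \ldots, A_k$, and let $\wt{B} \in \mathbb{R}^{\ell \times hk}$ be the horizontal concatenation of $B_1, B_2, \ldots, B_k$. Then the product $\wt{A}\wt{B}$ is an $hk \times hk$ matrix whose $(i,j)$-th block of size $h \times h$ is exactly $A_i B_j$. In particular, the $k$ diagonal blocks of $\wt{A}\wt{B}$ are precisely $A_1 B_1, A_2 B_2, \ldots, A_k B_k$, so $AB$ is obtained simply by summing these $k$ diagonal blocks.

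The total cost is therefore $\Tmat(hk, \ell, hk)$ for the auxiliary matrix product, plus $O(h^2 k)$ for extracting and adding the $k$ diagonal blocks of size $h \times h$. The latter is dominated by $\Tmat(hk, \ell, hk)$, since any matrix-multiplication algorithm must at least write down an $hk \times hk$ output and hence $\Tmat(hk, \ell, hk) \geq (hk)^2 \geq h^2 k$. Combining these, we conclude $\Tmat(h, \ell k, h) \leq O(\Tmat(hk, \ell, hk))$.

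I do not anticipate a genuine obstacle here: the argument is just a re-packing of block products, and the only mild subtlety is ensuring that the auxiliary multiplication computes a superset of what we need (all cross-terms $A_i B_j$) so that extracting and summing only the diagonal blocks recovers $AB$ without having to pay for $k$ separate small multiplications.
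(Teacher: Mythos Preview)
Your proposal is correct and essentially identical to the paper's own proof: both partition $A$ and $B$ into $k$ blocks, stack them into an $hk \times \ell$ and an $\ell \times hk$ matrix, multiply once, and sum the $k$ diagonal $h \times h$ blocks. Your extra remark bounding the $O(h^2 k)$ post-processing by $\Tmat(hk,\ell,hk)$ is a nice touch that the paper leaves implicit.
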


\begin{proof}
Given any matrices $A, B^\top \in \mathbb{R}^{h, \ell k}$, by Definition~\ref{def:cmat}, the cost of computing the matrix product $AB$ is $\Tmat(h, \ell k, h)$. We now show how to compute this product in time $O(\Tmat(hk, \ell, hk))$. 
We cut $A$ and $B^\top$ into $k$ sub-matrices each of size $h \times \ell$, i.e. $A = (A_1,\cdots, A_k)$ and $B^\top = (B_1^\top, \cdots, B_k^\top)$, where each $A_i, B_i^\top \in \mathbb{R}^{h \times \ell}$ for all $i \in [k]$. By performing matrix multiplication blockwise, we can write  
\begin{align*}
AB = \sum_{i=1}^k A_i B_i.
\end{align*}
Next, we stack the $k$ matrices $A_1,\cdots,A_k$ vertically to form a matrix $A' \in \mathbb{R}^{hk,\ell}$. Similarly, we stack the $k$ matrices $B_1, \cdots, B_k$ horizontally to form a matrix $B' = (B_1,\cdots,B_k) \in \mathbb{R}^{\ell, hk}$. 
By Definition~\ref{def:cmat}, we can compute $A' B' \in \mathbb{R}^{hk, hk}$ in time $\Tmat(hk, \ell, hk)$. To complete the proof, we note that we can derive $AB$ from $A' B'$ as follows: for each $j \in [k]$, the $j$th diagonal block of $A' B'$ of size $h \times h$ is exactly $A_j B_j$, and summing up the $k$ diagonal $h \times h$ blocks of $A' B'$ gives $AB$. 
\end{proof}

\subsection{General upper bound on $\Tmat(n, mn, n)$ and $\Tmat(m, n^2, m)$}

\begin{lemma}\label{lem:mn2_less_nmn}
Let $\Tmat$ be defined as in Definition~\ref{def:cmat}. \\
If $m \geq n$, then we have
\begin{align*}
\Tmat(n, m n, n) \leq O(\Tmat(m, n^2, m)) .
\end{align*}
If $m \leq n$, then we have
\begin{align*}
\Tmat(m, n^2, m) \leq O(\Tmat(n, m n, n)) .
\end{align*}
\end{lemma}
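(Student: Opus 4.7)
The plan is to derive both inequalities as direct consequences of Lemma~\ref{lem:hlk_less_hkl}, which says $\Tmat(h, \ell k, h) \leq O(\Tmat(hk, \ell, hk))$ for any positive integers $h,\ell,k$. The key observation is that both $\Tmat(n, mn, n)$ and $\Tmat(m, n^2, m)$ can be seen as instances of the ``$\Tmat(h, \ell k, h)$'' template with different choices of $(h, \ell, k)$, and the reduction amounts to rearranging tiles so that the short/fat aspect ratio is swapped for a square aspect ratio.

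\emph{First I would handle the case $m \geq n$.} Assume WLOG that $m/n$ is a positive integer (otherwise round $m$ up to the next multiple of $n$, which changes the runtime only by a constant factor). In Lemma~\ref{lem:hlk_less_hkl}, set $h = n$, $\ell = n^2$, and $k = m/n$. Then the left-hand side becomes
\begin{align*}
\Tmat(h, \ell k, h) = \Tmat\bigl(n,\; n^2 \cdot (m/n),\; n\bigr) = \Tmat(n, mn, n),
\end{align*}
while the right-hand side becomes
\begin{align*}
\Tmat(hk, \ell, hk) = \Tmat\bigl(n \cdot (m/n),\; n^2,\; n \cdot (m/n)\bigr) = \Tmat(m, n^2, m),
\end{align*}
which yields the claimed bound $\Tmat(n, mn, n) \leq O(\Tmat(m, n^2, m))$.

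\emph{Next I would handle the symmetric case $m \leq n$.} Assume WLOG that $n/m$ is a positive integer. Apply Lemma~\ref{lem:hlk_less_hkl} with $h = m$, $\ell = mn$, and $k = n/m$. The left-hand side is
\begin{align*}
\Tmat(h, \ell k, h) = \Tmat\bigl(m,\; mn \cdot (n/m),\; m\bigr) = \Tmat(m, n^2, m),
\end{align*}
and the right-hand side is
\begin{align*}
\Tmat(hk, \ell, hk) = \Tmat\bigl(m \cdot (n/m),\; mn,\; m \cdot (n/m)\bigr) = \Tmat(n, mn, n),
\end{align*}
giving $\Tmat(m, n^2, m) \leq O(\Tmat(n, mn, n))$.

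\emph{The main conceptual point} (which is already the content of Lemma~\ref{lem:hlk_less_hkl}) is the block-sum trick: an $h \times \ell k$ by $\ell k \times h$ product can be written as the sum of $k$ products of $h \times \ell$ by $\ell \times h$ blocks, and this sum can itself be read off from the diagonal blocks of a single $hk \times \ell$ by $\ell \times hk$ product. No further obstacle arises here; the only subtlety is the divisibility assumption, which is removed by standard rounding that affects the runtime by at most a constant factor.
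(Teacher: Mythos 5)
Your proof is correct and takes essentially the same route as the paper: both cases are obtained from Lemma~\ref{lem:hlk_less_hkl} with the same parameter choices ($h=n$, $\ell=n^2$, $k\approx m/n$ for $m\geq n$, and the symmetric choice for $m\leq n$), the paper simply leaving the second case as ``similar.'' Your explicit handling of the divisibility issue by rounding is a minor (and slightly cleaner) variation on the paper's use of $k=\lfloor m/n\rfloor$.
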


\begin{proof}
We only prove the case of $m \geq n$, as the other case where $m < n$ is similar. 
This is an immediate consequence of Lemma~\ref{lem:hlk_less_hkl} by taking $h = n$, $\ell = n^2$, and $k = \lfloor m/n \rfloor$, where $k$ is a positive integer because $m \geq n$.
\end{proof}

In the next lemma, we derive upper bounds on the term $\Tmat(m, n^2, m)$ when $m \geq n$ and $\Tmat(n, mn, n)$ when $m < n$, which is crucial to our runtime analysis. 


\begin{lemma}\label{lem:mn2_trueomega} 
Let $\Tmat$ be defined as in Definition~\ref{def:cmat} and $\omega$ be defined as in Definition~\ref{def:omegak}. \\
Property I. We have 
\begin{align*}
\Tmat(n, mn, n) \leq O (m n^{\omega + o(1)}) .
\end{align*}
Property II. We have
\begin{align*}
\Tmat(m, n^2, m) \leq O \left(\sqrt{n} \left(mn^2  + m^{\omega} \right) \right) .
\end{align*}
\end{lemma}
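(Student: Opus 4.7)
The plan is to prove Property I by a direct blockwise decomposition, and Property II by a case analysis on how $m$ compares to $n$ and to $n^2$, with the middle range requiring the convexity of $\omega(\cdot)$.

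For Property I, I would split the inner dimension $mn$ in $\Tmat(n, mn, n)$ into $m$ consecutive blocks of width $n$. Writing the two factors correspondingly as $A = [A_1 \mid \cdots \mid A_m]$ and $B = [B_1^\top \mid \cdots \mid B_m^\top]^\top$ with each $A_i, B_i \in \R^{n \times n}$, we have $AB = \sum_{i=1}^m A_i B_i$. Each summand is an $n \times n$ by $n \times n$ product costing $n^{\omega + o(1)}$ by Definition~\ref{def:omegak}, and there are $m$ of them, giving $\Tmat(n, mn, n) \leq O(m \cdot n^{\omega + o(1)})$.

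For Property II, I would argue in three cases. In case (a), $m \leq n$: by Lemma~\ref{lem:mn2_less_nmn}, $\Tmat(m, n^2, m) = O(\Tmat(n, mn, n))$, which by Property I is $O(m n^{\omega + o(1)})$; since $\omega \leq 2.373 < 5/2$ by Lemma~\ref{lem:omega2_bound}, this is absorbed into $\sqrt{n}\cdot mn^2$. In case (b), $m \geq n^2$: the inner dimension is at most $m$, so padding gives $\Tmat(m, n^2, m) \leq \Tmat(m, m, m) = m^{\omega + o(1)} \leq \sqrt{n}\cdot m^\omega$. The heart of the proof is case (c), $n \leq m \leq n^2$: set $m = n^\alpha$ with $\alpha \in [1,2]$, so by Definition~\ref{def:omegak} we have $\Tmat(m, n^2, m) = n^{\alpha \omega(2/\alpha) + o(1)}$. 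Since $2/\alpha \in [1, 2]$, I would apply convexity (Lemma~\ref{lem:omegaconvex}) with anchors at $k=1$ and $k=2$, obtaining $\omega(2/\alpha) \leq (2/\alpha - 1)\omega(2) + (2 - 2/\alpha)\omega$, whence $\alpha \omega(2/\alpha) \leq 2(\alpha - 1)\omega + (2-\alpha)\omega(2)$. The target exponent is $\max(\alpha + 5/2, \alpha\omega + 1/2)$, matching the two terms in $\sqrt{n}(mn^2 + m^\omega)$. Substituting the bounds $\omega \leq 2.373$ and $\omega(2) \leq 3.252$ from Lemma~\ref{lem:omega2_bound}, the right side becomes a linear function of $\alpha$ that I would verify is dominated by $\alpha + 5/2$ for $\alpha$ below the crossover $\alpha^\star := 2/(\omega - 1) \approx 1.457$ and by $\alpha\omega + 1/2$ above it, noting that the two regimes of validity actually overlap.

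The main obstacle is case (c), where the inequality is numerically tight near $\alpha = 3/2$: the convex combination evaluated with the current best $\omega$ upper bound just barely fits under the target. In particular, sub-linearity alone (Lemma~\ref{lem:sublinearity}) is insufficient, since at $\alpha = 3/2$ it yields $2 - \alpha + \alpha\omega \approx 4.06$, exceeding $\alpha + 5/2 = 4$. Only convexity, which brings the sharper $\omega(2) \leq 3.252$ into play and thus gives exponent $\approx 3.999$ at $\alpha = 3/2$, closes the gap; this is the step where any future improvement in rectangular matrix multiplication exponents would propagate into a better bound.
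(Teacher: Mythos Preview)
Your proof of Property~I is identical to the paper's.

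For Property~II, your three-case organization differs from the paper, which treats all $a = \log_n m \in (0,\infty)$ uniformly by reducing to the single inequality $\omega(b) < \max(1+5b/4,\, \omega + b/4)$ for $b = 2/a$ and, via convexity, checking only $b \to 0$, $b \to \infty$, and the crossover $b^\star = \omega - 1$. Your cases (a) and (b) are valid and more concrete shortcuts for the extreme regimes; the approaches converge on the same convexity idea in the middle range.

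There is, however, a real gap in your case (c). After substituting the numerical bounds into the convexity estimate $2(\alpha-1)\omega + (2-\alpha)\omega(2)$ you correctly obtain an upper bound $1.494\alpha + 1.758$ on $\alpha\,\omega(2/\alpha)$. But the target $\alpha\omega + 1/2$ involves the \emph{true} $\omega$, and replacing it by $2.373$ enlarges the target, so this comparison does not establish the claim. Concretely, if the true $\omega$ were close to $2$, then at $\alpha = 2$ your numerical upper bound reads $4.746$, while the actual target is only $\max(4.5,\,2\cdot 2 + 0.5) = 4.5$, and your argument would fail there. The paper avoids this by keeping $\omega$ symbolic: at the crossover it sets $t := \omega - 2$, substitutes only the numerical bound on $\omega(2)$, and verifies the resulting quadratic $-t^2 + 1.00164\,t - 1/4$ is negative for all $t \in [0,\,0.373]$.

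Your argument is easily repaired in the same spirit. For $\alpha \geq 3/2$, use sublinearity $\omega(2) \leq \omega + 1$ inside the convexity combination (rather than $\omega(2) \leq 3.252$) to get $\alpha\,\omega(2/\alpha) \leq \alpha\omega + (2-\alpha) \leq \alpha\omega + 1/2$ with the true $\omega$. For $\alpha \leq 3/2$, keep $\omega$ symbolic in the convexity bound but plug in $\omega(2) \leq 3.252$, and verify $2(\alpha-1)\omega + (2-\alpha)(3.252) \leq \alpha + 5/2$; since the left side is nondecreasing in $\omega$, it suffices to check $\omega = 2.373$, which gives exactly your $1.494\alpha + 1.758 \leq \alpha + 2.5$, valid for $\alpha \leq 1.502$.
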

\begin{proof}

{\bf Property I.}

Recall from Definition~\ref{def:cmat} that $\Tmat(n, mn, n)$ is the cost of multiplying a matrix of size $n \times mn$ with one of size $mn \times n$. We can cut each of the matrices into $m$ sub-matrices of size $n \times n$ each. The product in question then can be obtained by multiplying these sub-matrices. Since there are $m$ of them, and each product of an $n\times n$ submatrix with another $n \times n$ submatrix costs, by definition, $n^{\omega + o(1)}$, we get $\Tmat(n, mn, n) \leq O(mn^{\omega + o(1)})$, as claimed. 

{\bf Property II.}

Let $m = n^a$, where $a \in (0,\infty)$. By definition, $\Tmat(m, n^2, m)$ is the cost of multiplying a matrix of size $m \times n^2$ with one of size $n^2 \times m$. Expressing $n^2$ as $m^{2/a}$ then gives, by Definition~\ref{def:omegak}, that 
\begin{align*}
\Tmat(m, n^2, m) = m^{\omega(2/a) + o(1)} = n^{a \cdot \omega(2/a) + o(1)}
\end{align*}
 Property II is then an immediate consequence of the following inequality, which we prove next: 
\begin{align}
\omega(2/a) < \max(1 + 2.5/a, \omega(1) + 0.5/a) \qquad \forall a \in (0,\infty). \label{eq:omega-2-by-a-bound}
\end{align}
Define $b = 2/a \in (0, \infty)$. Then the desired inequality in \eqref{eq:omega-2-by-a-bound} can be expressed in terms of $b$ as 
\begin{align}\label{tbd-lines}
\omega(b) < \max( 1 + 5b/4, \omega(1) + b/4) \qquad \forall b \in (0, \infty) .
\end{align}
Notice that the RHS of \eqref{tbd-lines} is a maximum of two linear functions of $b$ and these intersect at $b^* = \omega(1) - 1$. 
By the convexity of $\omega({}\cdot{})$ as proved in  Lemma~\ref{lem:omegaconvex}, it suffices to verify \eqref{tbd-lines} at the endpoints $b \rightarrow 0$, $b \rightarrow \infty$ and $b = b^*$.
In the case where $b = \delta$ for any $\delta < 1$, \eqref{tbd-lines} follows immediately from the observation that $\omega(\delta) < \omega(1)$. 
We next argue about  the case $b \rightarrow \infty$. By Lemma \ref{lem:omega2_bound} we have $\omega(2) \leq 3.252$. Using Lemma \ref{lem:sublinearity}, we have $\omega(b)\leq b - 2 + \omega(2)$. Combining these two facts implies that for any $b > 2$, we have 
\begin{align*}
\omega(b) \leq b - 2 + \omega(2) \leq 1 + 5b / 4, 
\end{align*} which again satisfies \eqref{tbd-lines}. The final case is $b = b^* = \omega(1) - 1$, for which \eqref{tbd-lines} is equivalent to 
\begin{align}\label{tbd-bstar}
\omega(\omega(1) - 1) < 5\omega(1)/4 - 1/4.
\end{align}
By Lemma \ref{lem:omega2_bound}, we have that $\omega(1) - 2 \in [0, 0.372927]$.
Then to prove \eqref{tbd-bstar}, it is sufficient to show that 
\begin{align}\label{tbd-t}
\omega(t + 1) < 5t/4 + 9/4 \qquad \forall t \in [0, 0.372927] .
\end{align}
By the convexity of $\omega({}\cdot{})$ as proved in Lemma \ref{lem:omegaconvex}, the upper bound of $\omega(2) \leq 3.251640$ in Lemma \ref{lem:omega2_bound}, and recalling that $\omega(1) = t + 2$ for $t\in [0, 0.372927]$, we have for $k \in [1, 2]$, 
\begin{align*}
\omega(k) \leq \omega(1) + (k-1) \cdot (3.251640 - (t+2)) = t + 2 + (k-1) \cdot (1.251640 - t).
\end{align*} 
In particular, using this inequality for $k = t+1$, we have 
\begin{align*}
	\omega(t + 1) - 5t/4 - 9/4 &\leq (t + 2) + t\cdot (1.251640 - t) - 5t/4 - 9/4 \\
	&= -t^2 + 1.00164 t - 1/4,
\end{align*} 
which is negative on the entire interval $[0,0.372927]$. 
This establishes \eqref{tbd-t} and finishes the proof. 
\end{proof}


\subsection{Specific upper bound on $\Tmat(m,n^2,m)$}
%

\begin{lemma}\label{lem:mn2_cheaper} 
For any two positive integers $n$ and $m$, we have
\begin{align*}
\Tmat(m, n^2, m) = o \left(m^3 + m n^{2.37} \right) .
\end{align*}
\end{lemma}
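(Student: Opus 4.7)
The plan is to set $m = n^a$ for $a \in (0, \infty)$, so that by Definition~\ref{def:omegak} we have $\Tmat(m, n^2, m) = n^{a \cdot \omega(2/a) + o(1)}$ while $m^3 + m n^{2.37} = \Theta\bigl(n^{\max(3a,\, a + 2.37)}\bigr)$. It therefore suffices to establish the strict inequality
\begin{align*}
a \cdot \omega(2/a) < \max(3a,\, a + 2.37) \qquad \text{for all } a \in (0, \infty),
\end{align*}
with a uniform positive gap on compact subintervals. Continuity of $\omega$ (from convexity, Lemma~\ref{lem:omegaconvex}) together with the endpoint asymptotics $a \cdot \omega(2/a) \to 2$ as $a \to 0^{+}$ and $a \cdot \omega(2/a) \sim \omega \cdot a$ as $a \to \infty$ then absorbs the $n^{o(1)}$ slack and yields the stated $o(m^3 + mn^{2.37})$ bound.

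I would verify the strict inequality by case analysis on $a$, keyed to the crossover point $a^{*} = 1.185$ where $3a$ and $a + 2.37$ coincide, and to the ranges of $k = 2/a$ covered by the sharp numerical bounds on $\omega$ from Lemma~\ref{lem:omega2_bound} and Claim~\ref{claim:omegaval168}. First, for $a \leq 1$ apply sublinearity (Lemma~\ref{lem:sublinearity}) with $p = 2/a \geq 2$ and $q = 2$ to get $\omega(2/a) \leq 2/a - 2 + \omega(2)$, hence $a \cdot \omega(2/a) \leq 2 + (\omega(2) - 2) \cdot a \leq 2 + 1.252 \cdot a$, which is strictly below $a + 2.37$ since $0.252 \cdot a < 0.37$ on $a \leq 1$. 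Second, for $a \geq 2$ we have $2/a \leq 1$ and thus $\omega(2/a) \leq \omega \leq 2.373 < 3$, giving $a \cdot \omega(2/a) < 3a$.

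The heart of the argument is the intermediate range $a \in [1, 2]$, which I would split at $a^{*} = 1.185$. For $a \in [a^{*}, 2]$ the RHS is $3a$, so it suffices to verify $\omega(2/a) < 3$ for $2/a \in [1, 1.688]$; by convexity the chord on $[1, 1.688]$ is controlled by its endpoints $\omega(1) \leq 2.373$ and $\omega(1.68568) \leq 2.96370$ (Claim~\ref{claim:omegaval168}), both strictly below $3$. For $a \in [1, a^{*}]$ the RHS is $a + 2.37$ and $2/a \in [1.688, 2]$; here convexity on $[1.688, 2]$ with endpoints $\omega(1.68568) \leq 2.96370$ and $\omega(2) \leq 3.252$ produces an affine upper bound on $\omega(2/a)$, which after multiplying by $a$ and substituting $k = 2/a$ reduces the task to a single linear inequality (specifically $1.406 \cdot a + 1.846 < a + 2.37$, i.e.\ $a < 1.291$) that holds throughout $[1, 1.185]$.

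The main obstacle is precisely this crossover region around $a = a^{*}$, where both $3a$ and $a + 2.37$ are close to their common value $3.555$ and the slack is smallest. This is exactly why the tailored estimate $\omega(1.68568) \leq 2.96370$ of Claim~\ref{claim:omegaval168} is indispensable: using only sublinearity from $q = 1$ would give $\omega(1.688) \leq 3.061$ and the product $1.185 \cdot 3.061 \approx 3.628$ exceeds $3.555$, whereas the Claim's sharper value gives $1.185 \cdot 2.96370 \approx 3.512 < 3.555$, producing the needed strict gap and, by continuity, a uniform positive gap across $(0, \infty)$ that upgrades strict inequality of exponents into the desired $o(m^3 + m n^{2.37})$ conclusion.
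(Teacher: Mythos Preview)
Your proposal is correct and follows essentially the same approach as the paper: both set $m = n^a$, split at the crossover $a^* \approx 1.185$ where $3a = a + 2.37$, invoke Claim~\ref{claim:omegaval168} to get $\omega(1.68568) \leq 2.9637 < 3$ at the critical point, and use convexity of $\omega(\cdot)$ together with sublinearity to handle the remaining ranges. The only cosmetic difference is that you use four subintervals ($a \leq 1$, $[1,a^*]$, $[a^*,2]$, $a \geq 2$) whereas the paper uses just two: for $a \geq 1.18647$ it applies monotonicity directly to get $\omega(2/a) \leq \omega(1.68568) < 3$, and for $a \leq 1.18647$ it shows the single affine bound $\omega(t) < 1 + 1.185\,t$ holds for all $t \geq 1.68567$ (by checking the endpoints $t = 1.68567$, $t = 2$ via convexity, and $t > 2$ via sublinearity), which after multiplying by $a$ gives $a\,\omega(2/a) < a + 2.37$ in one stroke.
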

\begin{proof}

 Let $m = n^a$ where $a \in (0,\infty)$. Recall  that $\Tmat(m,n^2, m) = m^{\omega(2/a)+o(1)} =  n^{a \omega(2/a) + o(1)}$. We consider the following two cases according to the range of $a$. 

 \textbf{Case 1: $a \in [ 1.18647, \infty)$}. In this case, we have $\omega(2/a) \leq \omega(2/1.18647) \leq \omega(1.68568) < 3$, where the last inequality follows from Claim~\ref{claim:omegaval168}. This implies that 
\begin{align}
\Tmat(m, n^2, m) = o(n^{3a}) = o(m^3). \label{eq:case1-tmat-mn2}
\end{align} 

 \textbf{Case 2: $a \in (0, 1.18647]$}. In this case, we have $ 2/a \in [1.68567, \infty)$. 
Consider the linear function
\begin{align}\label{eq-lineupperbound}
y(t) = 1 + 2.37 \cdot \frac{t}{2}. 
\end{align} 
By Claim~\ref{claim:omegaval168}, we have 
\begin{align}
\omega(1.68567) < 2.997 \leq y(1.68567). \label{eq:omega-y-left}
\end{align} By Lemma \ref{lem:omega2_bound}, we have 
\begin{align}
\omega(2) < 3.37 = y(2). \label{eq:omega-y-right} 
\end{align} An application of Lemma \ref{lem:sublinearity} then gives, for any $t\geq 2$, the inequality 
\begin{align} 
\omega(t) \leq t - 2 + \omega(2) < t - 2 + y(2) \leq  y(t),\label{eq:omega-y-beyondright}
\end{align} where the last inequality is by definition of $y(t)$ from \eqref{eq-lineupperbound}.  
Therefore, combining the convexity of $\omega({}\cdot{})$, as proved in Lemma~\ref{lem:omegaconvex}, with \eqref{eq:omega-y-left}, \eqref{eq:omega-y-right}, and \eqref{eq:omega-y-beyondright}, we conclude that for any $t \in [1.68567, \infty)$, the function $\omega$ is bounded from above by the affine function $y$, expressed as follows. $$\omega(t) < y(t) =  1 + 2.37 \cdot \frac{t}{2}.$$ 
This implies that 
\begin{align}
\Tmat(m, n^2, m) = n^{a \cdot \omega(2/a) + o(1)} = o(n^{a +  2.37}) = o(mn^{3.27}). \label{eq:case2-tmat-mn2}
\end{align}
Combining the results from \eqref{eq:case1-tmat-mn2} and \eqref{eq:case2-tmat-mn2} finishes the proof of the lemma. 
\end{proof}


\section{Main Theorem}

In this section, we give the formal statement of our main result. 

\begin{theorem}[Main result, formal]\label{thm:main_formal}
Consider a semidefinite program with variable size $n \times n$ and $m$ constraints (assume there are no redundant constraints): 
\begin{align} \label{eq:SDP_main_formal}
\max \langle C, X \rangle  \textup{ subject to } X \succeq 0, \langle A_i, X \rangle = b_i \textup{ for all } i \in [m]. 
\end{align} 
Assume that any feasible solution $X \in \mathbb{S}^{n \times n}_{\geq 0}$ satisfies $\norm{X}_\op \leq R$. 
Then for any error parameter $0 < \delta \leq 0.01$, 
there is an interior point method that outputs in time $O^*(\sqrt{n}(mn^2 + m^\omega + n^\omega) \log(n / \delta))$
a positive semidefinite matrix $X \in \mathbb{R}^{n \times n}_{\geq 0}$ such that 
\begin{align*}
\langle C, X \rangle \geq \langle C, X^* \rangle - \delta \cdot \norm{C}_\op \cdot R \quad \text{and} \quad \sum_{i \in [m]}\left| \langle A_i, \widehat{X} \rangle - b_i \right| \leq 4n\delta \cdot (R  \sum_{i \in [m]} \norm{A_i}_1 + \norm{b}_1) ,
\end{align*}
where $\omega$ is the exponent of matrix multiplication, $X^*$ is any optimal solution to the semidefinite program in \eqref{eq:SDP_main_formal}, and $\norm{A_i}_1$ is the Schatten $1$-norm of matrix $A_i$. 
\end{theorem}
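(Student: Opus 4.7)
The plan is to instantiate a log-barrier interior point method on the dual SDP~\eqref{eq:sdpdual}, but to feed the Newton step an \emph{approximate} Hessian $\wt{H}$ computed from an \emph{approximate} slack matrix $\wt{S}$ that admits low-rank updates, and then to amortize the Hessian-maintenance cost across the $T=\wt O(\sqrt{n})$ iterations using Theorem~\ref{thm:rankineq_intro}. Concretely, I would first invoke a standard initialization (e.g., a big-$M$ / two-phase construction) to produce a pair $(y_0,S_0,\wt S_0)$ on the approximate central path, so that the IPM analysis can proceed from a well-centered starting point whose distance to the optimum at accuracy $\delta$ contributes only an $O(\log(n/\delta))$ factor via the standard $\eta_t = (1+c/\sqrt{n})^t \eta_0$ schedule.

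Next I would prove correctness of the perturbed IPM. The invariant to maintain is $\|Z_t\|_{\op}=\|S_t^{-1/2}\wt{S}_t S_t^{-1/2}-I\|_{\op}\leq c$ for a small constant $c$; this, combined with Loewner monotonicity of inversion, implies that $\wt{H}$ (as in~\eqref{eq:approxHessiantrue}) spectrally approximates the true Hessian $H$ (as in~\eqref{eq:Hessiantrue}) up to a constant factor, so replacing $H$ by $\wt H$ in the Newton step preserves the standard potential/centrality argument for the log barrier and therefore keeps the iterate count at $\wt O(\sqrt{n}\log(1/\delta))$. The mechanism for maintaining the invariant is the eigenvalue-truncation update sketched in the overview: at each step, diagonalize $Z_t$, zero out the $r_t$ eigenvalues of largest absolute value, and define $\wt S_{t+1}$ by the corresponding rank-$r_t$ correction; correctness of the invariant comes from~\eqref{eq:S_change_slowly} (which follows from the standard centrality bound $\|S_t^{-1/2}S_{t+1}S_t^{-1/2}-I\|_F=O(1)$) together with Hoffman--Wielandt (Fact~\ref{fac:hoffmanwielandt}) applied to $S_t^{-1/2}S_{t+1}S_t^{-1/2}$.

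For the runtime, each iteration pays three $mn^2$ costs that we cannot beat: reading $\{A_i\}$ to form $S_{t+1}=\sum_i y_i A_i-C$, computing the gradient entries $\tr[S^{-1}A_j]$, and accumulating the rank-$r_t$ update to $\wt H$ through the products $\tr[\wt{S}^{-1}A_j VV^\top A_k]$. Inverting $S_{t+1}$ costs $n^\omega$, and the initial one-shot Hessian formation costs $\Tmat(n,mn,n)+\Tmat(m,n^2,m)$, which by Lemma~\ref{lem:mn2_trueomega} is absorbed into the claimed bound. The crux is the amortized Hessian update: expressing $\Delta\wt{S}^{-1}$ via Woodbury (Fact~\ref{fac:woodbury}) as $V_+V_+^\top-V_-V_-^\top$ with $V_\pm\in\mathbb{R}^{n\times r_t}$, the update cost collapses to $\Tmat(r_t,n,mn)+\Tmat(n,mr_t,n)+\Tmat(m,nr_t,m)$ as in~\eqref{eq:faster_runtime}; summing over $t$ and applying Theorem~\ref{thm:rankineq_intro} ($\sum_t\sqrt{r_t}=\wt O(T)$) together with the convexity and sub-linearity of $\omega(\cdot)$ from Lemmas~\ref{lem:omegaconvex}, \ref{lem:sublinearity}, \ref{lem:mn2_trueomega}, and~\ref{lem:mn2_cheaper} bounds the total by $\wt O(\sqrt{n}(mn^2+m^\omega+n^\omega)\log(n/\delta))$. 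The main obstacle here is choosing $r_t$ on the fly so that the invariant $\|Z_t\|_{\op}\leq c$ is restored while guaranteeing the potential drop $\wt\Omega(\sqrt{r_t})$ required by Theorem~\ref{thm:rankineq_intro}; this is what the potential $\Phi(Z)=\sum_\ell|\lambda(Z)|_{[\ell]}/\sqrt{\ell}$ is designed to handle, and I would select $r_t$ adaptively as the smallest prefix whose truncation restores the invariant.

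Finally, to produce the primal guarantee claimed in the statement, I would apply the standard IPM rounding: from the final approximately centered dual pair $(y,S)$ at barrier parameter $\eta=\Theta(n/\delta)$, form the matrix $\widehat X \propto S^{-1}/\eta$ (the dual-of-dual candidate), project onto the affine constraints $\langle A_i,X\rangle=b_i$ via a linear system of size $m\times m$, and use the duality gap bound $\langle C,X\rangle\geq\langle C,X^*\rangle-O(\delta\|C\|_{\op}R)$ coming from the central-path approximation together with Schatten/operator norm conversions to obtain the two claimed error bounds. This rounding is single-shot and costs $O(m^\omega+n^\omega+mn^2)$, well within the per-iteration budget, so it does not affect the runtime.
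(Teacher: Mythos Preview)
Your outline mirrors the paper's architecture almost exactly: log-barrier IPM on the dual, an approximate slack $\wt S$ driving an approximate Hessian $\wt H$ via Woodbury, amortization through the potential $\Phi(Z)=\sum_\ell|\lambda(Z)|_{[\ell]}/\sqrt{\ell}$, and the rectangular-matrix-multiplication bookkeeping of Section~\ref{subsec:matproperty}. Two points, however, separate your sketch from a proof.

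The concrete gap is your rank-selection rule. You propose to ``select $r_t$ adaptively as the smallest prefix whose truncation restores the invariant,'' but this does \emph{not} guarantee the drop $\Phi(Z_{\mathrm{mid}})-\Phi(Z_{\new})\gtrsim\sqrt{r_t}/\log n$ that Theorem~\ref{thm:rankineq} needs. If the absolute eigenvalues of $Z_{\mathrm{mid}}$ straddle the threshold $\epsilon_S$ tightly---say $|\lambda|_{[i]}\approx\epsilon_S+\epsilon$ for $i\le r_t$ and $|\lambda|_{[i]}\approx\epsilon_S-\epsilon$ for $i>r_t$---then zeroing the top $r_t$ restores the invariant but gains almost nothing in $\Phi$: the positive contribution $\sum_{i\le r_t}|\lambda|_{[i]}/\sqrt{i}\approx 2\epsilon_S\sqrt{r_t}$ is nearly cancelled by the index-shift loss $\sum_{i>r_t}|\lambda|_{[i]}(1/\sqrt{i-r_t}-1/\sqrt{i})\approx 2\epsilon_S\sqrt{r_t}$. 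The paper's Algorithm~\ref{alg:approx_slack_update} sidesteps this with a doubling/geometric-gap rule: it increases $r$ until \emph{both} $|\lambda|_{[2r]}\le\epsilon_S$ and $|\lambda|_{[2r]}\le(1-1/\log n)\,|\lambda|_{[r]}$, and then zeroes out $2r$ eigenvalues. The enforced multiplicative gap between $|\lambda|_{[r]}$ and $|\lambda|_{[2r]}$ is precisely what Lemma~\ref{lem:tildeS_change_pot} exploits to certify the $\sqrt{r}/\log n$ potential decrease. You correctly flag this as ``the main obstacle,'' but your proposed resolution is the one that fails.

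On primal extraction, the paper does not form $S^{-1}/\eta$ and project. Instead, Lemma~\ref{lem:init} embeds the SDP into an $(n{+}2)\times(n{+}2)$ problem with two auxiliary diagonal entries so that $\overline X=I_{n+2}$ is feasible and well-centered; the output $\widehat X$ is then $R$ times the top-left $n\times n$ block of the final primal iterate for the modified problem. The specific constants in the feasibility bound $4n\delta\bigl(R\sum_i\|A_i\|_1+\|b\|_1\bigr)$ arise from bounding the auxiliary variable $\theta$ in that construction, not from a projection argument. Your projection step is both unnecessary (the theorem does not claim exact feasibility) and potentially harmful, since affine projection can destroy positive semidefiniteness.
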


The proof of Theorem~\ref{thm:main_formal} is given in the subsequent sections.


\section{Approximate Central Path via Approximate Hessian}\label{sec-centralpath}

\subsection{Main result for approximate central path} 
Our main result of this section is the following.


\begin{theorem}[Approximate central path]\label{thm:approx-central-path}
Consider a semidefinite program as in Definition~\ref{defn:sdpprimal} with no redundant constraints. Assume that any feasible solution $X \in \mathbb{S}^{n \times n}_{\geq 0}$ satisfies $\norm{X}_\op \leq R$. Then for any error parameter $0 < \delta \leq 0.01$ and Newton step size $\epsilon_N$ satisfying  $\sqrt{\delta} < \epsilon_N \leq 0.1$, Algorithm~\ref{alg:ss_ipm} outputs,  in $T = \frac{40}{\epsilon_N} \sqrt{n} \log(n/\delta)$ iterations,  a positive semidefinite matrix $X \in \mathbb{R}^{n \times n}_{\geq 0}$ that satisfies 
\begin{align}\label{eq:approx_optimality}
\langle C, X \rangle \geq \langle C, X^* \rangle - \delta \cdot \norm{C}_\op \cdot R \quad \text{and} \quad \sum_{i \in [m]}\left| \langle A_i, \widehat{X} \rangle - b_i \right| \leq 4n\delta \cdot (R  \sum_{i \in [m]} \norm{A_i}_1 + \norm{b}_1) ,
\end{align}
where $X^*$ is any optimal solution to the semidefinite program in Definition~\ref{defn:sdpprimal}, and $\norm{A_i}_1$ is the Schatten $1$-norm of matrix $A_i$. Further, in each iteration of Algorithm~\ref{alg:ss_ipm}, the following invariant holds for $\alpha_H = 1.03$: 
\begin{align}\label{eq:promise_sdp}
\| S^{-1/2} S_{\new} S^{-1/2} - I \|_F \leq \alpha_H \cdot \epsilon_N. 
\end{align}
\end{theorem}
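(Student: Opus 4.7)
\medskip

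\paragraph{Proof proposal.} The plan is to carry out a short-step interior point analysis on the dual SDP in \eqref{eq:sdpdual} using the log barrier, but with the true Hessian $H = \nabla^2 g(y)$ replaced by the approximate Hessian $\wt{H}$ defined in \eqref{eq:approxHessiantrue}. I would define a centrality measure capturing how close the current iterate $(y,\eta)$ is to the central path, along the lines of the Newton decrement
\begin{align*}
\Phi(y,\eta) \;:=\; \bigl\| \eta \cdot b + \nabla g(y) \bigr\|_{\wt{H}(y)^{-1}} ,
\end{align*}
and maintain the invariant $\Phi(y,\eta) \le \epsilon_N$ at the start of every iteration. Each iteration first increases $\eta \to (1+\alpha)\eta$ for a step size $\alpha = \Theta(1/\sqrt{n})$, and then takes a Newton step $\Delta y = -\wt{H}^{-1}(\eta \cdot b + \nabla g(y))$ and updates $\wt S$ via the low-rank procedure described in Section~\ref{subsubsec:our_tech}.

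The main calculation is to show that this two-step update preserves $\Phi \le \epsilon_N$. I would split the change into two parts. For the increase in $\eta$, the triangle inequality under $\|\cdot\|_{\wt H^{-1}}$ adds at most $\alpha \cdot \|b\|_{\wt H^{-1}}$, which by a standard barrier argument for the $\nu$-self-concordant log barrier (with $\nu = n$) is $O(\alpha\sqrt{n})$; choosing $\alpha = \epsilon_N/(c\sqrt{n})$ keeps this below a small constant fraction of $\epsilon_N$. For the Newton step, I would use self-concordance of $g$ together with the spectral approximation $\tfrac{1}{\alpha_S}H \preceq \wt H \preceq \alpha_S H$ (which follows from the invariant on $Z_t$ and the lemma in the excerpt relating approximate slack to approximate Hessian) to show that substituting $\wt H$ for $H$ in the Newton step contracts $\Phi$ by a factor less than $1$ up to a quadratic error $O(\Phi^2)$, giving the usual $\Phi_{\new}\le \tfrac{1}{2}\Phi + O(\Phi^2)$ bound. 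Combining the two contributions yields $\Phi_{\new} \le \epsilon_N$ provided $\sqrt{\delta}<\epsilon_N$, with $\alpha_H=1.03$ absorbing the constant-factor slack from $\alpha_S$.

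The invariant \eqref{eq:promise_sdp} falls out directly from the Newton step bound: with $\Delta S = \sum_i (\Delta y)_i A_i$, one has $\|S^{-1/2}\Delta S\, S^{-1/2}\|_F^2 = (\Delta y)^\top H (\Delta y)$, and since $\Delta y = -\wt H^{-1}(\eta b + \nabla g)$ with $\Phi\le \epsilon_N$, this quantity is at most $\alpha_S \cdot \Phi^2 \le \alpha_H^2 \epsilon_N^2$. Iterating the centrality guarantee for $T = (40/\epsilon_N)\sqrt{n}\log(n/\delta)$ steps multiplies $\eta$ by roughly $(1+\alpha)^T \ge (n/\delta)^{\Omega(1)}$, so at termination $\eta \ge n/\delta$.

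The last ingredient is extracting the primal approximate optimum $X$. I would use the standard complementary-slackness recipe: set
\begin{align*}
X \;=\; \tfrac{1}{\eta} \, S^{-1} + \text{(correction)},
\end{align*}
projected onto the affine space $\{X : \langle A_i, X\rangle = b_i\}$ via the Hessian $H$. Feasibility in the PSD cone follows from $\Phi\le\epsilon_N$ being small (giving $X\succeq 0$ up to a constant factor that can be absorbed by slight shrinking), the constraint violation is bounded by $\|\eta b + \nabla g\|_{H^{-1}}/\eta \le \epsilon_N/\eta$, which with $\eta\ge n/\delta$ yields the claimed $O(n\delta)$ constraint error after scaling by the input norms. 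The duality gap bound $\langle C, X\rangle \ge \langle C, X^*\rangle - \delta \|C\|_{\op} R$ then comes from $\langle C, X\rangle \ge b^\top y - n/\eta$ together with weak duality and $\eta \ge n/\delta$.

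The main obstacle I anticipate is verifying the short-step contraction when $H$ is replaced by $\wt H$: one must carefully track how the spectral slack $\alpha_S$ between $H$ and $\wt H$ propagates through the self-concordance quadratic-error term, and in particular show that the constant $\alpha_H=1.03$ is tight enough to permit the choice $\epsilon_N \le 0.1$ while still giving the $(40/\epsilon_N)\sqrt{n}\log(n/\delta)$ iteration count. Everything else is a bookkeeping adaptation of the classical Nesterov-Nemirovski short-step IPM analysis to the approximate-Hessian setting.
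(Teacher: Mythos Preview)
Your core IPM analysis matches the paper's: both maintain a Newton-decrement invariant (the paper uses $g_\eta(y)^\top H(y)^{-1} g_\eta(y) \le \epsilon_N^2$ with the true Hessian rather than your $\wt H^{-1}$-norm, but these differ only by the factor $\alpha_H$), and your derivation of \eqref{eq:promise_sdp} via $\|S^{-1/2}\Delta S\,S^{-1/2}\|_F^2 = (\delta_y)^\top H \delta_y = g_\eta^\top \wt H^{-1} H \wt H^{-1} g_\eta$ is exactly the paper's computation. The paper in fact treats the invariance of the Newton step under an approximate Hessian as a black-box citation (Lemma~\ref{lem:invariant_newton}, from Renegar), so your plan to work through the self-concordance contraction explicitly is more detailed but not substantively different.

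Where you diverge is initialization and primal extraction. You propose to recover $X \approx \tfrac{1}{\eta}S^{-1}$ plus an affine projection and bound constraint violation from the gradient residual. The paper instead uses a reformulation trick (Lemma~\ref{lem:init}): it embeds the original SDP in dimension $(n+2)\times(n+2)$ by adjoining two extra rows/columns, with a trace constraint and an objective penalty $-\theta$ on the $(n+2,n+2)$ entry. This embedding does double duty: it makes $\overline X = I_{n+2}$, $\overline y = e_{m+1}$ a trivially centered starting pair at $\eta = 1/(n+2)$ (which you do not address), and at termination the top-left $n\times n$ block of the modified primal, scaled by $R$, is the output $\widehat X$, with the auxiliary variable $\theta$ directly bounding $\sum_i|\langle A_i,\widehat X\rangle - b_i|$ in terms of $\|A_i\|_1$ and $|b_i|$---no projection onto the affine space is needed, and PSD-ness is automatic. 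Your route is the textbook one and can be made to work, but converting the $H^{-1}$-norm bound on $g_\eta$ into the specific $\ell_1$-style constraint-violation bound stated in \eqref{eq:approx_optimality} takes nontrivial bookkeeping, and the projection step risks breaking $\widehat X \succeq 0$; the paper's embedding sidesteps both issues and makes the constants in \eqref{eq:approx_optimality} fall out of the structure of $\overline A_i$.
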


\begin{proof}
At the start of Algorithm~\ref{alg:ss_ipm}, Lemma~\ref{lem:init} is called to modify the semidefinite program to obtain an initial dual solution $y$ for the modified SDP that is close to the dual central path at $\eta = 1/(n+2)$. This ensures that the invariant $g_\eta(y)^\top H(y)^{-1} g_\eta(y) \leq \epsilon_N^2$ holds at the start of the algorithm. Therefore, by Lemma~\ref{lem:hessapprox} and Lemma~\ref{lem:invariant_newton}, this invariant continues to hold throughout the run of the algorithm. Therefore, after $T = \frac{40}{\epsilon_N} \sqrt{n} \log\left( \frac{n}{\delta} \right)$ iterations, the step size $\eta$ in Algorithm~\ref{alg:ss_ipm} grows to $\eta = (1+\frac{\epsilon_N}{20\sqrt{n}})^T / (n+2) \geq 2 n / \delta^2$. It then follows from Lemma~\ref{lem:approximate_optimality} that 
\begin{align*}
b^\top y \leq b^\top y^* + \frac{n}{\eta} \cdot (1 + 2 \epsilon_N) \leq b^\top y^* + \delta^2 .
\end{align*}
Thus when the algorithm stops, the dual solution $y$ has duality gap at most $\delta^2$ for the modified SDP. 
Lemma~\ref{lem:init} then shows how to obtain an approximate solution to the original SDP that satisfies the guarantees in \eqref{eq:approx_optimality}.

To prove~\eqref{eq:promise_sdp}, define $\Delta_S =  S_{\new} - S \in \R^{n \times n}$ and $\delta_y = y_{\new} - y \in \R^m$. For each $i\in [n]$, we use $\delta_{y,i}$ to denote the $i$-th coordinate of vector $\delta_y$.  
We rewrite $\| S^{-1/2} S_{\new} S^{-1/2} -I \|_F^2 $ as
\begin{align*}
\| S^{-1/2} S_{\new} S^{-1/2} -I \|_F^2 
= & ~ \tr \Big[ (S^{-1/2} (\Delta_S) S^{-1/2})^2 \Big]\\
= & ~ \tr \left[ S^{-1} \left(\sum_{i=1}^m \delta_{y,i} A_i \right) S^{-1} \left(\sum_{j=1}^m \delta_{y,j} A_j \right) \right]\\
= & ~ \sum_{i, j=1}^m \delta_{y,i} \delta_{y,j} \tr[ S^{-1} A_i S^{-1} A_j ] \\
= & ~ (\delta_y)^\top H(y) \delta_y \\
= & ~ g_{\eta}(y)^\top \wt{H}(y)^{-1} H(y) \wt{H}(y)^{-1} g_{\eta}(y), \numberthis\label{s-change-1}
\end{align*} 
where we used the fact that $\Delta_S = \sum_{i=1}^m (\delta_y)_i A_i$. 
It then follows from Lemma~\ref{lem:hessapprox} and the invariant $g_\eta(y)^\top H(y)^{-1} g_\eta(y) \leq \epsilon_N^2$ that
\[
g_{\eta}(y)^\top \wt{H}(y)^{-1} H(y) \wt{H}(y)^{-1} g_{\eta}(y) \leq \alpha_H^2 \cdot \epsilon_N^2 , \numberthis\label{h-change-1}
\]
where $\alpha_H = 1.03$. Combining Equation~\eqref{s-change-1} with Inequality~\eqref{h-change-1} completes the proof of the theorem.
\end{proof}

\begin{table}[htp!]\caption{Summary of parameters in approxiate central path.}
\centering
  \begin{tabular}{ | l | l | l | l | }
    \hline
    {\bf Notation} & {\bf Choice} & {\bf Appearance} & {\bf Meaning} \\ \hline
    $\alpha_H$ & 1.03 & Lemma~\ref{lem:hessapprox} & Spectral approximation factor $\alpha_H^{-1} \cdot H \preceq \wt{H} \preceq \alpha_H \cdot H$ \\ \hline
    $\epsilon_N$ & 0.1 & Lemma~\ref{lem:invariant_newton} &  Upper bound on the Newton step size $(g_\eta^\top H^{-1} g_\eta)^{1/2}$ \\ \hline
    $\epsilon_S$ & 0.01 & Algorithm~\ref{alg:approx_slack_update} & Spectral approximation error $(1-\epsilon_S) \cdot S \preceq \wt{S} \preceq (1 + \epsilon_S) \cdot S$ \\ \hline
  \end{tabular} \label{Tab:params}
\end{table}

\begin{algorithm}[!t]
\caption{ }\label{alg:ss_ipm}
\begin{algorithmic}[1]
\Procedure{\textsc{Main}}{$n,m,\delta, \epsilon_N, C,A,b$}
\Comment{$C \in \mathbb{S}^{n \times n}$, $\{A_i\}_{i=1}^m \in \mathbb{S}^{n \times n}$, vector $b \in \R^m$, error parameter $0 < \delta < 0.1$, Newton step size parameter $0< \epsilon_N < 0.1$}
\State Modify the SDP and obtain an initial dual solution $y$ according to Lemma~\ref{lem:init}
\State $\eta \leftarrow 1/(n+2)$
\State $T \leftarrow \frac{40}{\epsilon_N} \sqrt{n}\log \left( \frac{n}{\delta}\right)$
\State  $\wt{S} \leftarrow S \leftarrow \sum_{i \in [m]} y_i A_i - C$.
\For {$\text{iter} = 1 \to T$}
	\State $\eta_{\new} \leftarrow \eta \left( 1 + \frac{\epsilon_N}{20 \sqrt{n}}\right)$
	\For {$j = 1, \cdots, m$} \Comment{Gradient computation} 
		\State $g_{\eta_{\new}}(y)_j \leftarrow  \eta_{\new} \cdot b_j -   \tr [ S^{-1} \cdot A_j ] $ 
	\EndFor
	\For {$j = 1, \cdots, m$} \Comment{Hessian computation} \label{lin:start_compute_Hessian}
		\For {$k = 1, \cdots, m$} 
			\State $\wt{H}_{j,k}(y) \leftarrow \tr [ \wt{S}^{-1} \cdot A_j \cdot \wt{S}^{-1}\cdot A_k ]$ 
		\EndFor 
	\EndFor \label{lin:end_compute_Hessian}
	\State $\delta_y \leftarrow - \wt{H}(y)^{-1} g_{\eta_{\new}}(y)$ \Comment{Update on $y$}
	\State $y_{\new} \leftarrow y + \delta_y$ \Comment{Approximate Newton step}
	\State $S_{\new} \leftarrow \sum_{i \in [m]} (y_{\new})_i A_i - C$ 
	\State $\wt{S}_{\new} \leftarrow \textsc{ApproxSlackUpdate}(S_{\new}, \wt{S})$ \label{step:slack-update} \Comment{Approximate slack computation}
	\State $y \leftarrow y_{\new}$,  $S \leftarrow S_{\new}$, $\wt{S} \leftarrow \wt{S}_{\new}$ \Comment{Update variables}
\EndFor
\State Return an approximate solution to the original SDP according to Lemma~\ref{lem:init}
\EndProcedure
\end{algorithmic}
\end{algorithm}

\begin{algorithm}[!t]
\caption{Approximate Slack Update}\label{alg:approx_slack_update}
\begin{algorithmic}[1]
\Procedure{\textsc{ApproxSlackUpdate}}{$S_{\new}, \wt{S}$} 
\Comment{$S_{\new}, \wt{S} \in \mathbb{S}^{n \times n}_{\geq 0}$ are positive definite matrices}
\State $\epsilon_S \leftarrow 0.01$ \Comment{Spectral approximation constant}
\State $Z_{\mathrm{mid}} \leftarrow S_{\new}^{-1/2} \cdot \wt{S} \cdot S_{\new}^{-1/2} - I$ 
\State Compute spectral decomposition $Z_{\mathrm{mid}} = U \cdot \Lambda \cdot U^\top$ 
\State \Comment{$\Lambda = \diag(\lambda_1,\cdots, \lambda_n)$ are the eigenvalues of $Z_{\mathrm{mid}}$, and $U \in \R^{n \times n}$ is orthogonal}
\State Let $\pi: [n] \rightarrow [n]$ be a sorting permutation such that $|\lambda_{\pi(i)}| \geq |\lambda_{\pi(i+1)}|$
\If {$|\lambda_{\pi(1)}| \leq \epsilon_S$}
	\State $\wt{S}_{\new} \leftarrow \wt{S}$ 
\Else
	\State $r \leftarrow 1$
	\While{$|\lambda_{\pi(2r)}| > \epsilon_S$ or $|\lambda_{\pi(2r)}| > (1 - 1/\log n) |\lambda_{\pi(r)}|$} 
		\State $r \leftarrow r + 1$
	\EndWhile

	\State $(\lambda_{\new})_{\pi(i)} \leftarrow 
		\begin{cases}
		0 & \text{~if~} i = 1, 2, \cdots, 2r; \\
		\lambda_{\pi(i)} & \text{~otherwise.}
		\end{cases}$

	\State $\wt{S}_{\new} \leftarrow \wt{S} + S_{\new}^{1/2} \cdot U \cdot \diag (\lambda_{\new} - \lambda) \cdot U^\top \cdot S_{\new}^{1/2}$ \label{lin:low_rank_update}
\EndIf
\State \Return $\wt{S}_{\new}$
\EndProcedure
\end{algorithmic}
\end{algorithm}

\subsection{Approximate slack update}


\begin{lemma}\label{lem:approx_slack_update}
Given positive definite matrices $S_{\new}, \wt{S} \in \mathbb{S}^{n \times n}_{>0}$ and any parameter $0 < \epsilon_S < 0.01$, there is an algorithm (procedure \textsc{ApproxSlackUpdate} in Algorithm~\ref{alg:approx_slack_update}) that takes $O(n^{\omega+o(1)})$ time to output a positive definite matrix $\wt{S}_{\new} \in \mathbb{S}^{n \times n}_{>0}$ such that
\begin{align}
\| S_{\mathrm{new}}^{-1/2} \widetilde{S}_{\mathrm{new}} S_{\mathrm{new}}^{-1/2} - I \|_{\op} \leq \epsilon_S . \label{approx-slack}
\end{align}
\end{lemma}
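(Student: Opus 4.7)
The plan is to analyze the update rule through the conjugated matrix $M := S_{\new}^{-1/2} \wt{S}_{\new} S_{\new}^{-1/2}$, reduce the operator norm bound to a bound on the largest entry of $\lambda_{\new}$, and then read off the bound from the exit condition of the while loop.

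First, I would derive a clean algebraic identity for $M$. Using the definition $Z_{\mathrm{mid}} = S_{\new}^{-1/2} \wt{S} S_{\new}^{-1/2} - I = U \Lambda U^\top$, conjugating the update on line~\ref{lin:low_rank_update} by $S_{\new}^{-1/2}$ on both sides yields
\begin{align*}
S_{\new}^{-1/2} \wt{S}_{\new} S_{\new}^{-1/2}
= S_{\new}^{-1/2} \wt{S} S_{\new}^{-1/2} + U \diag(\lambda_{\new} - \lambda) U^\top
= I + Z_{\mathrm{mid}} + U \diag(\lambda_{\new} - \lambda) U^\top
= I + U \diag(\lambda_{\new}) U^\top.
\end{align*}
Since $U$ is orthogonal, this gives $\|M - I\|_{\op} = \max_{i \in [n]} |(\lambda_{\new})_i|$, reducing the task to bounding the largest absolute entry of the modified spectrum.

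Next, I would analyze the while loop. By construction $(\lambda_{\new})_{\pi(i)} = 0$ for $i \in [2r]$ and equals $\lambda_{\pi(i)}$ otherwise, so $\max_i |(\lambda_{\new})_i| = |\lambda_{\pi(2r+1)}| \leq |\lambda_{\pi(2r)}|$ by the sort order. The loop exit requires \emph{both} $|\lambda_{\pi(2r)}| \leq \epsilon_S$ and $|\lambda_{\pi(2r)}| \leq (1 - 1/\log n)|\lambda_{\pi(r)}|$, so in particular $|\lambda_{\pi(2r)}| \leq \epsilon_S$, which yields \eqref{approx-slack}. Termination is guaranteed by adopting the convention $\lambda_{\pi(j)} = 0$ for $j > n$, which makes both conditions fail once $2r > n$; in fact, the loop runs at most $\lfloor n/2 \rfloor$ times. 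The early-exit branch when $|\lambda_{\pi(1)}| \leq \epsilon_S$ trivially satisfies \eqref{approx-slack} since $\wt{S}_{\new} = \wt{S}$ gives $\|M - I\|_{\op} = \|Z_{\mathrm{mid}}\|_{\op} = |\lambda_{\pi(1)}| \leq \epsilon_S$. Positive definiteness of $\wt{S}_{\new}$ follows because $\|M - I\|_{\op} \leq \epsilon_S < 1$ forces $M \succ 0$, and conjugation by the invertible $S_{\new}^{1/2}$ transfers this to $\wt{S}_{\new}$.

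For the runtime, the dominant operations are: computing $S_{\new}^{-1/2}$ and the conjugation $S_{\new}^{-1/2} \wt{S} S_{\new}^{-1/2}$ in $O(n^\omega)$ time; computing the spectral decomposition of the symmetric matrix $Z_{\mathrm{mid}}$, which can be done in $O(n^{\omega + o(1)})$ time to the required precision; sorting the $n$ eigenvalues in $O(n \log n)$ time; executing the while loop with $O(n)$ constant-time comparisons; and finally forming $\wt{S}_{\new}$ via the product $S_{\new}^{1/2} U \diag(\lambda_{\new} - \lambda) U^\top S_{\new}^{1/2}$, a constant number of $n \times n$ matrix multiplications costing $O(n^\omega)$. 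Summing these terms yields the claimed $O(n^{\omega + o(1)})$ total cost. The main potential obstacle is the precision of the eigendecomposition, but since we only use the spectrum to threshold entries at additive level $\epsilon_S$ and for a low-rank additive correction, the standard $o(1)$ overhead in fast symmetric eigendecomposition suffices.
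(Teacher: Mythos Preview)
Your proposal is correct and follows the same approach as the paper's own proof: both observe that $\lambda_{\new}$ are precisely the eigenvalues of $S_{\new}^{-1/2}\wt{S}_{\new}S_{\new}^{-1/2} - I$ (via the conjugation identity you derive explicitly), and then read off the $\epsilon_S$ bound from the while loop's exit condition, with the spectral decomposition dominating the runtime. Your version is more detailed---you spell out the conjugation algebra, handle the early-exit branch, verify positive definiteness, and itemize the runtime---whereas the paper compresses all of this into two sentences, but the underlying argument is identical.
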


\begin{proof}
The runtime of $O(n^{\omega+o(1)})$ is by the spectral decomposition $Z = U \cdot \Lambda \cdot U^\top$, the costliest step in the algorithm. 
To prove~\eqref{approx-slack}, we notice that $\lambda_{\mathrm{new}}$ are the eigenvalues of $S_{\mathrm{new}}^{-1/2} \widetilde{S}_{\mathrm{new}} S_{\mathrm{new}}^{-1/2} - I$ and by the algorithm description (lines 6 - 13), the upper bound $(\lambda_{\mathrm{new}})_i \leq \epsilon_S$ holds for each $i \in [n]$. 
\end{proof}

\subsection{Closeness of slack implies closeness of Hessian}


\begin{lemma}\label{lem:appslackappH} 
Given symmetric matrices $A_1,\cdots, A_m \in \mathbb{S}^{n \times n}$, and positive definite matrices $\wt{S}, S \in \mathbb{S}^{n \times n}_{>0}$, define matrices $\wt{H} \in \R^{m \times m}$ and $H \in \R^{m \times m}$ as
\begin{align*}
\wt{H}_{j,k} = \tr [ \wt{S}^{-1} A_j \wt{S}^{-1} A_k ] \qquad \text{and} \qquad H_{j,k} = \tr [ S^{-1} A_j S^{-1} A_k ] .
\end{align*}
Then both $\widetilde{H}$ and $H$ are positive semidefinite. For any accuracy parameter $\alpha_S \geq 1$, if 
\begin{align*}
\alpha_S^{-1} \cdot S \preceq \wt{S} \preceq \alpha_S \cdot S,
\end{align*}
then we have that 
\begin{align*}
\alpha_S^{-2} \cdot H \preceq \wt{H} \preceq \alpha_S^2 \cdot H.
\end{align*} 
\end{lemma}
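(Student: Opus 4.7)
\textbf{Proof plan for Lemma~\ref{lem:appslackappH}.} I would prove the two claims separately. For positive semidefiniteness, my plan is to interpret the quadratic form of $H$ (and of $\wt{H}$) via the auxiliary symmetric matrix $M(x) := \sum_{j=1}^m x_j A_j$. A direct trace computation gives
\begin{align*}
x^\top H x \;=\; \sum_{j,k} x_j x_k \tr\!\left[S^{-1} A_j S^{-1} A_k\right] \;=\; \tr\!\left[S^{-1} M(x) S^{-1} M(x)\right] \;=\; \tr\!\left[\bigl(S^{-1/2} M(x) S^{-1/2}\bigr)^2\right] \;\geq\; 0,
\end{align*}
and since $S^{-1/2} M(x) S^{-1/2}$ is symmetric, the square has nonnegative trace. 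The identical manipulation with $\wt{S}$ yields $x^\top \wt{H} x \geq 0$, establishing PSD for both matrices.

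For the spectral sandwich bound, the plan is to reduce the inequality $\alpha_S^{-2} H \preceq \wt{H} \preceq \alpha_S^2 H$ to showing
\begin{align*}
\alpha_S^{-2}\, \tr[S^{-1} M S^{-1} M] \;\leq\; \tr[\wt{S}^{-1} M \wt{S}^{-1} M] \;\leq\; \alpha_S^2\, \tr[S^{-1} M S^{-1} M]
\end{align*}
for every symmetric $M$ (then apply with $M = M(x)$). To handle the non-commutativity, I would perform the change of variables $N := S^{-1/2} M S^{-1/2}$ and $W := S^{1/2} \wt{S}^{-1} S^{1/2}$. Then $\tr[S^{-1} M S^{-1} M] = \tr[N^2]$ and, after a cyclic-trace manipulation, $\tr[\wt{S}^{-1} M \wt{S}^{-1} M] = \tr[W N W N]$. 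The hypothesis $\alpha_S^{-1} S \preceq \wt{S} \preceq \alpha_S S$ passes to inverses as $\alpha_S^{-1} S^{-1} \preceq \wt{S}^{-1} \preceq \alpha_S S^{-1}$, and conjugating by $S^{1/2}$ gives $\alpha_S^{-1} I \preceq W \preceq \alpha_S I$. The task therefore reduces to the clean inequality $\alpha_S^{-2} \tr[N^2] \leq \tr[W N W N] \leq \alpha_S^2 \tr[N^2]$ whenever $\alpha_S^{-1} I \preceq W \preceq \alpha_S I$ and $N$ is symmetric.

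The upper bound follows from the Generalized Lieb--Thirring Inequality (Fact~\ref{fact:gen_Lieb}) applied with $A = W^2 \succeq 0$, $\alpha = 1/2$, and $B = N$:
\begin{align*}
\tr[W N W N] \;=\; \tr\!\left[A^{1/2} N A^{1/2} N\right] \;\leq\; \tr[A N^2] \;=\; \tr[W^2 N^2] \;\leq\; \|W\|_{\op}^2 \tr[N^2] \;\leq\; \alpha_S^2 \tr[N^2],
\end{align*}
where the last step uses $W^2 \preceq \alpha_S^2 I$ together with $N^2 \succeq 0$. The lower bound is obtained by the symmetric argument: setting $\widetilde{N} := W^{1/2} N W^{1/2}$, one has $\tr[W N W N] = \tr[\widetilde{N}^2]$ and $\tr[N^2] = \tr[W^{-1} \widetilde{N} W^{-1} \widetilde{N}]$, and applying the same Lieb--Thirring step to $W^{-1}$ (which also satisfies $W^{-1} \preceq \alpha_S I$) yields $\tr[N^2] \leq \alpha_S^2 \tr[W N W N]$. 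Combining these gives the desired PSD ordering.

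The main obstacle is the non-commutativity of $\wt{S}^{-1}$, $S^{-1}$, and $M$: one cannot naively insert $\alpha_S^{-1} S^{-1} \preceq \wt{S}^{-1} \preceq \alpha_S S^{-1}$ into the trace expression and read off a Loewner-order comparison, because $S^{-1}$ and $M$ do not commute. The right abstraction is to isolate the ratio $W = S^{1/2} \wt{S}^{-1} S^{1/2}$ and invoke Lieb--Thirring, which is precisely the tool that handles this non-commutative spread in a single clean step.
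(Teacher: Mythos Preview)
Your proof is correct, but it takes a heavier route than the paper's. The paper replaces the two occurrences of $S^{-1}$ by $\wt{S}^{-1}$ \emph{one at a time}, each time using the completely elementary observation that
\[
\tr\!\left[S^{-1/2} A(v)\, P\, A(v) S^{-1/2}\right] \;=\; \sum_{i=1}^n u_i^\top P\, u_i, \qquad u_i := A(v) S^{-1/2} e_i ,
\]
so that the Loewner inequality $S^{-1} \preceq \alpha_S \wt{S}^{-1}$ can be applied termwise to the middle factor $P$; a cyclic permutation of the trace then exposes the remaining outer $S^{-1}$, and the same step is repeated. In other words, your worry that ``one cannot naively insert $\alpha_S^{-1} S^{-1} \preceq \wt{S}^{-1} \preceq \alpha_S S^{-1}$ into the trace expression'' is too pessimistic: the paper shows you can, provided you peel off one operator at a time via the sum-of-quadratic-forms identity. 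Your approach instead packages the comparison into the single inequality $\tr[WNWN] \leq \alpha_S^2 \tr[N^2]$ and dispatches it with the Generalized Lieb--Thirring Inequality (Fact~\ref{fact:gen_Lieb}); this is perfectly valid and arguably cleaner conceptually, but it invokes a nontrivial trace inequality where the paper only needs linearity of trace and the definition of $\preceq$. The PSD part of your argument is identical to the paper's.
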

\begin{proof}
For any vector $v \in \R^n$, we define $A(v) = \sum_{i = 1}^m v_i A_i$. We can rewrite $v^\top H v$ as follows.  
\begin{align}\label{specapprox-1}
v^\top H v = \sum_{i=1}^m \sum_{j=1}^m v_i v_j H_{i,j} = \sum_{i=1}^m \sum_{j=1}^m v_i v_j \tr [ S^{-1} A_i S^{-1} A_j ] = \tr [ S^{-1/2} A(v) S^{-1} A(v) S^{-1/2} ] . 
\end{align}
Similarly, we have
\begin{align}\label{specapprox-2} 
v^\top  \wt{H} v = \tr [ \wt{S}^{-1/2} A(v) \wt{S}^{-1} A(v) \wt{S}^{-1/2} ]. 
\end{align}
As the RHS of \eqref{specapprox-1} and \eqref{specapprox-2} are non-negative, both $\widetilde{H}$ and $H$ are positive semidefinite. 
Since $\wt{S} \preceq \alpha_S \cdot S$, we have $S^{-1} \preceq \alpha_S \cdot \wt{S}^{-1}$ (see Section~\ref{ssec:facts}), which 
gives the following inequalities
\begin{align*}
\tr [ S^{-1/2} A(v) S^{-1} A(v) S^{-1/2} ]
& \leq \alpha_S \cdot \tr [ S^{-1/2} A(v) \wt{S}^{-1} A(v) S^{-1/2} ] \\
& \leq \alpha_S^2 \cdot \tr [ \wt{S}^{-1/2} A(v) \wt{S}^{-1} A(v) \wt{S}^{-1/2} ] , \numberthis\label{Happrox-1}
\end{align*}
where the first inequality follows from viewing $\tr [ S^{-1/2} A(v) S^{-1} A(v) S^{-1/2} ]$ as $\sum_{i=1}^n u_i^\top S^{-1} u_i$ for $u_i =  A(v) S^{-1/2} e_i$ and the second inequality follows similarly, after using the cyclic permutation property of trace. Similarly, using $\alpha_S^{-1} \cdot S \preceq \wt{S}$, we have
\begin{align*}
\tr [ S^{-1/2} A(v) S^{-1} A(v) S^{-1/2} ] \geq \alpha_S^{-2} \cdot \tr [ \wt{S}^{-1/2} A(v) \wt{S}^{-1} A(v) \wt{S}^{-1/2} ] .\numberthis\label{Happrox-2}
\end{align*}
Combining \eqref{Happrox-1} and \eqref{Happrox-2} with \eqref{specapprox-1} and \eqref{specapprox-2} along with the fact that $v$ can  be any arbitrary $n$-dimensional vector finishes the proof of the lemma. 
\end{proof}


\subsection{Approximate Hessian maintenance}

\begin{lemma}\label{lem:hessapprox} 
In each iteration of Algorithm~\ref{alg:ss_ipm}, for $\alpha_H = 1.03$, the approximate Hessian $\wt{H}(y)$ satisfies that 
\begin{align*}
 \alpha_H^{-1} H(y) \preceq \wt{H}(y) \preceq \alpha_H \cdot H(y). 
\end{align*}
\end{lemma}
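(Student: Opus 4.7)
The plan is to reduce the claim to the spectral-approximation guarantee between $\wt{S}$ and $S$ supplied by \textsc{ApproxSlackUpdate}, and then to invoke the ``closeness of slack implies closeness of Hessian'' statement in Lemma~\ref{lem:appslackappH}. The key observation is that within a single iteration of Algorithm~\ref{alg:ss_ipm}, the Hessian computation on lines~\ref{lin:start_compute_Hessian}--\ref{lin:end_compute_Hessian} uses the $\wt{S}$ and $S$ that were produced at the end of the previous iteration (or at initialization), so these two matrices already satisfy the invariant maintained by \textsc{ApproxSlackUpdate}.

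First, I would establish by induction on the iteration counter that at the moment the Hessian is computed,
\[
(1 - \epsilon_S) \cdot S \preceq \wt{S} \preceq (1 + \epsilon_S) \cdot S,
\]
with $\epsilon_S = 0.01$. The base case holds trivially because Algorithm~\ref{alg:ss_ipm} initializes $\wt{S} \leftarrow S$. For the inductive step, it suffices to note that at the end of each iteration $\wt{S}_{\new}$ is produced by the call \textsc{ApproxSlackUpdate}$(S_{\new}, \wt{S})$ on line~\ref{step:slack-update}, and Lemma~\ref{lem:approx_slack_update} guarantees $\| S_{\new}^{-1/2} \wt{S}_{\new} S_{\new}^{-1/2} - I \|_{\op} \leq \epsilon_S$; this is equivalent to the displayed Loewner inequality with $S$ replaced by $S_{\new}$ and $\wt{S}$ by $\wt{S}_{\new}$, which is exactly the invariant needed at the start of the next iteration.

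Next, I would apply Lemma~\ref{lem:appslackappH} with the parameter choice $\alpha_S = (1 - \epsilon_S)^{-1}$, so that the invariant above implies $\alpha_S^{-1} \cdot S \preceq \wt{S} \preceq \alpha_S \cdot S$. Lemma~\ref{lem:appslackappH} then yields
\[
\alpha_S^{-2} \cdot H(y) \preceq \wt{H}(y) \preceq \alpha_S^{2} \cdot H(y).
\]
Finally, with $\epsilon_S = 0.01$ we get $\alpha_S^{2} = (1 - 0.01)^{-2} \leq 1.03 = \alpha_H$, and likewise $\alpha_S^{-2} \geq \alpha_H^{-1}$, which is exactly the conclusion of the lemma.

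The only real content here is the inductive bookkeeping that $\wt{S}$ and $S$ remain $\epsilon_S$-close at the start of every iteration; this is essentially a matter of reading off the output guarantee of \textsc{ApproxSlackUpdate} and matching it with the order of updates in Algorithm~\ref{alg:ss_ipm}. There is no obstacle beyond verifying the constant: $(1 - 0.01)^{-2} < 1.03$, which is consistent with the choice of $\alpha_H$ recorded in Table~\ref{Tab:params}.
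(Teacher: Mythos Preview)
Your proposal is correct and follows essentially the same approach as the paper: translate the operator-norm guarantee of Lemma~\ref{lem:approx_slack_update} into a two-sided Loewner bound $\alpha_S^{-1} S \preceq \wt{S} \preceq \alpha_S S$ (the paper takes $\alpha_S = 1.011$, you take $\alpha_S = (1-\epsilon_S)^{-1}$, both valid), and then apply Lemma~\ref{lem:appslackappH} and check the numerics. Your inductive bookkeeping on the iteration counter is a bit more explicit than the paper's one-line version, but the argument is the same.
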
  

\begin{proof}
By Lemma~\ref{lem:approx_slack_update}, given as input two positive definite matrices $S_{\new}$ and $\wt{S}$, Algorithm~\ref{alg:approx_slack_update} outputs a matrix $\wt{S}_{\new}$ such that 
\begin{align*}
\| S_{\new}^{-1/2} \wt{S}_{\new} S_{\new}^{-1/2} - I \|_{\op} \leq \epsilon_S ,
\end{align*}
where $\epsilon_S = 0.01$ as in Algorithm~\ref{alg:approx_slack_update}. 
By definition of operator norm, this implies that in each iteration of Algorithm~\ref{alg:ss_ipm}, we have, for  $\alpha_S = 1.011$, 
\begin{align*}
\alpha_S^{-1} \cdot S \preceq \wt{S} \preceq \alpha_S \cdot S. 
\end{align*}
The statement of this lemma then follows from Lemma~\ref{lem:appslackappH}.
\end{proof}

\subsection{Invariance of Newton step size}
The following lemma is standard in the theory of interior point methods (e.g. see~\cite{r01}).


\begin{lemma}[Invariance of Newton step~\cite{r01}]\label{lem:invariant_newton}
Given any parameters $1 \leq \alpha_H \leq 1.03$ and $0 < \epsilon_N  \leq 1/10$, suppose that $g_\eta(y)^\top H(y)^{-1} g_\eta(y) \leq \epsilon_N^2$ holds for some feasible dual solution $y \in \mathbb{R}^m$ and parameter $\eta > 0$, and positive definite matrix $\wt{H} \in \mathbb{S}^{n \times n}_{>0}$ satisfies 
\begin{align*}
\alpha_H^{-1} H(y) \preceq \wt{H} \preceq \alpha_H  H(y) 
\end{align*} 
Then $\eta_{\new} = \eta (1 + \frac{\epsilon_N}{20 \sqrt{n}} ) $ and $y_{\new} = y - \wt{H}^{-1} g_{\eta_{\new} }(y)$ satisfy 
\begin{align*}
g_{\eta_{\new}} (y_{\new})^\top H(y_{\new})^{-1} g_{\eta_{\new}}(y_{\new}) \leq \epsilon_N^2 .
\end{align*}
\end{lemma}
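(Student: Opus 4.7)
The plan is to execute the standard self-concordant IPM analysis in three stages, keeping track of how the approximate Hessian $\widetilde{H}$ perturbs the usual Newton analysis. Throughout, let $H = H(y)$ denote the true Hessian at the current point and write $\|v\|_{H^{-1}} := (v^\top H^{-1} v)^{1/2}$ for the local dual norm. I abbreviate $g := g_{\eta_{\mathrm{new}}}(y)$ and $\delta := y_{\mathrm{new}} - y = -\widetilde{H}^{-1} g$.

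\textbf{Stage 1: the $\eta$-change.} First I would bound $\|g\|_{H^{-1}}$ after increasing $\eta$ to $\eta_{\mathrm{new}}$ but before moving $y$. Since $g_{\eta_{\mathrm{new}}}(y) = g_\eta(y) + (\eta_{\mathrm{new}} - \eta) b$, the task reduces to bounding $\|b\|_{H^{-1}}$. Because the log barrier $\phi$ is an $n$-self-concordant barrier, $\|\nabla\phi(y)\|_{H^{-1}} \le \sqrt{n}$; using $g_\eta(y) = \eta b + \nabla \phi(y)$ together with the hypothesis $\|g_\eta(y)\|_{H^{-1}} \le \epsilon_N$ and the triangle inequality gives $\eta\|b\|_{H^{-1}} \le \sqrt{n} + \epsilon_N$. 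Hence
\begin{align*}
\|g\|_{H^{-1}} \le \epsilon_N + (\eta_{\mathrm{new}} - \eta)\|b\|_{H^{-1}} \le \epsilon_N + \tfrac{\epsilon_N}{20\sqrt{n}}(\sqrt{n} + \epsilon_N) \le \tfrac{20}{19}\epsilon_N.
\end{align*}

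\textbf{Stage 2: size of the (approximate) Newton step.} The assumption $\alpha_H^{-1} H \preceq \widetilde{H} \preceq \alpha_H H$ implies $\widetilde{H}^{-1} H \widetilde{H}^{-1} \preceq \alpha_H^2 H^{-1}$, so
\begin{align*}
\|\delta\|_H^2 = g^\top \widetilde{H}^{-1} H \widetilde{H}^{-1} g \le \alpha_H^2 \|g\|_{H^{-1}}^2 \le (1.03 \cdot \tfrac{20}{19})^2 \epsilon_N^2,
\end{align*}
so $\|\delta\|_H \le 1.09\,\epsilon_N \le 0.11$. This ``short step'' fact is what lets self-concordance kick in on the segment $[y, y_{\mathrm{new}}]$.

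\textbf{Stage 3: quadratic convergence with the perturbed Hessian.} I would then use the fundamental theorem of calculus to write $g_{\eta_{\mathrm{new}}}(y_{\mathrm{new}}) = g + M\delta = (I - M\widetilde{H}^{-1})g$ where $M := \int_0^1 H(y + t\delta)\,dt$. By the self-concordance of the log-det barrier (a 1-self-concordant barrier of parameter $n$), for every $t \in [0,1]$,
\begin{align*}
(1 - \|\delta\|_H)^2 H \preceq H(y + t\delta) \preceq (1 - \|\delta\|_H)^{-2} H,
\end{align*}
so the averaged Hessian satisfies $M = H + E$ with $\|H^{-1/2} E H^{-1/2}\|_{\mathrm{op}} \le c_1 \|\delta\|_H$ for an absolute constant $c_1$. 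Splitting
\begin{align*}
(I - M\widetilde{H}^{-1})g = (I - H \widetilde{H}^{-1})g - E \widetilde{H}^{-1} g
\end{align*}
lets me bound the first term via the spectral-closeness hypothesis $\|I - H\widetilde H^{-1}\|_{H^{-1}\to H^{-1}} \le (\alpha_H^2 - 1)$ and the second term via the self-concordance estimate and Stage 2. Transporting the resulting bound from the $H^{-1}$-norm to the $H(y_{\mathrm{new}})^{-1}$-norm costs at most another factor of $(1 - \|\delta\|_H)^{-1} \le 1/0.89$. Putting it all together gives an estimate of the form
\begin{align*}
\|g_{\eta_{\mathrm{new}}}(y_{\mathrm{new}})\|_{H(y_{\mathrm{new}})^{-1}} \le C_1 (\alpha_H^2 - 1)\,\|g\|_{H^{-1}} + C_2\,\|\delta\|_H^2 \le C_1 \cdot 0.061 \cdot \tfrac{20}{19}\epsilon_N + C_2 \cdot 1.21\,\epsilon_N^2,
\end{align*}
for absolute constants $C_1, C_2$ arising from the self-concordance factor $(1 - \|\delta\|_H)^{-2}$. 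For $\alpha_H = 1.03$ and $\epsilon_N \le 1/10$, this is $\le \epsilon_N$, squaring yields the claim.

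The main obstacle is Stage 3: carrying the constants correctly through the three simultaneous sources of error — the $\eta$-update (an $O(\epsilon_N/\sqrt{n})$ contribution), the Hessian mismatch (an $O((\alpha_H - 1)\epsilon_N)$ contribution), and the usual quadratic-convergence term ($O(\epsilon_N^2)$) — and verifying that with $\alpha_H = 1.03$ and $\epsilon_N \le 0.1$ their sum is still at most $\epsilon_N$. Once the self-concordance estimates on $H(y+t\delta)$ and the Loewner manipulation $\widetilde{H}^{-1} H \widetilde{H}^{-1} \preceq \alpha_H^2 H^{-1}$ are in place, the rest is a careful but routine constant-chasing exercise, and the proof reduces to that in Renegar~\cite{r01} with the perturbed-Hessian modification described above.
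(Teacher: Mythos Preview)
The paper does not actually give a proof of this lemma; it simply states it as ``standard in the theory of interior point methods'' and cites Renegar~\cite{r01}. So there is nothing to compare against on the paper's side beyond the pointer to the literature.

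Your three-stage outline is exactly the standard self-concordance argument one would extract from~\cite{r01}, adapted to an approximate Hessian: (i) control the drift in the Newton decrement caused by the $\eta$-update via the barrier-parameter bound $\|\nabla\phi\|_{H^{-1}}\le\sqrt{n}$, (ii) bound the step length in the $H$-norm using the Loewner sandwich $\widetilde H^{-1}H\widetilde H^{-1}\preceq\alpha_H^2 H^{-1}$, and (iii) combine the integral remainder with self-concordance of $\log\det$ to get a linear-in-$(\alpha_H-1)$ plus quadratic-in-$\epsilon_N$ residual. That is precisely the approach the paper is implicitly invoking.

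One small quibble: in Stage~1 your arithmetic gives $\|g\|_{H^{-1}}\le \epsilon_N + \tfrac{\epsilon_N}{20} + \tfrac{\epsilon_N^2}{20\sqrt n}\le 1.055\,\epsilon_N$, which is marginally larger than the $20/19\approx 1.0526$ you wrote; this propagates harmlessly through Stages~2--3 but is worth tightening if you intend to nail down explicit constants. Otherwise the plan is sound, and your own caveat that Stage~3 is ``careful but routine constant-chasing'' is an accurate description of what remains.
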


\subsection{Approximate optimality}


The following lemma is also standard in interior point method.
\begin{lemma}[Approximate optimality~\cite{r01}]\label{lem:approximate_optimality}
Suppose $0 < \epsilon_N \leq 1/10$, dual feasible solution $y \in \mathbb{R}^m$, and parameter $\eta \geq 1$ satisfy the following bound on Newton step size: 
\begin{align*}
g_\eta(y)^\top H(y)^{-1} g_\eta(y) \leq \epsilon_N^2.
\end{align*}
Let $y^*$ be an optimal solution to the dual formulation \eqref{eq:sdpdual}. Then we have
\begin{align*}
b^\top y \leq b^\top y^* + \frac{n}{\eta} \cdot (1 + 2 \epsilon_N) .  
\end{align*}
\end{lemma}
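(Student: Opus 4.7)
The plan is to prove this via the standard self-concordant barrier machinery applied to the log-det barrier $\phi(y) = -\log\det S(y)$, which is $\nu$-self-concordant with $\nu = n$ on the dual feasible cone. I will split the gap $b^\top y - b^\top y^*$ into (i) the gap between $y$ and the exact central path point $y^\ast_\eta := \arg\min_y f_\eta(y)$, and (ii) the gap between $y^\ast_\eta$ and the dual optimum $y^\ast$. The first is controlled by the Newton decrement hypothesis, and the second is the classical $\nu/\eta$ bound for points on the central path.

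First I would handle the exact central path piece. At $y^\ast_\eta$ we have $\nabla f_\eta(y^\ast_\eta) = 0$, i.e.\ $\eta b = -\nabla\phi(y^\ast_\eta)$. Applying the defining inequality for an $n$-self-concordant barrier, $\nabla\phi(y^\ast_\eta)^\top(y^\ast - y^\ast_\eta) \leq n$, immediately yields
\begin{align*}
\eta (b^\top y^\ast_\eta - b^\top y^\ast) = \nabla\phi(y^\ast_\eta)^\top(y^\ast - y^\ast_\eta) \leq n.
\end{align*}
For our specific barrier the inequality $\nabla\phi(y)^\top(y^\ast - y) \leq n$ can be checked by hand from $\nabla\phi(y)_j = -\tr[S(y)^{-1}A_j]$: the sum telescopes to $n - \tr[S(y)^{-1}S^\ast]$, and the trace term is nonnegative since $S(y)^{-1}, S^\ast \succeq 0$.

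Second I would bound the residual $b^\top(y - y^\ast_\eta)$. Using $\eta b = -\nabla\phi(y^\ast_\eta)$ again and Cauchy--Schwarz in the local norm,
\begin{align*}
\eta \cdot b^\top(y - y^\ast_\eta) \leq \|\nabla\phi(y^\ast_\eta)\|_{H(y^\ast_\eta)^{-1}} \cdot \|y - y^\ast_\eta\|_{H(y^\ast_\eta)}.
\end{align*}
The first factor is at most $\sqrt{n}$ by the standard bound $\|\nabla\phi\|_{H^{-1}}^2 \leq \nu$ for an $n$-self-concordant barrier at its minimizer in each direction. The second factor is where the hypothesis $\|g_\eta(y)\|_{H(y)^{-1}} \leq \epsilon_N \leq 1/10$ enters: since we are well inside the quadratic convergence region of Newton's method for $f_\eta$, standard self-concordance arguments give $\|y - y^\ast_\eta\|_{H(y^\ast_\eta)} \leq 2\epsilon_N$. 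Putting the two pieces together,
\begin{align*}
\eta \cdot b^\top(y - y^\ast) = \eta\cdot b^\top(y - y^\ast_\eta) + \eta \cdot b^\top(y^\ast_\eta - y^\ast) \leq 2\epsilon_N \sqrt{n} + n \leq n(1 + 2\epsilon_N),
\end{align*}
which is exactly the claimed bound after dividing by $\eta$.

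The main obstacle is the third ingredient, namely translating the Newton decrement bound $\|g_\eta(y)\|_{H(y)^{-1}} \leq \epsilon_N$ into the metric distance bound $\|y - y^\ast_\eta\|_{H(y^\ast_\eta)} \leq 2\epsilon_N$. This requires the Dikin-ellipsoid / self-concordance inequalities relating $H(y)$ and $H(y^\ast_\eta)$ (they are spectrally equivalent within the unit Dikin ball), together with the quadratic convergence estimate for damped Newton steps. Since this is a standard calculation and the lemma is quoted from Renegar's textbook, I would cite that reference rather than redo the derivation; the whole proof then reduces to assembling the three bounds above.
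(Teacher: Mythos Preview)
Your outline is correct, and it is essentially the standard self-concordance argument one finds in Renegar's book; note, however, that the paper does not supply its own proof of this lemma at all---it simply cites~\cite{r01} as a standard fact. So there is nothing to compare against beyond the reference you already invoke, and your decomposition (central-path gap $\leq n/\eta$ via the barrier inequality, plus residual $\leq 2\epsilon_N\sqrt{n}/\eta$ via Cauchy--Schwarz and the Newton-decrement-to-distance bound) is exactly the textbook route.
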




\section{Low-rank Update}\label{sec:lrupdate}
Crucial to being able to efficiently approximate the Hessian in each iteration is the condition that the rank of the update be not too large. We formalize this idea in the following theorem, essential to the runtime analysis in Section~\ref{sec-cost}. 


\begin{theorem}[Rank inequality]\label{thm:rankineq} Let $r_0 = n$ and $r_i$ be the rank of the update to the approximate slack matrix $\wt{S}$ when calling Algorithm~\ref{alg:approx_slack_update} in iteration $i$ of Algorithm~\ref{alg:ss_ipm}. Then, over $T$ iterations of Algorithm~\ref{alg:ss_ipm}, the ranks $r_i$ satisfy the inequality
\begin{align*}
\sum_{i=0}^T \sqrt{r_i} \leq O( T \log^{1.5} n) . 
\end{align*} 
\end{theorem}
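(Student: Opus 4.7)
The plan is to follow the potential-function strategy outlined in the introduction. Define the difference matrix $Z_t := S_t^{-1/2} \widetilde{S}_t S_t^{-1/2} - I$ at each iteration $t$, and introduce the non-negative potential
\[
\Phi(Z) := \sum_{\ell=1}^n \frac{|\lambda(Z)|_{[\ell]}}{\sqrt{\ell}},
\]
where $|\lambda(Z)|_{[\ell]}$ denotes the $\ell$-th entry in the list of eigenvalues of $Z$ sorted in decreasing order of absolute value. Since Algorithm~\ref{alg:ss_ipm} initializes $\widetilde{S} = S$, we have $\Phi(Z_0) = 0$, and $\Phi \geq 0$ always. Each iteration splits naturally into an ``$S$-step'' in which $S_t$ moves to $S_{t+1}$ while $\widetilde{S}_t$ is held fixed, and an ``$\widetilde{S}$-step'' corresponding to the call to \textsc{ApproxSlackUpdate} on line~\ref{step:slack-update}. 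I will upper bound the per-step increase of $\Phi$ in an $S$-step and lower bound the per-step decrease in an $\widetilde{S}$-step, and then telescope.

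For the $S$-step, Theorem~\ref{thm:approx-central-path} guarantees $\|S_t^{-1/2} S_{t+1} S_t^{-1/2} - I\|_F = O(\epsilon_N) = O(1)$, and Lemma~\ref{lem:approx_slack_update} keeps $\widetilde{S}_t$ spectrally $O(1)$-close to $S_t$. Conjugating $\widetilde{S}_t = S_t^{1/2}(I + Z_t) S_t^{1/2}$ through the small perturbation sending $S_t$ to $S_{t+1}$ yields a bound $\|Z_t - Z_{t+1}^{\mathrm{pre}}\|_F = O(1)$, where $Z_{t+1}^{\mathrm{pre}} := S_{t+1}^{-1/2} \widetilde{S}_t S_{t+1}^{-1/2} - I$. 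The Hoffman--Wielandt theorem (Fact~\ref{fac:hw_thm}) then bounds the $\ell_2$-distance between the sorted absolute spectra of $Z_t$ and $Z_{t+1}^{\mathrm{pre}}$ by this Frobenius norm. Pairing this $\ell_2$ perturbation against the weight vector $(1/\sqrt{\ell})_{\ell=1}^n$ via Cauchy--Schwarz, and using $\sum_\ell 1/\ell = O(\log n)$, shows that the $S$-step changes $\Phi$ by at most $O(\sqrt{\log n})$.

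For the $\widetilde{S}$-step, I would read off from Algorithm~\ref{alg:approx_slack_update} that when it is nontrivial, it zeroes out the top $2r$ absolute eigenvalues of $Z_{\mathrm{mid}}$, and the while-loop exit condition certifies that \emph{both} $|\lambda_{\pi(2r)}| \leq \epsilon_S$ and $|\lambda_{\pi(2r)}| \leq (1 - 1/\log n)\,|\lambda_{\pi(r)}|$ hold. Since the loop did not terminate one iteration earlier, at least one of the corresponding conditions at index $r-1$ was violated, which lets me extract a pointwise lower bound $|\lambda_{\pi(r)}| = \Omega(\epsilon_S) = \Omega(1)$. The geometric-drop condition then implies $|\lambda_{\pi(\ell)}|$ for $\ell \in [r, 2r]$ are all comparable to $|\lambda_{\pi(r)}|$ up to a $(1 - 1/\log n)$-factor. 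The exact potential drop is $\sum_{\ell=1}^{2r} |\lambda_{\pi(\ell)}|/\sqrt{\ell}$, which I would lower bound by
\[
\sum_{\ell = r}^{2r} \Omega(1/\log n)\,|\lambda_{\pi(r)}|/\sqrt{\ell} \ = \ \Omega(\sqrt{r}/\log n)\cdot|\lambda_{\pi(r)}| \ = \ \Omega(\sqrt{r_t}/\log n),
\]
where $r_t = 2r$ is the rank of the update.

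Telescoping over the $T$ iterations, together with $\Phi \geq 0$ and $\Phi(Z_0) = 0$, yields
\[
\sum_{t=1}^T \Omega\!\left(\sqrt{r_t}/\log n\right) \ \leq \ \sum_{t=1}^T O(\sqrt{\log n}) \ = \ O(T\sqrt{\log n}),
\]
which rearranges to the stated bound $\sum_{t=0}^T \sqrt{r_t} = O(T \log^{1.5} n)$, after absorbing the single $\sqrt{r_0} = \sqrt{n} \leq O(T)$ term. The step I expect to be most delicate is the $\widetilde{S}$-step lower bound: extracting simultaneously a pointwise $\Omega(1)$ lower bound on $|\lambda_{\pi(r)}|$ and a geometric-drop bound from the fact that the while loop did not exit earlier, and combining them so that a rank-$2r$ update produces an $\Omega(\sqrt{r}/\log n)$ drop in $\Phi$, requires careful case analysis on which of the two exit conditions is the binding one; tracking the polylog factors tightly enough to land at the stated $\log^{1.5} n$ exponent (rather than a larger polylog) is the main technical obstacle.
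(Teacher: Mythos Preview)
Your overall architecture matches the paper exactly: same potential $\Phi$, same split into an $S$-step (paper's Lemma~\ref{lem:S_change_pot}) and an $\widetilde{S}$-step (paper's Lemma~\ref{lem:tildeS_change_pot}), same telescoping. The $S$-step sketch is essentially right, though the paper works with the ordered eigenvalues directly (via two applications of Hoffman--Wielandt, once for the additive $WW^\top - I$ piece and once for the multiplicative conjugation) rather than bounding $\|Z_t - Z_{t+1}^{\mathrm{pre}}\|_F$ as a single quantity.

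The $\widetilde{S}$-step, however, has a real gap. Your sentence ``the exact potential drop is $\sum_{\ell=1}^{2r} |\lambda_{\pi(\ell)}|/\sqrt{\ell}$'' is false: when the top $2r$ absolute eigenvalues are zeroed, the remaining eigenvalues are \emph{promoted} in the sorted-by-magnitude ordering, so the new $\ell$-th absolute eigenvalue is the old $(\ell+2r)$-th. A short computation shows your expression is an \emph{upper} bound on the drop, not the drop itself, so lower-bounding it proves nothing. Relatedly, your use of the geometric-drop condition is in the wrong direction: the while-loop \emph{exit} gives $|\lambda_{\pi(2r)}| \le (1-1/\log n)\,|\lambda_{\pi(r)}|$, an upper bound on $|\lambda_{\pi(2r)}|$, so it does not say the eigenvalues on $[r,2r]$ are comparable to $|\lambda_{\pi(r)}|$ from below.

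The paper's fix is clean and worth adopting. Writing $y_{[i]} = |\lambda_{\pi(i)}|$ and $y^{\mathrm{new}}_{[i]} = y_{[i+2r]}$, the true drop is $\sum_{i\ge 1} (y_{[i]} - y^{\mathrm{new}}_{[i]})/\sqrt{i}$, every term nonnegative. Restrict to $i\in[r]$: there $y_{[i]} \ge y_{[r]}$ and $y^{\mathrm{new}}_{[i]} = y_{[i+2r]} \le y_{[2r]} \le (1-1/\log n)\,y_{[r]}$, so each summand is at least $(y_{[r]}/\log n)/\sqrt{i}$, giving the $\Omega(\sqrt{r}/\log n)\cdot y_{[r]}$ drop. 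The lower bound $y_{[r]} = \Omega(\epsilon_S)$ then comes, as you anticipated, from a case analysis on which exit condition was still violated at earlier loop indices; the paper iterates the second (geometric) condition back to the first index where the $\epsilon_S$ threshold was crossed.
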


The rest of this section is devoted to proving Theorem~\ref{thm:rankineq}. To this end, we define the ``error'' matrix $Z \in \R^{n \times n}$ as follows
\begin{align}
Z = {S}^{-1/2} \wt{S} {S}^{-1/2} - I \label{eq:def-Z-error-matrix}
\end{align}
and the potential function  $\Phi : \R^{n \times n} \rightarrow \R$
\begin{align}\label{eq:def_pot_sdp}
\Phi(Z) = \sum_{i=1}^n \frac{\abs{\lambda(Z)}_{[i]}}{\sqrt{i}}, 
\end{align}
where $\abs{\lambda(Z)}_{[i]}$ denotes the $i$'th entry in the list of absolute eigenvalues of $Z$ sorted in descending order.
The following lemma bounds, from above, the change in the potential described by Equation~\eqref{eq:def_pot_sdp},  when $S$ is updated to $S_{\new}$. 


\begin{lemma}[Potential change when $S$ changes]\label{lem:S_change_pot} 
Suppose matrices $S$, $S_{\new}$ and $\wt{S}$ satisfy the inequalities  
\begin{align}\label{eq:assumpt1}
\| S^{-1/2} S_{\new} S^{-1/2} - I \|_F \leq 0.02 \qquad \text{and} \qquad \| S^{-1/2} \wt{S} S^{-1/2} - I \|_{\mathrm{op}} \leq 0.01 .
\end{align} 
Define matrices $Z = S^{-1/2} \wt{S} S^{-1/2} - I$ and 
$Z_{\mathrm{mid}} = (S_{\new})^{-1/2} \wt{S} (S_{\new})^{-1/2} - I$.
Then we have 
\begin{align*}
\Phi(Z_{\mathrm{mid}}) - \Phi(Z) \leq \sqrt{\log n} .
\end{align*}
\end{lemma}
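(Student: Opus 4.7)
The plan is to bound the difference in sorted absolute eigenvalues of $Z$ and $Z_{\mathrm{mid}}$ in $\ell_2$ via Hoffman-Wielandt, then combine with Cauchy-Schwarz against the weights $1/\sqrt{i}$. The obstacle is that $Z_{\mathrm{mid}} - Z = S_{\new}^{-1/2} \wt{S} S_{\new}^{-1/2} - S^{-1/2} \wt{S} S^{-1/2}$ is not easy to bound directly in Frobenius norm because the two conjugations use different matrices. The key maneuver is to replace $Z$ and $Z_{\mathrm{mid}}$ by cospectral symmetric matrices whose difference has a clean sandwich form.

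Concretely, I introduce
\[
\wh{Z} := \wt{S}^{1/2} S^{-1} \wt{S}^{1/2} - I \quad \text{and} \quad \wh{Z}_{\mathrm{mid}} := \wt{S}^{1/2} S_{\new}^{-1} \wt{S}^{1/2} - I.
\]
The standard identity that $BB^\top$ and $B^\top B$ share the same spectrum, applied to $B = S^{-1/2} \wt{S}^{1/2}$ and to $B = S_{\new}^{-1/2} \wt{S}^{1/2}$, shows that $Z$ and $\wh{Z}$ have identical eigenvalues, and similarly for $Z_{\mathrm{mid}}$ and $\wh{Z}_{\mathrm{mid}}$. Now
\[
\wh{Z}_{\mathrm{mid}} - \wh{Z} = -\wt{S}^{1/2} S^{-1} (S_{\new} - S) S_{\new}^{-1} \wt{S}^{1/2}.
\]
Inserting $S^{1/2} S^{-1/2}$ around the factor $S_{\new} - S$ and applying $\|ABC\|_F \leq \|A\|_{\op} \|B\|_F \|C\|_{\op}$ reduces the task to bounding the middle term $\|S^{-1/2}(S_{\new}-S) S^{-1/2}\|_F \leq 0.02$ (a hypothesis) and two operator-norm factors. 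The first satisfies $\|\wt{S}^{1/2} S^{-1/2}\|_{\op}^2 = \|S^{-1/2} \wt{S} S^{-1/2}\|_{\op} \leq 1.01$ directly from the second hypothesis; the other is controlled by first converting the Frobenius bound on $S^{-1/2} S_{\new} S^{-1/2} - I$ to an operator-norm bound, chaining with $\wt{S} \preceq 1.01 S$, and invoking Loewner monotonicity. The net conclusion is $\|\wh{Z}_{\mathrm{mid}} - \wh{Z}\|_F \leq O(0.02)$.

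Since $\wh{Z}$ and $\wh{Z}_{\mathrm{mid}}$ are both symmetric, Fact~\ref{fac:hoffmanwielandt} gives $\sum_i (\lambda_i(\wh{Z}) - \lambda_i(\wh{Z}_{\mathrm{mid}}))^2 \leq \|\wh{Z} - \wh{Z}_{\mathrm{mid}}\|_F^2$ when the eigenvalues are sorted in a common signed order. The entrywise inequality $||x|-|y|| \leq |x-y|$, followed by the rearrangement inequality (the $\ell_2$ distance between two nonnegative sequences is minimized when both are sorted in the same order), transfers this bound to sorted absolute values; invoking cospectrality with $Z, Z_{\mathrm{mid}}$ then yields
\[
\sum_{i=1}^n \bigl(|\lambda(Z_{\mathrm{mid}})|_{[i]} - |\lambda(Z)|_{[i]}\bigr)^2 \leq O(0.02^2).
\]
Cauchy-Schwarz against $(1/\sqrt{i})_{i \in [n]}$, together with $\sum_{i=1}^n 1/i \leq 1 + \log n$, finally yields $\Phi(Z_{\mathrm{mid}}) - \Phi(Z) \leq O(0.02)\sqrt{1+\log n} \leq \sqrt{\log n}$ with substantial slack. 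The most delicate step is bounding the two operator-norm factors when deriving $\|\wh{Z}_{\mathrm{mid}} - \wh{Z}\|_F \leq O(0.02)$, since three positive definite matrices ($S$, $S_{\new}$, $\wt{S}$) and their mutual approximations must be tracked simultaneously; the remaining steps are routine.
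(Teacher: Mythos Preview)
Your proof is correct and takes a genuinely different---and cleaner---route than the paper's.

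The paper works directly with $Z$ and $Z_{\mathrm{mid}}$: it introduces $W = S_{\new}^{-1/2} S^{1/2}$, writes $Z_{\mathrm{mid}} = W Z W^\top + W W^\top - I$, and then applies Hoffman--Wielandt \emph{twice}, once to pass from $Z_{\mathrm{mid}}$ to the intermediate $W Z W^\top$ (difference $W W^\top - I$), and once more to pass from $W Z W^\top$ to $Z$. The second passage requires an SVD of $W$, a similarity reduction to $\Sigma Z' \Sigma$ versus $Z'$, and the generalized Lieb--Thirring inequality (Fact~\ref{fact:gen_Lieb}) to bound $\|\Sigma Z' \Sigma - Z'\|_F$.

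Your key maneuver---conjugating by $\wt{S}^{1/2}$ rather than by $S^{-1/2}$ or $S_{\new}^{-1/2}$---puts the \emph{common} factor on the outside, so that the difference $\wh{Z}_{\mathrm{mid}} - \wh{Z} = \wt{S}^{1/2}(S_{\new}^{-1} - S^{-1})\wt{S}^{1/2}$ has a single sandwich form amenable to a direct $\|A\|_{\op}\|B\|_F\|C\|_{\op}$ bound. This collapses the argument to a single Hoffman--Wielandt application and eliminates the need for Lieb--Thirring entirely. The price is the slightly more delicate bookkeeping you flag, controlling $\|S^{1/2} S_{\new}^{-1} \wt{S}^{1/2}\|_{\op}$ via the chain $\wt{S} \preceq 1.01\,S \preceq (1.01/0.98)\,S_{\new}$; but this is routine Loewner monotonicity, and the resulting constant (roughly $0.02 \cdot 1.01/0.98 \approx 0.021$) leaves ample slack for the final Cauchy--Schwarz step. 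Your passage from signed-sorted to absolute-value-sorted eigenvalues via $\bigl||x|-|y|\bigr| \le |x-y|$ and rearrangement is also correct and slightly more direct than the paper's treatment of that step.
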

\begin{proof} 
Our goal is to prove 
\begin{align} \label{eq:Z_close_to_Zmid}
\sum_{i=1}^n (\lambda(Z)_{[i]} -\lambda(Z_{\mathrm{mid}})_{[i]})^2 \leq 10^{-3}.
\end{align}
We first show that the lemma statement is implied by \eqref{eq:Z_close_to_Zmid}. 
We rearrange the order of the eigenvalues of $Z_{\mathrm{mid}}$ and $Z$ so that $\lambda(Z_{\mathrm{mid}})_i$ and $\lambda(Z)_i$ are the $i$th largest eigenvalues of $Z_{\mathrm{mid}}$ and $Z$, respectively. For each $i \in [n]$, denote $\Delta_i = \lambda(Z_{\mathrm{mid}})_{i} - \lambda(Z)_{i}$.  Then \eqref{eq:Z_close_to_Zmid} is equivalent to $\| \Delta \|^2_2 \leq 10^{-3}$. Let $\tau$ be the descending order of the magnitudes of eigenvalues of $Z_{\mathrm{mid}}$, i.e. $\abs{\lambda(Z_{\mathrm{mid}})_{\tau(1)}} \geq \cdots \geq \abs{\lambda(Z_{\mathrm{mid}})_{\tau(n)}}$. The potential change $\Phi(Z_{\mathrm{mid}}) - \Phi(Z)$ can  be upper bounded as
\begin{align*}
\Phi(Z_{\mathrm{mid}}) &= \sum_{i=1}^{n} \frac{1}{\sqrt{i}} \abs{\lambda(Z_{\mathrm{mid}})_{\tau(i)}}\\
&\leq \sum_{i=1}^{n} \left( \frac{1}{\sqrt{i}} \abs{\lambda(Z)_{\tau(i)}} + \frac{1}{\sqrt{i}} |\Delta_{\tau(i)}|\right) \\
&\leq \Phi(Z) + \left( \sum_{i=1}^{n} \frac{1}{i} \right)^{1/2} \left( \sum_{i=1}^{n} |\Delta_i|^2 \right)^{1/2} \\
&\leq \Phi(Z) + \sqrt{\log n}, 
\end{align*} 
where the third line follows from 
\begin{align*}
\sum_i \frac{1}{\sqrt{i}} \abs{\lambda(Z)_{\tau(i)}} \leq \sum_i \frac{1}{\sqrt{i}} \abs{\lambda(Z)}_{[i]} 
\end{align*} 
and Cauchy-Schwarz inequality. This proves the lemma.

The remaining part of this proof is therefore devoted to proving \eqref{eq:Z_close_to_Zmid}. Define $W = S_{\new}^{-1/2} S^{1/2}$. Then, we can express $Z_{\mathrm{mid}}$ in terms of $Z$ and $W$ in the following way. 
\begin{align}\label{eq:ynew}
	Z_{\mathrm{mid}} &= (S_{\new})^{-1/2} \wt{S} (S_{\new})^{-1/2} - I \notag \\
	&= (S_{\new})^{-1/2} S^{1/2} S^{-1/2} \wt{S} S^{-1/2} S^{1/2} (S_{\new})^{-1/2} - I \notag \\
	&= W Z W^\top + WW^\top - I. 
\end{align} 
Let $\lambda(M)_{[i]}$ denote the $i$'th (ordered) eigenvalue of a matrix $M$. We then have 
\begin{align}\label{eq:polyloglemma_int1}
 \sum_{i=1}^n (\lambda(Z_{\mathrm{mid}})_{[i]} - \lambda(  W Z W^\top)_{[i]})^2
 &\leq \| Z_{\mathrm{mid}} - W Z W^\top \|_F^2 \notag \\
 &= \| W^\top W - I \|_F^2, 
\end{align} 
where the first inequality is by Fact~\ref{fac:hoffmanwielandt} (which is applicable here because $Z_{\mathrm{mid}}$ and $W Z W^\top$ are both normal matrices) and the second step is by \eqref{eq:ynew}. 
Denote the eigenvalues of $S^{-1/2} S_{\new} S^{-1/2}$ by $\{\nu_i\}_{i=1}^n$. Then the first assumption in \eqref{eq:assumpt1} implies that $\sum_{i\in [n]} (\nu_i - 1)^2 \leq 4 \times 10^{-4}$. 
It follows that 
\begin{align}\label{eq:wtw_fnorm}
\| W^\top W - I \|_F^2 = \| S^{1/2} S_{\new}^{-1} S^{1/2} - I \|_F^2 = \sum_{i\in [n]}  (1/\nu_i - 1)^2 \leq 5\times 10^{-4}, 
\end{align} where the last inequality is because the first assumption from \eqref{eq:assumpt1} implies $\nu_i \geq 0.98$ for all $i\in [n]$. Plugging \eqref{eq:wtw_fnorm} into the right hand side of \eqref{eq:polyloglemma_int1}, we have
\begin{align}\label{eq:ynew_yata}
\sum_{i=1}^n (\lambda(Z_{\mathrm{mid}})_{[i]} - \lambda(W Z W^\top)_{[i]})^2 \leq 5 \times 10^{-4}.
\end{align}
Let $W = U \Sigma V^\top$ be the singular value decomposition of $W$, with $U$ and $V$ being $n \times n$ unitary matrices. Because of the invariance of the Frobenius norm under unitary transformation,  \eqref{eq:wtw_fnorm} is then equivalent to
\begin{align}\label{eq:wtw_sos}
\| \Sigma^2 - I \|_F = \sum_{i=1}^n (\sigma_i^2 - 1)^2 \leq 5 \times 10^{-4}.
\end{align}  
Since $U$ and $V$ are unitary, the matrix $W Z W^\top = U \Sigma V^\top Z V \Sigma U^\top$ is similar to $\Sigma V^\top Z V \Sigma$, and the matrix  $Z^\prime = V^\top Z V$ is similar to $Z$. Therefore, 
\begin{align*}
\sum_{i=1}^n (\lambda(W Z W^\top)_{[i]} - \lambda(Z)_{[i]})^2 &= \sum_{i=1}^n (\lambda(\Sigma Z^\prime \Sigma)_{[i]} - \lambda(Z^\prime)_{[i]})^2\\
&\leq \| \Sigma Z^\prime \Sigma - Z^\prime \|_F^2, \numberthis\label{wzwt-z-1}
\end{align*} 
where the last inequality is by Fact~\ref{fac:hoffmanwielandt}. We rewrite the Frobenius norm as
\begin{align*}
\| \Sigma Z^\prime \Sigma - Z^\prime \|_F
& = \| (\Sigma - I) Z^\prime (\Sigma - I) + (\Sigma - I) Z^\prime + Z^\prime (\Sigma - I) \|_F \\
& \leq \| (\Sigma - I) Z^\prime (\Sigma - I) \|_F + 2 \| (\Sigma - I) Z^\prime \|_F . \numberthis\label{fnorm-diff-1}
\end{align*} 
 The first term can be bounded as: 
\begin{align*}
\| (\Sigma - I) Z^\prime (\Sigma - I) \|_F^2 
&= \tr [ (\Sigma - I ) Z^\prime (\Sigma - I)^2 Z^\prime (\Sigma - I) ] \\
&\leq \tr [ (\Sigma - I)^4 \cdot (Z^\prime)^2 ]\\
&\leq 0.01^2 \cdot \tr [ (\Sigma - I)^4 ]  \\
&= \sum_{i=1}^n (\sigma_i - 1)^4 \\
&\leq 5 \times 10^{-8} , \numberthis\label{fnorm-diff-2} 
\end{align*} 
The first inequality above uses Fact~\ref{fact:gen_Lieb}, the second used the observation that $\| Z' \|_{\op} = \| Z \|_{\op} \leq 0.01$, and the last inequality follows from  \eqref{eq:wtw_sos} and the fact that $\sum_{i=1}^n (\sigma_i - 1)^4 \leq \sum_{i=1}^n (\sigma_i^2 - 1)^2$. Similarly, we can bound the second term as
\begin{align*}
\| (\Sigma - I) Z^\prime \|_F^2 
& = \tr[ (\Sigma - I) (Z^\prime)^2 (\Sigma - I) ]\\
&\leq  \tr [ (\Sigma - I)^2 (Z^\prime)^2 ] \\
&\leq 0.01^2 \cdot \tr [(\Sigma - I)^2 ] ~\leq~ 10^{-7}. \numberthis\label{fnorm-diff-3}
\end{align*} 
It follows from \eqref{wzwt-z-1}, \eqref{fnorm-diff-1} and \eqref{fnorm-diff-3} that 
\begin{align}\label{eq:yata_y}
\sum_{i=1}^n (\lambda(W Z W^\top)_{[i]} - \lambda(Z)_{[i]})^2 \leq 10^{-6}.
\end{align}
Combining \eqref{eq:ynew_yata} and \eqref{eq:yata_y}, we get that $\sum_{i=1}^n (\lambda(Z)_{[i]} -\lambda(Z_{\mathrm{mid}})_{[i]})^2 \leq 10^{-3}$ which establishes \eqref{eq:Z_close_to_Zmid}. This completes the proof of the lemma.
\end{proof}


\begin{lemma}[Potential change when $\wt{S}$ changes]\label{lem:tildeS_change_pot}
Given positive definite matrices $S_{\new}, \wt{S} \in \Pdn$, let $\wt{S}_{\new}$ and $r$ be generated during the run of Algorithm~\ref{alg:approx_slack_update} when the inputs are $S_{\new}$ and $\wt{S}$. 
Define the matrices 
$Z_{\mathrm{mid}} = (S_{\new})^{-1/2} \wt{S} (S_{\new})^{-1/2} - I$ and
$Z_{\new} = (S_{\new})^{-1/2} \wt{S}_{\new} (S_{\new})^{-1/2} - I$.
Then we have 
\begin{align*}
\Phi(Z_{\mathrm{mid}}) - \Phi(Z_{\new}) \geq \frac{10^{-4}}{\log n} \sqrt{r} .
\end{align*}
\end{lemma}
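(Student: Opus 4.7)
The plan is to first unpack $Z_{\new}$ explicitly using Algorithm~\ref{alg:approx_slack_update}. The low-rank update on line~\ref{lin:low_rank_update} gives
\[
S_{\new}^{-1/2}(\wt{S}_{\new} - \wt{S})\,S_{\new}^{-1/2} \;=\; U \diag(\lambda_{\new} - \lambda) U^\top,
\]
so $Z_{\new} = Z_{\mathrm{mid}} + U \diag(\lambda_{\new} - \lambda) U^\top = U \diag(\lambda_{\new}) U^\top$; thus $Z_{\new}$ shares eigenvectors with $Z_{\mathrm{mid}}$, and its eigenvalues are those of $Z_{\mathrm{mid}}$ with the $2r$ largest (in magnitude) replaced by $0$ (the if-branch case is trivial, with both sides $0$). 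Writing $\mu_i := |\lambda(Z_{\mathrm{mid}})|_{[i]}$ for the sorted magnitudes, this yields
\[
\Phi(Z_{\mathrm{mid}}) - \Phi(Z_{\new}) \;=\; \sum_{i=1}^{2r} \frac{\mu_i}{\sqrt{i}} \;-\; \sum_{j=2r+1}^n \mu_j\!\left(\frac{1}{\sqrt{j-2r}} - \frac{1}{\sqrt{j}}\right),
\]
a gain from zeroing the top $2r$ eigenvalues minus a loss from the upward shift in weights on the remaining ones.

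Next I estimate the two pieces using monotonicity of $\mu_i$. Setting $A := \sum_{i=1}^r 1/\sqrt{i} \geq \sqrt{r}$ and $B := \sum_{i=r+1}^{2r} 1/\sqrt{i}$, splitting the gain at $i=r$ gives $\sum_{i=1}^{2r} \mu_i/\sqrt{i} \geq \mu_r A + \mu_{2r} B$. For the loss, I use $\mu_j \leq \mu_{2r+1} \leq \mu_{2r}$ for $j>2r$ together with the telescoping bound
\[
\sum_{j=2r+1}^n\!\left(\tfrac{1}{\sqrt{j-2r}} - \tfrac{1}{\sqrt{j}}\right) \;=\; \sum_{k=1}^{n-2r} \tfrac{1}{\sqrt{k}} - \sum_{k=2r+1}^n \tfrac{1}{\sqrt{k}} \;\leq\; \sum_{k=1}^{2r} \tfrac{1}{\sqrt{k}} \;=\; A+B,
\]
so the loss is at most $\mu_{2r}(A+B)$. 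Combining,
\[
\Phi(Z_{\mathrm{mid}}) - \Phi(Z_{\new}) \;\geq\; (\mu_r - \mu_{2r})\,A \;\geq\; (\mu_r - \mu_{2r})\,\sqrt{r}.
\]
The termination criterion $\mu_{2r} \leq (1 - 1/\log n)\mu_r$ of the while loop in Algorithm~\ref{alg:approx_slack_update} then supplies $\mu_r - \mu_{2r} \geq \mu_r/\log n$, so the decrease is at least $\mu_r\sqrt{r}/\log n$.

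The final and most delicate ingredient is to show $\mu_r = \Omega(\epsilon_S)$, which converts the above into the target bound $10^{-4}\sqrt{r}/\log n$. For $r=1$ this is immediate since we entered the else branch, so $\mu_1 > \epsilon_S$. For $r \geq 2$ I plan to run a doubling-chain argument along the positions $2^0, 2^1, \ldots, 2^K$ with $K := \lfloor \log_2(r-1) \rfloor$, which satisfy $2^K \leq r-1 < 2^{K+1}$ and hence $2^{K+1} \geq r$. At each such $2^k$ the while-loop guard evaluated True at $r' = 2^k$, yielding either $\mu_{2^{k+1}} > \epsilon_S$ (Case A) or $\mu_{2^{k+1}} > (1-1/\log n)\mu_{2^k}$ (Case B). Let $k^*$ be the largest $k$ at which Case A occurred (using $\mu_1 > \epsilon_S$ as seed if no such $k$ exists). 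Chaining Case B inequalities from $2^{k^*+1}$ up to $2^{K+1}$ gives
\[
\mu_r \;\geq\; \mu_{2^{K+1}} \;\geq\; (1 - 1/\log n)^{K - k^*}\,\epsilon_S \;\geq\; (1 - 1/\log n)^{\log_2 n + 1}\,\epsilon_S \;=\; \Omega(\epsilon_S),
\]
since $(1-1/\log n)^{\log n}$ is $\Theta(1)$ as $n \to \infty$. The main obstacle is precisely this last step: one must exploit the two-pronged stopping criterion (absolute threshold $\epsilon_S$ versus multiplicative gap $1-1/\log n$) so that the worst-case exponential decay along a chain of length up to $\log_2 n$ still telescopes to an absolute constant. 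Once that is in hand, the numerical constants close with substantial slack and yield the claimed $10^{-4}$. Minor care is also needed for boundary cases (e.g.\ $2^{K+1} > n$ when $r$ is close to $n$, or $n - 2r < 2r$ in the telescoping bound on the loss), but these affect only the interpretation of terms and not the essential inequality.
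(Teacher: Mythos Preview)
Your proof is correct and follows essentially the same route as the paper: identify the spectrum of $Z_{\new}$ as that of $Z_{\mathrm{mid}}$ with the $2r$ largest magnitudes zeroed, reduce the potential drop to $(\mu_r-\mu_{2r})\sqrt{r}$, invoke the termination condition $\mu_{2r}\le(1-1/\log n)\mu_r$, and then run a doubling chain through the while-loop guards to obtain $\mu_r=\Omega(\epsilon_S)$. One simplification the paper uses that you can adopt: since the sorted new magnitudes satisfy $|\lambda(Z_{\new})|_{[i]}=\mu_{i+2r}\le\mu_i$ for every $i$, the difference $\Phi(Z_{\mathrm{mid}})-\Phi(Z_{\new})=\sum_i(\mu_i-\mu_{i+2r})/\sqrt{i}$ is termwise nonnegative, so one may simply drop all but the first $r$ terms---this bypasses your gain/loss decomposition and the telescoping bound on the loss entirely, and also makes the boundary case $2r=n$ (where the ``loss'' sum is empty and the termination criterion is unavailable) cleaner to handle.
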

\begin{proof}
The setup of the lemma  considers the eigenvalues of $Z$ when $\wt{S}$ changes. 
For the sake of notational convenience, we define $y = |\lambda(Z_{\mathrm{mid}})|$, the vector of absolute values of eigenvalues of $Z_{\mathrm{mid}} = S_{\new}^{-1/2} \wt{S} S_{\new}^{-1/2} - I$. Recall from Table~\ref{Tab:params} that $\epsilon_S = 0.01$. We consider two cases below.

\medskip
\noindent \textbf{Case 1.} There does not exist an $i \leq n/2$ that satisfies the two conditions $y_{[2i]} < \epsilon_S$ and $y_{[2i]} <(1 - 1/10 \log n) y_{[i]}$. 
In this case, we have $r = n/2$. 
We consider two sub-cases.  
\begin{itemize}
\item Case (a). For all $i \in [n]$, we have $y_{[i]} \geq \epsilon_S$. In this case, we change all $n$ coordinates of $y$, and the change in each coordinate contributes to a potential decrease of at least $\epsilon_S/\sqrt{n}$.
Therefore, we have $\Phi(Z_{\mathrm{mid}}) - \Phi(Z_{\mathrm{new}}) \geq \epsilon_S \sqrt{n} \geq \frac{10^{-4}}{\log n} \sqrt{r}$.

\item Case (b). There exists a minimum index $i \leq n/2$ such that $y_{[2j]} < \epsilon_S$ holds for all $j$ in the range $i \leq j \leq n/2$.
In this case, for all $j$ in the above range, we have that $y_{[2j]} \geq (1 - 1/10 \log n) y_{[j]}$. 
In particular, picking $j = i, 2i, \cdots $ gives
\begin{align*}
y_{[n]} \geq y_{[i]} \cdot (1 - 1/(10\log n))^{\lceil \log n \rceil} \geq \epsilon_S/10 .
\end{align*}
Recalling that our notation $y_{[i]}$ denotes the $i$'th absolute eigenvalue in decreasing order, we use the above inequality and repeat the argument from the previous sub-case to conclude that $\Phi(Z_{\mathrm{mid}}) - \Phi(Z_{\mathrm{new}}) \geq \epsilon_S/10 \cdot \sqrt{n} \geq \frac{10^{-4}}{\log n} \cdot \sqrt{r}$. 
\end{itemize}

\noindent \textbf{Case 2.} There exists an index $i$ for which both the conditions $y_{[2i]} < \epsilon_S$ and $y_{[2i]} <(1 - 1/10 \log n) y_{[i]}$ are satisfied. 
By definition, $r \leq n/2$ is the smallest such index. 
Consider the index $j$ such that for all $j' < j$, we have $y_{[j']} \geq \epsilon_S$  and for all $j' \geq j$, we have $y_{[j]} < \epsilon_S$. 
By the same argument as in Case 1(b), we can prove $y_{[r]} \geq \epsilon_S/10$.
Moreover, $y_{[2r]} < (1 - 1/10 \log n) y_{[r]}$ by definition of $r$. 
Denote by $y^{\mathrm{new}}$ the vector of  magnitudes of the eigenvalues of $Z_{\mathrm{new}}$. 
Since $y^{\mathrm{new}}_{[i]}$ is set to $0$ for each $i \in [2r]$, we have $y^{\mathrm{new}}_{[i]} = y_{[i + 2r]} \leq y_{[i]}$. 
Further, $y_{[2r]} < (1 - 1/10 \log n) y_{[r]}$ implies that for each $i \in [r]$, we have
\begin{align*}
y_{[i]} - y^{\mathrm{new}}_{[i]} \geq \frac{1}{10 \log n} \cdot y_{[r]} \geq \frac{10^{-2} \epsilon_S}{\log n} = \frac{10^{-4}}{\log n} ,
\end{align*}
where $\epsilon_S = 0.01$ by Table~\ref{Tab:params}. 
Therefore, we can bound, from below, the decrease in potential function as
\begin{align*}
\Phi(Z_{\mathrm{mid}}) - \Phi(Z_{\mathrm{new}}) \geq \sum_{i=1}^r \frac{y_{[i]} - y^{\mathrm{new}}_{[i]}}{\sqrt{i}} \geq \frac{10^{-4}}{\log n} \sqrt{r} .
\end{align*}
This finishes the proof of the lemma.
\end{proof} 


\begin{proof}[Proof of Theorem~\ref{thm:rankineq}] Recall the definition of the potential function in \eqref{eq:def_pot_sdp} for an error matrix $Z \in \mathbb{S}^{n \times n}$:
\begin{align*}
\Phi(Z) = \sum_{i=1}^n \frac{|\lambda(Z)|_{[i]}}{\sqrt{i}} .
\end{align*}
Let $S^{(i)}$ and $\wt{S}^{(i)}$ be the true and approximate slack matrices in the $i$th iteration of Algorithm~\ref{alg:ss_ipm}. 
Define $Z^{(i)} = (S^{(i)})^{-1/2} \wt{S}^{(i)} (S^{(i)})^{-1/2} - I $ and $Z_{\mathrm{mid}}^{(i)} = (S^{(i+1)})^{-1/2} \wt{S}^{(i)} (S^{(i+1)})^{-1/2} - I$. 
By Lemma~\ref{lem:S_change_pot}, we have that
\begin{align*}
\Phi(Z_{\mathrm{mid}}^{(i)}) - \Phi(Z^{(i)}) \leq \sqrt{\log n} .
\end{align*}
From Lemma~\ref{lem:tildeS_change_pot}, we have the following potential decrease:
\begin{align*}
\Phi(Z_{\mathrm{mid}}^{(i)}) - \Phi(Z^{(i+1)}) \geq \frac{10^{-4}}{\log n} \sqrt{r_i} .
\end{align*}
These together imply that 
\begin{align}\label{eq:potential_drop}
\Phi(Z^{(i+1)}) - \Phi(Z^{(i)}) \leq \sqrt{\log n} - \frac{10^{-4}}{\log n} \sqrt{r_i} . 
\end{align}
We note that $\Phi(Z^{(0)}) = 0$ as we initialized $\wt{S} = S$ in the beginning of the algorithm, and that the potential function $\Phi(Z)$ is always non-negative. The theorem then follows by summing up \eqref{eq:potential_drop} over all $T$ iterations. 
\end{proof} 

\section{Runtime Analysis}\label{sec-cost}
Our main result of this section is the following bound on the runtime of Algorithm~\ref{alg:ss_ipm}.

\begin{restatable}[Runtime bound]{theorem}{runtime}\label{thm:runtime}
The total runtime of Algorithm~\ref{alg:ss_ipm} for solving an SDP with variable size $n\times n$ and $m$ constraints is at most $O^*\left(\sqrt{n} \left(mn^2 + \max(m,n)^\omega \right) \right)$, where $\omega$ is the matrix multiplication exponent as defined in Definition~\ref{def:omegak}.
\end{restatable}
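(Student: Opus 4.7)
The plan is to bound the total runtime by summing per-iteration costs over $T = \widetilde{O}(\sqrt{n})$ iterations, separating the routine per-iteration work (which is independent of the rank update) from the approximate Hessian maintenance (which depends on $r_t$, the rank of the update to $\wt{S}$ in iteration $t$). For the routine work, I will account for: computing $g_{\eta_{\new}}(y)_j = \eta_{\new} b_j - \tr[S^{-1} A_j]$ over all $j \in [m]$ in time $O(mn^2)$ given $S^{-1}$; forming $S_{\new} = \sum_i (y_{\new})_i A_i - C$ in time $O(mn^2)$; running \textsc{ApproxSlackUpdate} in time $O(n^{\omega + o(1)})$ by Lemma~\ref{lem:approx_slack_update}; performing the Newton solve $\wt{H}^{-1} g_{\eta_{\new}}$ in time $O(m^\omega)$; and recomputing $S^{-1}$ from $S_{\new}$ in time $O(n^\omega)$. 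Summed over $T = \widetilde{O}(\sqrt{n})$ iterations, these contribute $O^*(\sqrt{n}(mn^2 + m^\omega + n^\omega))$ as desired.

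The main technical content is bounding the Hessian maintenance cost. When $\wt{S}$ is updated at rank $r_t$, the Woodbury identity (Fact~\ref{fac:woodbury}) gives $\Delta(\wt{S}^{-1}) = V_+ V_+^\top - V_- V_-^\top$ with $V_{\pm} \in \mathbb{R}^{n \times r_t}$. Substituting into $\wt{H}_{j,k} = \tr[\wt{S}^{-1} A_j \wt{S}^{-1} A_k]$ and following the pipeline from the technique overview, the dominant cost of $\Delta \wt{H}$ is, for each $V \in \{V_+, V_-\}$: computing $V^\top A_k$ for all $k$ in time $\Tmat(r_t, n, mn)$; computing $\wt{S}^{-1/2} A_j V$ for all $j$ in time $\Tmat(n, m r_t, n)$ (via Lemma~\ref{lem:OrderOfTmat}); and computing the resulting Gram matrix in time $\Tmat(m, n r_t, m)$. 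I will therefore need to bound
\[
\sum_{t=1}^T \Bigl[ \Tmat(r_t, n, mn) + \Tmat(n, m r_t, n) + \Tmat(m, n r_t, m) \Bigr]
\]
by $O^*(\sqrt{n}(mn^2 + m^\omega + n^\omega))$, using Theorem~\ref{thm:rankineq}'s rank inequality $\sum_t \sqrt{r_t} = \widetilde{O}(\sqrt{n})$.

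The key maneuver for each rectangular-multiplication term is an AM-GM interpolation between two simple bounds. For instance, $\Tmat(m, n r_t, m) \leq \min\bigl(\Tmat(m, n^2, m),\ r_t \cdot \Tmat(m, n, m)\bigr)$: the first comes from padding the middle dimension up to $n^2$, the second from tiling $n r_t$ into $r_t$ blocks of width $n$. Applying $\min(A, B r_t) \leq \sqrt{A B r_t}$ and the tiling bound $\Tmat(m, n^2, m) \leq n \cdot \Tmat(m, n, m)$ yields $\Tmat(m, n r_t, m) \leq \sqrt{r_t/n} \cdot \Tmat(m, n^2, m) \cdot n^{o(1)}$, so after summing with $\sum_t \sqrt{r_t} = \widetilde{O}(\sqrt{n})$ and invoking Lemma~\ref{lem:mn2_trueomega} ($\Tmat(m, n^2, m) \leq O^*(\sqrt{n}(mn^2 + m^\omega))$), the total is $O^*(\sqrt{n}(mn^2 + m^\omega))$. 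The term $\Tmat(n, m r_t, n)$ is handled symmetrically via Lemma~\ref{lem:OrderOfTmat} and the bound $\Tmat(n, mn, n) \leq O(m n^{\omega + o(1)})$ from Lemma~\ref{lem:mn2_trueomega}, and the term $\Tmat(r_t, n, mn) = \Tmat(mn, n, r_t)$ reduces to $m$ multiplications of shape $(r_t, n, n)$, also controlled by the same AM-GM trick.

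The main obstacle will be the interpolation step: the behavior of $\Tmat(\cdot)$ on the rank parameter $r_t$ is neither concave nor linear in general, so I will need a short case analysis splitting on whether $m \geq n$ or $m < n$ (using Lemma~\ref{lem:mn2_less_nmn} to convert between the two regimes), and I will need to verify that the loose bound $\Tmat(m, nr_t, m) \leq r_t \cdot \Tmat(m, n, m)$ does not introduce spurious $m^\omega / n$ factors. A secondary subtlety is that when assembling $\Delta \wt{H}$ from the two rank-$r_t$ pieces of $\Delta(\wt{S}^{-1})$, several cross-terms of the form $\tr[\wt{S}^{-1/2} A_j V V^\top A_k \wt{S}^{-1/2}]$ and $\tr[V^\top A_j V \cdot V^\top A_k V]$ arise; each of these can be shown to be dominated by one of the three terms above, so no new asymptotic cost is incurred.
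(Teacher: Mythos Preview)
Your overall decomposition is correct and matches the paper: routine per-iteration work totals $O^*(\sqrt{n}(mn^2 + m^\omega + n^\omega))$, and the Hessian-maintenance cost is controlled by summing $\Tmat(n, m r_t, n) + \Tmat(m, n r_t, m)$ over all iterations and invoking Theorem~\ref{thm:rankineq}. Where your argument breaks is the AM--GM interpolation.

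You write: from $\min(A, B r_t) \leq \sqrt{A B r_t}$ and the tiling bound $A := \Tmat(m, n^2, m) \leq n \cdot \Tmat(m, n, m) =: nB$, one obtains $\Tmat(m, n r_t, m) \leq \sqrt{r_t/n}\,\Tmat(m, n^2, m)$. But the tiling inequality $A \leq nB$ gives $B \geq A/n$, which is a \emph{lower} bound on $B$; plugging it into $\sqrt{ABr_t}$ yields $\sqrt{ABr_t} \geq A\sqrt{r_t/n}$, not $\leq$. The claimed per-term bound is false in general: at $r_t = 1$ it would say $\Tmat(m,n,m) \leq \Tmat(m,n^2,m)/\sqrt{n}$, which fails for instance whenever the dual exponent phenomenon makes $\Tmat(m,n,m)$ nearly as large as $\Tmat(m,n^2,m)$. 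The honest consequence of your two simple bounds is only $\sum_t \Tmat(m,nr_t,m) \leq \widetilde{O}(\sqrt{n})\cdot\sqrt{AB}$, and bounding $\sqrt{AB}$ by $mn^2 + \max(m,n)^\omega$ is not straightforward (and in some ranges costs you an extra $n^{1/4}$-type factor).

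The paper's argument (Lemma~\ref{lem:mnri_cheaper}) avoids this by working directly with the exponent function. Writing $m = n^a$ and $r_t = n^{b_t}$, one has $\Tmat(m, n r_t, m) = n^{a\,\omega((1+b_t)/a) + o(1)}$. After bucketing the $r_t$'s dyadically and using $\sum_t \sqrt{r_t} = \widetilde{O}(\sqrt{n})$, the sum is bounded by $\widetilde{O}(1)\cdot \max_{b \in [0,1]} n^{g(b)}$ where $g(b) = \tfrac12 - \tfrac{b}{2} + a\,\omega((1+b)/a)$. The key point is that $g$ is \emph{convex} in $b$ (Lemma~\ref{lem:omegaconvex}), so its maximum on $[0,1]$ is attained at an endpoint: $g(0)$ gives $\sqrt{n}\,\Tmat(m,n,m) \leq \sqrt{n}\max(m,n)^\omega$, and $g(1)$ gives $\Tmat(m,n^2,m)$, which is then handled by Lemma~\ref{lem:mn2_trueomega}. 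The missing ingredient in your plan is precisely this convexity argument; the AM--GM trick you propose does not substitute for it.
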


To prove Theorem~\ref{thm:runtime}, we first upper bound the runtime in terms of fast rectangular matrix multiplication times. The iteration complexity of Algorithm~\ref{alg:ss_ipm} is $T = \wt{O}(\sqrt{n})$. 


\begin{lemma}[Total cost]\label{lem:totalcost}
The total runtime of Algorithm~\ref{alg:ss_ipm} over $T$ iterations is upper bounded as
\begin{align}\label{eq:totalcost}
\cost{\mathrm{Total}} \leq O^* \left( \min \left( n \cdot \nnz(A), m n^{2.5} \right ) + \sqrt{n} \max(m,n)^\omega + \sum_{i = 0}^{T} \left(\Tmat(n, m r_i, n) + \Tmat(m, n r_i, m) \right) \right) ,
\end{align} 
where $\nnz(A)$ is the total number of non-zero entries in all the constraint matrices, $r_i$, as defined in Theorem~\ref{thm:rankineq}, is the rank of the update to the approximation slack matrix $\widetilde{S}$ in iteration $i$, and $\omega$ and $\Tmat$ are defined in Definitions~\ref{def:omegak} and ~\ref{def:cmat}, respectively. 
\end{lemma}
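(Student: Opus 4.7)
The plan is to walk through each line of Algorithm~\ref{alg:ss_ipm} and match every contribution to one of the three terms on the right of \eqref{eq:totalcost}: the first term absorbs operations whose cost scales with $\nnz(A)$ (or the dense bound $mn^2$); the second absorbs the fixed-size inversions and the \textsc{ApproxSlackUpdate} subroutine; and the third absorbs the per-iteration Hessian maintenance, whose cost depends on the update rank $r_i$ and is handled via fast rectangular matrix multiplication.

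For the first two terms, once $S^{-1}$ is maintained (one $O(n^\omega)$ inversion per iteration), each gradient entry $\tr[S^{-1} A_j]$ costs $O(\min(\nnz(A_j), n^2))$, so the gradient loop costs $O(\min(\nnz(A), mn^2))$ per iteration; the incremental update $S_{\new} = S + \sum_i (\delta_y)_i A_i$ has the same cost. Summed over $T = \widetilde{O}(\sqrt{n})$ iterations these give $O^*(\sqrt{n}\min(\nnz(A), mn^2))$, which is bounded by $O^*(\min(n\cdot\nnz(A), m n^{2.5}))$. The linear solve $\delta_y = -\widetilde{H}^{-1} g$ costs $O(m^\omega)$, recomputing $S^{-1}$ from scratch costs $O(n^\omega)$, and \textsc{ApproxSlackUpdate} costs $O(n^\omega)$ per iteration by Lemma~\ref{lem:approx_slack_update}; summed over $T$ iterations these contribute $O^*(\sqrt{n}\max(m,n)^\omega)$.

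The main obstacle is the third term. Recomputing $\widetilde{H}$ from scratch each iteration would be prohibitive, so I exploit the rank-$r_i$ update on $\widetilde{S}$ guaranteed by \textsc{ApproxSlackUpdate} and invoke the Woodbury identity (Fact~\ref{fac:woodbury}) to translate it into a rank-$r_i$ update $\Delta\widetilde{S}^{-1}$, which via spectral decomposition I write as $V_+ V_+^\top - V_- V_-^\top$ with $V_\pm \in \mathbb{R}^{n \times r_i}$. Expanding $\widetilde{H}_{\new,jk} - \widetilde{H}_{jk}$ in $\Delta\widetilde{S}^{-1}$ produces cross terms of the representative form $\tr[\widetilde{S}^{-1} A_j V V^\top A_k]$, which by cyclic reorganization equals $\tr[W_j^\top W_k]$ where $W_j := \widetilde{S}^{-1/2} A_j V \in \mathbb{R}^{n \times r_i}$. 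I compute all $W_j$ in one shot by stacking the $A_j$'s vertically into an $mn \times n$ block, multiplying on the right by $V$, reshaping into an $n \times m r_i$ block, and left-multiplying by $\widetilde{S}^{-1/2}$; using Lemma~\ref{lem:OrderOfTmat}, the dominating cost is $\Tmat(n, m r_i, n)$. Flattening each $W_j$ into a row of length $n r_i$ produces $\widetilde{\mathcal{B}} \in \mathbb{R}^{m \times n r_i}$, and computing $\widetilde{\mathcal{B}} \widetilde{\mathcal{B}}^\top$ costs $\Tmat(m, n r_i, m)$. Cheaper cross terms such as $\tr[V V^\top A_j V' V'^\top A_k]$ reduce to smaller products of the same two shapes. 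Summing over $i = 0, \ldots, T$ delivers the final term of \eqref{eq:totalcost}. The delicate part is verifying that every cross term in the Woodbury expansion — with either $V_+$ or $V_-$ in each slot — fits into one of these two rectangular templates without leaking additional factors.
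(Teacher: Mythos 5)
Your overall decomposition matches the paper's (gradient/slack/inversions versus rank-$r_i$ Hessian maintenance, with Woodbury and the two rectangular templates $\Tmat(n,mr_i,n)$ and $\Tmat(m,nr_i,m)$), but there is one genuine gap in the third-term accounting: the claim that, in computing all $W_j = \wt{S}^{-1/2}A_jV$, ``the dominating cost is $\Tmat(n,mr_i,n)$.'' The first stage of that computation — multiplying the stacked $mn\times n$ matrix of $A_j$'s by $V\in\R^{n\times r_i}$ — is \emph{not} dominated by $\Tmat(n,mr_i,n)$. It requires reading every $A_j$, hence costs at least $\nnz(A)$ per iteration; in the dense case with $m=n$ and $r_i=1$ this is $n^3$, whereas $\Tmat(n,mr_i,n)=\Tmat(n,n,n)=n^{\omega+o(1)}\approx n^{2.37}$. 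Lemma~\ref{lem:OrderOfTmat} only permutes the three dimensions of a single product and cannot convert the triple $(mn,n,r_i)$ into $(n,mr_i,n)$. So this stage leaks a cost that your two templates do not capture.

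The leak is reparable and is exactly how the paper's first term arises: bound the cost of forming $A_jV$ for all $j$ by $O^*\bigl(\min(r_i\cdot\nnz(A),\, mn^2 r_i^{\omega-2+o(1)})\bigr)$ per iteration (the second bound comes from tiling the stacked matrix and $V$ into $r_i\times r_i$ blocks), and then sum over iterations using the rank inequality of Theorem~\ref{thm:rankineq}: $\sum_i \sqrt{r_i}=\wt{O}(\sqrt n)$ gives $\sum_i r_i\le \wt{O}(n)$ and, since $\omega-2\le 1/2$, also $\sum_i r_i^{\omega-2}\le \wt{O}(\sqrt n)$, yielding a total of $O^*\bigl(\min(n\cdot\nnz(A),\,mn^{2.5})\bigr)$. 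In your write-up the first term of \eqref{eq:totalcost} is charged only to the gradient and slack computations, so the $A_jV$ cost is currently unaccounted for. With that correction (and noting that your sum starting at $i=0$ with $r_0=n$ implicitly covers the from-scratch initialization of $\wt H$), the argument goes through and coincides with the paper's.
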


\begin{remark}
A more careful analysis can improve the first term in the RHS of \eqref{eq:totalcost} to $\sqrt{n} \cdot \nnz(A)^{1-\gamma} \cdot (mn^2)^\gamma$ for $\gamma = \frac{1}{2 (3 - \omega(1))}$. For the purpose of this paper, however, we will only need the simpler bound given in Lemma~\ref{lem:totalcost}.
\end{remark}

\begin{proof}
The total runtime of Algorithm~\ref{alg:ss_ipm} consists of two parts:
\begin{itemize}
\vspace{-0.2cm}
 \item {\bf Part 1.} The time to compute the approximate Hessian $\wt{H}(y)$ (which we abbreviate as $\wt{H}$) in Line~\ref{lin:start_compute_Hessian}~-~\ref{lin:end_compute_Hessian}.
 \vspace{-0.15cm}
 \item {\bf Part 2.} The total cost of operations other than computing the approximate Hessian. 
\end{itemize}
\vspace{-0.15cm}

{\bf Part 1.}

We analyze the cost of computing the approximate Hessian $\wt{H}$. 

{\bf Part 1a. Initialization.}

We start with computing $\wt{H}$ in the first iteration of the algorithm. Each entry of $\wt{H}$ involves the computation 
\begin{align*}
\wt{H}_{j,k} = \tr \Big[ (\wt{S}^{-1/2} A_j \wt{S}^{-1/2})( \wt{S}^{-1/2} A_k \wt{S}^{-1/2}) \Big].
\end{align*} 
It first costs $O^*(n^\omega)$ to invert $\widetilde{S}$. Then the cost of computing the key module of the approximate Hessian, $\wt{S}^{-1/2} A_j \wt{S}^{-1/2}$ for all $j\in [m]$, is obtained by stacking the matrices $A_j$ together:
\begin{align}\label{eq:cost-H0-1}
\cost{\wt{S}^{-1/2} A_j \wt{S}^{-1/2} \text{ for all } j \in [m]} \leq O(\Tmat(n, mn, n)).
\end{align}
Vectorizing the matrices $\wt{S}^{-1/2}A_j \wt{S}^{-1/2}$ into row vectors of length $n^2$, for each $j \in [m]$, and stacking these rows vertically to form a matrix $B$ of dimensions $m \times n^2$, one observes that $\wt{H} = BB^{\top}$. We therefore have, 
\begin{align}\label{eq:exactcost-H0-2}
\cost{\text{computing } \wt{H} \text{ from } B} \leq O(\Tmat(m, n^2, m)). 
\end{align}  
Combining \eqref{eq:cost-H0-1}, \eqref{eq:exactcost-H0-2}, and the initial cost of inverting $\wt{S}$ gives the following cost for computing $\wt{H}$ for the first iteration: 
\begin{align}\label{eq:time_part_1a}
{\cal T}_{\text{part~1a}} \leq O^*(\Tmat(m, n^2, m) + \Tmat(n, mn, n) + n^\omega).
\end{align}  

{\bf Part 1b. Accumulating low-rank changes over all the iterations}

Once the approximate Hessian in the first iteration has been computed, every next iteration has the approximate Hessian computed using a rank $r_i$ update to the approximate slack matrix $\wt{S}$ (see Line~\ref{lin:low_rank_update} of Algorithm~\ref{alg:approx_slack_update}). If the update from $\wt{S}$ to $\wt{S}_{\new}$ has rank $r_i$, Fact~\ref{fac:woodbury} implies that we can compute, in time $O(n^{\omega + o(1)})$, the $n \times r_i$ matrices $V_+$ and $V_-$ satisfying $\wt{S}_{\new}^{-1} = \wt{S}^{-1} + V_+ V_+^\top - V_- V_-^\top$.  The cost of updating $\wt{H}$ is then dominated by the computation of $\tr[ \wt{S}^{-1/2} A_j V V^{\top} A_k \wt{S}^{-1/2} ]$, where $V \in \R^{n \times r_i}$ is either $V_+$ or $V_-$. We note that
\begin{align}\label{eq:exactcost_Hr_1}
\cost{A_j V \text{ for all } j \in [m]} \leq O^* \left(\min \left( r_i \cdot \nnz(A), mn^2 r_i^{\omega - 2 + o(1)} \right ) \right),
\end{align}
where $\nnz(A)$ is the total number of non-zero entries in all the constraint matrices, and the second term in the minimum is obtained by stacking the matrices $A_j$ together and splitting it and $V$ into matrices of dimensions $r_i \times r_i$. Further, pre-multiplying $\wt{S}^{-1/2}$ with $A_j V$ for all $j\in [m]$ essentially involves computing the matrix product of an $n \times n$ matrix and an $n \times mr_i$ matrix, which, by Definition~\ref{def:cmat}, costs $\Tmat(n, m r_i, n)$. 
This, together with \eqref{eq:exactcost_Hr_1},  gives 
\begin{align}\label{eq:exactcost_Hr_2}
\cost{\wt{S}^{-1/2}A_j V \text{ for all $j\in [m]$}} \leq O^* \left( \Tmat(n, m r_i, n) + \min \left( r_i \cdot \nnz(A), mn^2 r_i^{\omega - 2 + o(1)} \right ) \right).
\end{align}
The final step is to vectorize all the matrices $\wt{S}^{-1/2}A_j V$, for each $j\in [m]$, and stack these vertically to get an $m \times nr_i$ matrix $B$, which gives the update to Hessian to be computed as $BB^{\top}$. This costs, by definition, $\Tmat(m, nr_i, m)$. Combining this with \eqref{eq:exactcost_Hr_2} gives the following run time for one update to the approximate Hessian:

\begin{align}\label{eq:totalcost_ri}
\cost{\text{rank $r_i$ Hessian update}} \leq O^* \left(\Tmat(n, m r_i, n) + \min \left( r_i \cdot \nnz(A), mn^2 r_i^{\omega - 2} \right ) + \Tmat(m, nr_i, m) + n^\omega \right).
\end{align} 
 
Using this bound over all $T = \wt{O}(\sqrt{n})$ iterations, and applying  $\sum_{i = 0}^T \sqrt{r_i} \leq \wt{O}(\sqrt{n})$ from Theorem~\ref{thm:rankineq}, gives
\begin{align}\label{eq:time_part_1b}
{\cal T}_{\text{part~1b}} \leq O^* \left(\min(n \cdot \nnz(A), mn^{2.5}) + \sqrt{n} \cdot n^\omega + \sum_{i=1}^T (\Tmat(n, mr_i, n) + \Tmat(m, n r_i, m)) \right) .
\end{align}

{\bf Combining Part 1a and 1b.}

Combining \eqref{eq:time_part_1a} and \eqref{eq:time_part_1b}, we have
\begin{align}\label{eq:time_part_1}
{\cal T}_{\text{part~1}} 
\leq & ~ {\cal T}_{\text{part~1a}} + {\cal T}_{\text{part~1b}} \notag \\
\leq & ~ O^* \left(\min(n \cdot \nnz(A), mn^{2.5}) + \sqrt{n} \cdot n^\omega + \sum_{i=0}^T (\Tmat(n, mr_i, n) + \Tmat(m, n r_i, m)) \right), 
\end{align}
where we incorporated the bound from \eqref{eq:time_part_1a} into the $i = 0$ case. 

{\bf Part 2.}

Observe that there are four operations performed in Algorithm~\ref{alg:ss_ipm} other than computing $\wt{H}$:

\begin{itemize}
	\vspace{-0.2cm}
\item {\bf Part 2a.} computing the gradient $g_\eta(y)$
\vspace{-0.2cm}
\item {\bf Part 2b.} inverting the approximate Hessian $\wt{H}$
\vspace{-0.2cm}
\item {\bf Part 2c.} updating the dual variables $y_{\new}$ and $S(y_{\new})$
\vspace{-0.2cm}
\item {\bf Part 2d.} computing the new approximate slack matrix $\wt{S}(y_{\new})$
\end{itemize}

{\bf Part 2a.} 
The $i$'th coordinate of the gradient is expressed as $g_{\eta} (y)_i = \eta b_i - \tr [ S^{-1} A_i ]$. The cost per iteration of computing this quantity equals $O(\nnz(A) + n^{\omega+o(1)})$, where the second term comes from inverting the matrix $S$. 

{\bf Part 2b.}
The cost of inverting the approximate Hessian $\wt{H}$ is $O(m^{\omega+o(1)})$ per iteration.

{\bf Part 2c.}
The cost of updating the dual variable $y_{\new} = y - \wt{H}^{-1} g_{\eta_{\mathrm{new}}}(y)$, given $\wt{H}^{-1}$ and $g_{\eta_{\mathrm{new}}}(y)$, is $O(m^2)$ per iteration. The cost of computing the new slack matrix $S_{\new} = \sum_{i \in [m]} (y_{\new})_i A_i - C$ is $O(\nnz(A))$ per iteration.

{\bf Part 2d.}
The per iteration cost of updating the approximate slack matrix $\wt{S}_{\new}$ is $O(n^{\omega + o(1)})$ by Lemma~\ref{lem:approx_slack_update}.

{\bf Combining Part 2a, 2b, 2c and 2d.}

The total cost of operations other than computing the Hessian over the $T = \wt{O}(\sqrt{n})$ iterations is therefore bounded by 
\begin{align}\label{eq:time_part_2}
{\cal T}_{\text{part~2}} 
\leq & ~ {\cal T}_{\text{part~2a}} + {\cal T}_{\text{part~2b}} + {\cal T}_{\text{part~2c}} + {\cal T}_{\text{part~2d}} \notag \\  
\leq & ~ O^*(\sqrt{n} (\nnz(A) + \max(m,n)^\omega)). 
\end{align}

{\bf Combining Part 1 and Part 2.}

Combining \eqref{eq:time_part_1} and \eqref{eq:time_part_2} and using $r_0 = n$ finishes the proof of the lemma. 
\begin{align*}
{\cal T}_{\text{total}}
\leq & ~ {\cal T}_{\text{part~1}} + {\cal T}_{\text{part~2}} \notag \\
\leq & ~ O^* \left( \min \left( n \cdot \nnz(A), m n^{2.5} \right ) + \sqrt{n} \max(m,n)^\omega + \sum_{i = 0}^{T} \left(\Tmat(n, m r_i, n) + \Tmat(m, n r_i, m) \right) \right) .
\end{align*} 

\end{proof}


\begin{lemma}\label{lem:mnri_cheaper} 
Let $\Tmat$ be as defined in Definition~\ref{def:cmat}. Let $T = \wt{O}(\sqrt{n})$ and $\{ r_1, \cdots, r_T \}$ be a sequence that satisfies 
\begin{align*}
\sum_{i=1}^T \sqrt{r_i} \leq O( T \log^{1.5} n )
\end{align*}
Property I. We have
\begin{align*}
\sum_{i=1}^T \Tmat(m, n r_i, m) \leq O^*(\sqrt{n} \max ( m^\omega, n^{\omega}) + \Tmat(m, n^2, m)) ,
\end{align*}
Property II. We have
\begin{align*}
\sum_{i=1}^T \Tmat(n, m r_i, n) \leq O^*(\sqrt{n} \max ( m^\omega, n^{\omega})  + \Tmat(n, mn, n)) .
\end{align*}
\end{lemma}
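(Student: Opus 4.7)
The plan is to prove both properties by a dyadic decomposition of iterations by the value of $r_i$, combined with the convexity of $\omega(\cdot)$ from Lemma~\ref{lem:omegaconvex}. For Property~I, partition $\{1, \ldots, T\}$ into groups $I_j := \{i : 2^{j-1} < r_i \leq 2^j\}$ for $j = 0, 1, \ldots, \lceil \log_2 n \rceil$ (with $I_0$ absorbing all iterations with $r_i \leq 1$). Writing $N_j := |I_j|$, the rank inequality $\sum_i \sqrt{r_i} \leq O^*(\sqrt{n})$ forces $N_j \cdot 2^{(j-1)/2} \leq O^*(\sqrt{n})$, so $N_j \leq O^*(\sqrt{n} \cdot 2^{-j/2})$. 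Since every iteration in $I_j$ contributes at most $\Tmat(m, n \cdot 2^j, m)$, the total cost is bounded by $O^*(\sqrt{n}) \cdot \sum_j F(j)$, where $F(j) := \Tmat(m, n \cdot 2^j, m) \cdot 2^{-j/2}$.

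The crux is that $F(j)$ is convex in $j$. Writing $m = n^a$, Definition~\ref{def:omegak} gives $\log_n \Tmat(m, n \cdot 2^j, m) = a \cdot \omega\!\left((1 + j \log_n 2)/a\right) + o(1)$; Lemma~\ref{lem:omegaconvex} makes this a convex function of $j$, and subtracting the linear term $(j/2) \log_n 2$ preserves convexity. A convex function on an interval attains its maximum at an endpoint, so $F(j) \leq F(0) + F(\lceil \log_2 n \rceil)$ for every $j$ in the range. Summing the $O(\log n)$ dyadic groups then yields
\begin{align*}
\sum_j F(j) \leq O(\log n) \cdot \left( F(0) + F(\lceil \log_2 n \rceil) \right) = O^*\!\left( \Tmat(m, n, m) + \Tmat(m, n^2, m)/\sqrt{n} \right),
\end{align*}
and combining with the $O^*(\sqrt{n})$ prefactor together with the zero-padding bound $\Tmat(m, n, m) \leq \max(m, n)^\omega$ gives Property~I. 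Property~II follows by the entirely symmetric argument applied to $G(j) := \Tmat(n, m \cdot 2^j, n) \cdot 2^{-j/2}$, which is log-convex by the same use of Lemma~\ref{lem:omegaconvex} and whose endpoint values are $\Tmat(n, m, n) \leq \max(m, n)^\omega$ and $\Tmat(n, mn, n)/\sqrt{n}$.

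The main obstacle is the clean derivation of log-convexity of $F(j)$ and $G(j)$ from Lemma~\ref{lem:omegaconvex}: one must express $\Tmat(m, k, m)$ via $\omega$ at the correctly rescaled exponent and tolerate an $o(1)$ slack in the exponent, which is absorbed into the $O^*$ notation across the $O(\log n)$ dyadic scales. Once this log-convexity is in place, the remaining steps are routine bookkeeping, using only the rank inequality, the endpoint-maximum property of convex functions, and basic padding identities for rectangular matrix multiplication (together with Lemma~\ref{lem:OrderOfTmat} to symmetrize $\Tmat$ when needed).
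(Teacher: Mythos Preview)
Your proposal is correct and follows essentially the same approach as the paper's proof: a dyadic bucketing of iterations by the value of $r_i$, the bound $|I_j| \leq O^*(\sqrt{n}\cdot 2^{-j/2})$ from the rank inequality, and then the observation that the exponent $a\,\omega((1+b)/a) - b/2$ is convex in $b$ (by Lemma~\ref{lem:omegaconvex}) so that the maximum over dyadic scales is attained at one of the two endpoints $b=0$ or $b=1$. The only cosmetic difference is that the paper parametrizes by the exponent $b_i = \log_n r_i$ and analyzes the function $g(b) = 1/2 - b/2 + a\,\omega((1+b)/a)$ directly, whereas you phrase it as log-convexity of $F(j) = \Tmat(m,n\cdot 2^j,m)\cdot 2^{-j/2}$; these are the same computation, and your endpoint evaluations (padding $\Tmat(m,n,m) \leq \max(m,n)^{\omega+o(1)}$ and $F(\lceil \log_2 n\rceil) = \Tmat(m,n^2,m)/\sqrt{n}$) match the paper's Case~1 and Case~2 respectively.
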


\begin{proof}
We give only the proof of Property I, as the proof of Property II is similar. Let $m = n^a$. For each $i\in [T]$, let $r_i = n^{b_i}$, where $b_i \in [0,1]$. Then 
\begin{align}\label{eq:Tmat-m-nri-m}
\Tmat(m, n r_i, m) = \Tmat( n^a , n^{1+b_i} , n^a ) = n^{ a \omega ( ( 1 + b_i ) / a ) + o(1)}.
\end{align}

For each number $k \in \{0,1, \cdots,  \log n \}$, define the set of iterations 
\begin{align*}
I_k = \{i \in [T] ~:~ 2^k \leq r_i \leq 2^{k+1}\}.
\end{align*} 

Then our assumption on the sequence $\{r_1,\cdots, r_T\}$ can be expressed as $\sum_{k=0}^{\log n} | I_k | \cdot 2^{k/2} \leq O( T \log^{1.5} n )$. This implies that for each $k \{0,1,\cdots,\log n\}$, we have $|I_k| \leq O ( T \log^{1.5} n / 2^{k / 2} )$. Next, taking the summation of Eq.~\eqref{eq:Tmat-m-nri-m} over all $i\in [T]$, we have
\begin{align}
\sum_{i=1}^T \Tmat(m,nr_i,m)
= & ~ \sum_{i = 1 }^{T} n^{a \cdot \omega ( (1+ b_i)/a  )} \notag \\
= & ~ \sum_{k = 0}^{ \log n } \sum_{i \in I_k} n^{ a \cdot \omega( ( 1+ b_i ) / a ) } \notag \\
\leq & ~ O(\log n) \cdot \max_{k} \max_{i \in I_k} \frac{T \log^{1.5} n}{2^{k/2} } \cdot n^{ a \cdot \omega( ( 1+ b_i ) / a ) }  \notag \\
\leq & ~ \wt{O}(1) \cdot \max_k \max_{2^k \leq n^{b_i} \leq 2^{k+1}} \frac{\sqrt{n}}{2^{k/2}} \cdot n^{ a \cdot \omega( ( 1+ b_i ) / a ) } \notag \\
\leq & ~ \wt{O}(1) \cdot \max_{b_i \in [0,1]} n^{1/2 - b_i/2 + a \cdot \omega( ( 1+ b_i ) / a ) } , \notag
\end{align}
where the fourth step follows from $T = \wt{O}(\sqrt{n})$. To bound the exponent on $n$ above, we define the function $g$,
\begin{align}\label{eq:mnri_2}
g(b_i) = 1/2 -  b_i / 2 + a \cdot \omega( ( 1+b_i ) / a ).
\end{align}
This  function is convex in $b_i$ due to the convexity of the function $\omega$ (Lemma~\ref{lem:omegaconvex}). Therefore, over the interval $b_i \in [0, 1]$, the maximum of $g$ is attained at one of the end points. We simply evaluate this function at the end points. 

{\bf Case 1.} Consider the case $b_i=0$. In this case, we have $g(0) = 1/2 + a \omega(1/a)$. We consider the following two subcases.  
{\bf Case 1a.} If $a \geq 1$, then we have
\begin{align*}
g(0) = 1/2 + a \cdot \omega(1/a) \leq 1/2 + a \omega(1) = 1/2 + a \omega
\end{align*}

{\bf Case 1b.} If $a \in (0,1)$, then we define $k = 1/a > 1$. It follows from Lemma \ref{lem:sublinearity} and $\omega > 1$, that
\begin{align*}
g(0) = 1/2 + a \cdot \omega(1/a) = 1/2+ \omega(k) / k \leq 1/2 + (k - 1 + \omega) / k \leq 1/2 + \omega .
\end{align*} 
Combining both Case 1a and Case 1b, we have that 
\begin{align*}
n^{g(0)} \leq \max (n^{1/2+ a \omega}, n^{1/2+\omega }) \leq \sqrt{n} \cdot \max (m^\omega, n^\omega).
\end{align*}

{\bf Case 2} Consider the other case of $b_i=1$. In this case, $g(1) = 1/2 - 1/2 +  a \omega(2/a) = a\omega(2/a)$.  

We now finish the proof by combining Case 1 and Case 2 as follows. 
\begin{align*}
\max_{b_i \in [0,1]} n^{1/2 - b_i + a \cdot \omega( ( 1 + b_i ) / a )} 
 \leq \sqrt{n} \max ( m^\omega, n^{\omega}) + n^{a \cdot \omega(2/a)}. 
\end{align*} 
\end{proof}


\begin{proof}[Proof of Theorem \ref{thm:runtime}]

In light of Lemma~\ref{lem:mnri_cheaper}, the upper bound on runtime given in Lemma~\ref{lem:totalcost} can be written as
\begin{align}\label{eq:totalcost_simplify1}
\cost{\mathrm{Total}} \leq O^* \left( \min \left\{ n \cdot \nnz(A), m n^{2.5} \right \} + \sqrt{n} \max(m,n)^\omega + \Tmat(n, m n, n) + \Tmat(m, n^2, m) \right) .
\end{align} 

Combining this with \ref{lem:mn2_trueomega}, we have the following upper bound on the total runtime of Algorithm~\ref{alg:ss_ipm}:
\begin{align*}
\cost{\mathrm{Total}} 
&\leq O^* \left( \min \left\{ n \cdot \nnz(A), m n^{2.5} \right \} + \sqrt{n} \max(m,n)^\omega + \sqrt{n} \left(mn^2  + m^{\omega} \right) \right) \\
&\leq O^* \left(\sqrt{n} \left(mn^2  + \max(m,n)^{\omega} \right)\right) .
\end{align*}
This finishes the proof of the theorem. 
\end{proof}

\section{Comparison with Cutting Plane Method} 

In this section, we prove Theorem~\ref{thm:compare_cutting_plane}, restated below.  

\CompareCuttingPlane*

\begin{remark}
In the dense case with $\nnz(A) = \Theta(mn^2)$, Algorithm~\ref{alg:ss_ipm} is faster than the cutting plane method whenever $m \geq \sqrt{n}$. 
\end{remark}

\begin{proof}[Proof of Theorem~\ref{thm:compare_cutting_plane}]
Recall that the current best runtime of the cutting plane method for solving an SDP \eqref{eq:sdpprimal} is $ \cost{\textrm{CP}} = O^*(m \cdot \nnz(A) + mn^{2.372927} + m^3)$~\cite{lsw15,jlsw20}, where $2.372927$ is the current best upper bound on the exponent of matrix multiplication $\omega$. 
By Lemma \ref{lem:totalcost} and \ref{lem:mnri_cheaper}, we have the following upper bound on the total runtime of Algorithm~\ref{alg:ss_ipm}:
\begin{align*}
\cost{\mathrm{Total}} \leq O^* \left( \min \left\{ n \cdot \nnz(A), m n^{2.5} \right \} + \sqrt{n} \max(m,n)^\omega + \Tmat(n, m n, n) + \Tmat(m, n^2, m) \right) 
\end{align*}
Since $m \geq n$ by assumption, Lemma \ref{lem:mn2_less_nmn} and ~\ref{lem:mn2_less_nmn} further simplify the runtime to
\begin{align}
\label{eq:totalcost_simplify2}
\cost{\mathrm{Total}} \leq O^* \left( \min \left\{ n \cdot \nnz(A), m n^{2.5} \right \} + \sqrt{n} m^\omega + \Tmat(m, n^2, m) \right) 
\end{align}
Note that $\min \left\{ n \cdot \nnz(A), m n^{2.5} \right \} \leq m \cdot \nnz(A) \leq O(\cost{\textrm{CP}})$ and that $\sqrt{n} m^{\omega} = o(m^3) \leq o(\cost{\textrm{CP}})$ since $m \geq n$.
Furthermore, Lemma~\ref{lem:mn2_cheaper} states that $\Tmat(m, n^2, m) = o(m^3 + m n^{2.37}) \leq o(\cost{\textrm{CP}})$.
Since each term on the RHS of \eqref{eq:totalcost_simplify2} is upper bounded by $\cost{\textrm{CP}}$, we make the stated conclusion.
\end{proof} 

\section{Initialization} \label{sec:init}

\begin{lemma}[Initialization] \label{lem:init}
Consider a semidefinite program as in Definition~\ref{defn:sdpprimal} of dimension $n \times n$ with $m$ constraints, and assume that it has the following properties. 
\begin{enumerate}
\item Bounded diameter: for any $X \succeq 0$ with $\langle A_i , X \rangle = b_i$ for all $i \in [m]$, we have $\norm{X}_\op \leq R$.
\item Lipschitz objective: $\norm{C}_\op \leq L$. 
\end{enumerate}
For any $0 < \delta \leq 1$, the following {\em modified} semidefinite program
\begin{align*}
\max_{\overline{X} \succeq 0} ~ & \langle \overline{C}, \overline{X} \rangle \\
\mathrm{s.t.} ~ & \langle \overline{A}_i, \overline{X} \rangle = \overline{b}_i, \forall i \in [m + 1], 
\end{align*}
where 
\begin{align*}
\overline{A}_i =
\left[ 
\begin{matrix}  
A_i & 0_n & 0_{n} \\
0_n^\top & 0 & 0 \\
0_{n}^\top & 0 & \frac{b_i}{R} - \tr[A_i] 
\end{matrix} 
\right] , \quad \forall i \in [m] ,
\end{align*}
\begin{align*}
\overline{A}_{m+1} &=
\left[ 
\begin{matrix}  
I_n & 0_n & 0_{n} \\
0_n^\top & 1 & 0 \\
0_{n}^\top & 0 & 0
\end{matrix} 
\right] ~,~ 
\overline{b} = 
\left[
\begin{matrix}
\frac{1}{R} b\\
n + 1
\end{matrix}
\right] ~,~
\overline{C} = \left [
\begin{matrix}  
C \cdot \frac{\delta}{L} & 0_n & 0_{n} \\
0_n^\top & 0 & 0 \\
0_{n}^\top & 0 & -1 
\end{matrix} \right] ,
\end{align*}
satisfies the following statements. 
\begin{enumerate}
\item The following are feasible primal and dual solutions: 

\begin{align*}
\overline{X} = I_{n+2} ~,~ \overline{y} = \left[ \begin{matrix} 0_m \\ 1 \end{matrix} \right] ~,~
\overline{S}=
\left[ \begin{matrix}
I_n - C \cdot \frac{\delta }{L} & 0_n & 0 \\
0_n^\top & 1 & 0 \\
0_n^\top & 0 & 1 
\end{matrix}\right] .
\end{align*}

\item For any feasible primal and dual solutions $(\overline{X}, \overline{y}, \overline{S})$ with duality gap at most $\delta^2$, the matrix $\widehat{X} = R \cdot \overline{X}_{[n] \times [n]}$, where $\overline{X}_{[n] \times [n]}$ is the top-left $n \times n$ block submatrix of $\overline{X}$, is an approximate solution to the original semidefinite program in the following sense:
\begin{align*}
\langle C, \widehat{X} \rangle &\geq \langle C, X^* \rangle - L R \cdot \delta ,\\
\widehat{X} &\succeq 0, \\
\sum_{i \in [m]} \left|\langle A_i, \widehat{X} \rangle - b_i \right| &\leq 4 n \delta \cdot (R \sum_{i \in [m]} \norm{A_i}_1 + \norm{b}_1), 
\end{align*}
where $X^*$ is any optimal solution to the original SDP and $\norm{A}_1$ denotes the Schatten $1$-norm of a matrix $A$. 
\end{enumerate}
\end{lemma}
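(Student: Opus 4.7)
The plan is to verify feasibility of the initial triple by direct calculation, and then to extract an approximate solution to the original SDP from any primal-dual pair of the modified SDP by combining weak duality with an a priori lower bound on the modified primal optimum $\overline{\OPT}$. For Part 1, I would plug $\overline{X} = I_{n+2}$ into each modified constraint to get $\langle \overline{A}_i, I_{n+2}\rangle = \tr[A_i] + (b_i/R - \tr[A_i]) = b_i/R$ for $i\in[m]$, and $\langle \overline{A}_{m+1}, I_{n+2}\rangle = n+1$, matching $\overline{b}$. For dual feasibility I would compute $\sum_i \overline{y}_i \overline{A}_i - \overline{C}$ with $\overline{y} = e_{m+1}$, which is block-diagonal with diagonal blocks $I_n - (\delta/L) C$, $1$, and $1$, and verify the first block is PSD using $\|(\delta/L) C\|_{\op} \leq \delta \leq 1$.

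For Part 2, I would parameterize any feasible $\overline{X}$ by the triple $(X_{n\times n},u,v)$, where $u := \overline{X}_{n+1,n+1}$ and $v := \overline{X}_{n+2,n+2}$; off-block entries of $\overline{X}$ never enter the inner products because every $\overline{A}_i$ and $\overline{C}$ is block-diagonal in the same pattern. The first ingredient is a lower bound $\overline{\OPT} \geq \delta\OPT/(LR)$ on the modified primal optimum, obtained by exhibiting the feasible block-diagonal matrix $\mathrm{diag}(X^*/R,\ n+1-\tr(X^*/R),\ 0)$: this matrix is PSD since $\|X^*\|_\op \leq R$ forces $X^*/R \preceq I_n$ and hence $\tr(X^*/R) \leq n$, and its modified objective equals $(\delta/(LR))\langle C,X^*\rangle$. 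Combined with the crude bound $|\OPT| \leq L n R$, this also gives the absolute lower bound $\overline{\OPT} \geq -\delta n$.

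Next, I would use the constraints $\langle \overline{A}_i, \overline{X}\rangle = b_i/R$ to derive the identity $R\langle A_i, X_{n\times n}\rangle - b_i = v(R\tr[A_i] - b_i)$, so that after summing and using $|\tr[A_i]| \leq \|A_i\|_1$, the feasibility bound reduces to establishing $v \leq 4n\delta$. To bound $v$, I would upper-bound the primal objective as $\langle \overline{C}, \overline{X}\rangle = (\delta/L)\langle C, X_{n\times n}\rangle - v \leq \delta \tr(X_{n\times n}) - v \leq \delta(n+1) - v$, using $\|C\|_\op \leq L$ and $X_{n\times n}\succeq 0$ in the first inequality, and $\tr(X_{n\times n}) + u = n+1$ with $u \geq 0$ in the second. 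Combining this with the chain $\langle \overline{C}, \overline{X}\rangle \geq \overline{b}^\top \overline{y} - \delta^2 \geq \overline{\OPT} - \delta^2 \geq -\delta n - \delta^2$ (weak duality, the duality-gap hypothesis, and the lower bound) yields $v \leq \delta(2n+1) + \delta^2 \leq 4n\delta$ under $0<\delta\leq 1$.

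For the objective gap I would rewrite $\langle C, \widehat{X}\rangle = R\langle C, X_{n\times n}\rangle = (LR/\delta)\,(\langle \overline{C}, \overline{X}\rangle + v)$ and apply $\langle \overline{C}, \overline{X}\rangle \geq \overline{\OPT} - \delta^2 \geq \delta\OPT/(LR) - \delta^2$ together with $v \geq 0$ to conclude $\langle C, \widehat{X}\rangle \geq \OPT - LR\delta$; the PSD claim is immediate since $\widehat{X}$ is a positive scalar multiple of a principal submatrix of $\overline{X}\succeq 0$. The main obstacle is the bound on $v$: no diagonal entry of $\overline{X}$ is directly controlled by the PSD or constraint structure, so $v$ has to be extracted from the objective gap, which forces the particular chaining of a crude primal-objective upper bound, the duality-gap hypothesis, and the absolute lower bound on $\overline{\OPT}$.
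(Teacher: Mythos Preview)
Your proposal is correct and follows essentially the same route as the paper: the same feasible lift $\mathrm{diag}(X^*/R,\,n+1-\tr(X^*/R),\,0)$ yields the key lower bound $\overline{\OPT}\geq (\delta/(LR))\OPT$, and the bound $v\leq 4n\delta$ (the paper's $\theta$) is obtained by the same chain combining the H\"older-type upper bound $(\delta/L)\langle C,X_{n\times n}\rangle \leq (n+1)\delta$, the duality-gap hypothesis, and the crude estimate $\OPT\geq -nLR$. The only cosmetic difference is that the paper first zeroes out the off-block entries of $\overline{X}$ (arguing this preserves feasibility and PSD), whereas you observe directly that those entries never enter any $\langle \overline{A}_i,\cdot\rangle$ or $\langle \overline{C},\cdot\rangle$ and that the needed sign conditions on $X_{n\times n}$, $u$, $v$ follow from being principal submatrices/entries of $\overline{X}\succeq 0$.
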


\begin{proof}
For the first result, straightforward calculations show that $\langle \overline{A}_i, \overline{X} \rangle = \overline{b}_i $ for all $i \in [m+1]$, and that $\sum_{i \in [m+1]} \overline{y}_i \overline{A}_i - \overline{S} = \overline{C}$. 
Now we prove the second result. 
Denote $\OPT$ and $\overline{\OPT}$ the optimal values of the original and modified SDP respectively. 
Our first goal is to establish a lower bound for $\overline{\OPT}$ in terms of $\OPT$. 
For any optimal solution $X \in \mathbb{S}^{n \times n}$ of the original SDP, consider the following matrix $\overline{X} \in \mathbb{R}^{(n+2) \times (n+2)}$
\begin{align*}
\overline{X} = 
\left[ \begin{matrix}
\frac{1}{R} X & 0_n & 0_n \\
0_n^\top & n+1 - \frac{1}{R} \tr[X] & 0 \\
0_n^\top & 0 & 0 
\end{matrix} \right] .
\end{align*}
Notice that $\overline{X}$ is a feasible primal solution to the modified SDP, and that 
\begin{align*}
\overline{\OPT} \geq 
\langle \overline{C}, \overline{X} \rangle = \frac{\delta}{LR} \cdot \langle C, X \rangle = \frac{\delta}{LR} \cdot \OPT ,
\end{align*}
where the first step follows because the modified SDP is a maximization problem, and the final step is because $X$ is an optimal solution to the original SDP. 

Given a feasible primal solution $\overline{X} \in \mathbb{R}^{(n+2) \times (n+2)}$ of the modified SDP with duality gap $\delta^2$, we could assume $\overline{X} = \left[ \begin{matrix} \overline{X}_{[n] \times [n]} & 0_n & 0_n \\ 0_n^\top & \tau & 0 \\ 0_n^\top & 0 & \theta \end{matrix} \right]$ without loss of generality, where $\tau, \theta \geq 0$.  
This is because if the entries of $\overline{X}$ other than the diagonal and the top-left $n \times n$ block are not $0$, then we could zero these entries out and the matrix remains feasible and positive semidefinite. 
We thus immediately have $\widehat{X} \succeq 0$. 
Notice that
\begin{align} \label{eq:Xnn}
\frac{\delta}{L} \cdot \langle C, \overline{X}_{[n] \times [n]} \rangle - \theta = \langle \overline{C}, \overline{X} \rangle \geq \overline{\OPT} - \delta^2 \geq \frac{\delta}{LR} \cdot \OPT - \delta^2 .
\end{align} 
Therefore, we can lower bound the objective value for $\overline{X}_{[n] \times [n]}$ in the original SDP as
\begin{align*}
\langle C, \widehat{X} \rangle = R \cdot \langle C, \overline{X}_{[n] \times [n]} \rangle \geq \OPT - LR \cdot \delta ,
\end{align*}
where the last inequality follows from \eqref{eq:Xnn}.
By matrix H\"{o}lder inequality, we have 
\begin{align*}
\frac{\delta}{L} \cdot \langle C, \overline{X}_{[n] \times [n]} \rangle 
& \leq \frac{\delta}{L} \cdot \norm{C}_\op \cdot \tr\left[\overline{X}_{[n] \times [n]} \right] \\
& \leq \frac{\delta}{L} \cdot \norm{C}_\op \cdot \langle \overline{A}_{m+1}, \overline{X} \rangle  \\
& \leq (n+1) \delta ,
\end{align*}
where in the last step follows from $\norm{C}_\op \leq L$ and $b_{m+1} = n+1$. 
We can thus upper bound $\theta$ as 
\begin{align} \label{eq:theta_upper_bound}
\theta 
\leq \frac{\delta}{L} \cdot \langle C, \overline{X}_{[n] \times [n]} \rangle + \delta^2 - 
\frac{\delta}{LR} \cdot \OPT 
\leq (2n + 1) \delta  + \delta^2 \leq 4 n \delta ,
\end{align}
where the first step follows from \eqref{eq:Xnn}, the second step follows from $\OPT \geq - \norm{C}_\op \cdot \norm{X^*}_1 \geq - n L R$ where $\norm{\cdot}_1$ is the Schatten $1$-norm, and the last step follows from $\delta \leq 1 \leq n$. 
Notice that by the feasiblity of $\overline{X}$ for the modified SDP, we have
\begin{align*}
\langle A_i, \overline{X}_{[n] \times [n]} \rangle + (\frac{1}{R} \cdot b_i - \tr[A_i] ) \theta = \frac{1}{R} \cdot b_i .
\end{align*}
This implies that 
\begin{align*}
\left|\langle A_i, \widehat{X} \rangle - b_i \right| = |(b_i - R \cdot \tr[A_i]) \theta| \leq 4 n \delta \cdot (R \norm{A_i}_1 + |b_i|),
\end{align*}
where the final step follows from the upper bound of $\theta$ in \eqref{eq:theta_upper_bound}. 
Summing the above inequality up over all $i \in [m]$ finishes the proof of the lemma. 
\end{proof}


\fi

\section*{Acknowledgment} We thank Aaron Sidford for many helpful discussions and Deeksha Adil, Sally Dong, Sandy Kaplan, and Kevin Tian  for useful feedback on the writing. We gratefully acknowledge funding from CCF-1749609, CCF-1740551, DMS-1839116, Microsoft Research Faculty Fellowship, and Sloan Research Fellowship.  Zhao Song is partially supported by Ma Huateng Foundation, Schmidt Foundation, Simons Foundation, NSF, DARPA/SRC, Google and Amazon.

\ifdefined\isfocs

\else
\newpage
\fi
\bibliographystyle{alpha}	
\bibliography{ref}
\newpage
\begin{appendix}
\section{Matrix Multiplication: A Tensor Approach}\label{subsec:tensormatproperty}

The main goal of this section is to rederive, using tensors, some of the technical results from Section~\ref{subsec:matproperty}. In particular, we use tensors to derive upper bounds on the time to perform the following two rectangular matrix multiplication tasks (Lemma~\ref{lem:tensormn2_less_nmn} and ~\ref{lem:tensormn2_trueomega}): 
\begin{itemize}
	\vspace{-0.2cm}
\item Multiplying a matrix of dimensions $m \times n^2$ with one of dimensions $n^2 \times m$,
\vspace{-0.2cm}
\item Multiplying a matrix of dimensions $n \times mn$ with one of dimensions $mn \times n$. 
\end{itemize} 
\vspace{-0.2cm} Our hope is that these techniques will eventually be useful in further improving the results of this paper. 

\subsection{Exponent of matrix multiplication}

We recall two definitions to describe the cost of certain fundamental matrix operations, along with their properties. 

\begin{definition}\label{def:tensorcmat}
Define $\Tmat(n,r,m)$ to be the number of operations needed to compute the product of matrices of dimensions $n \times r$ and $r \times m$.
\end{definition}

\begin{definition}\label{def:tensoromegak} 
We define the function $\omega(k)$ to be the
minimum value such that $\Tmat( n, n^k, n ) = n^{ \omega( k ) + o( 1 ) } $. We overload notation and use $\omega$ to denote the exponent of matrix multiplication (in other words, the cost of multiplying two $n\times n$ matrices is $n^\omega$), and let $\alpha$ denote the dual exponent of matrix multiplication. Thus, we have $\omega(1)= \omega$ and $\omega(\alpha) = 2$. 
\end{definition}

\begin{lemma}[\cite{gu18}] \label{lem:tensoromega2_bound}
We have : 
\begin{itemize}
	\vspace{-0.2cm}
\item  $\omega = \omega(1) \leq 2.372927$, 
\vspace{-0.2cm}
\item  $\omega (1.5) \leq 2.79654$, 
\vspace{-0.2cm}
\item  $\omega (1.75) \leq 3.02159$,
\vspace{-0.2cm}
\item  $\omega(2) \leq 3.251640$. 
\end{itemize}
\end{lemma}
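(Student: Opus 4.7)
Since this lemma is attributed directly to Le Gall and Urrutia~\cite{gu18}, my plan is to invoke their computational results rather than attempt an original derivation of the four numerical bounds. I will briefly outline the underlying methodology so the reader can locate the relevant estimates.

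The broad strategy in this line of work (originating with Strassen, Coppersmith--Winograd, and their many successors, culminating in~\cite{gu18}) is to start from a base tensor $T$ of small shape whose asymptotic rank one can control, take a high tensor power $T^{\otimes N}$, and then, via a careful zeroing-out procedure (the \emph{laser method}), extract from $T^{\otimes N}$ a large direct sum of independent rectangular matrix multiplication tensors of shape $\langle n, n^k, n\rangle$. An upper bound on $\omega(k)$ is then obtained by optimizing the free parameters in this extraction. In the rectangular case $k \neq 1$, one additionally uses asymmetric hashing schemes tuned to the target aspect ratio, which is where the dependence on $k$ enters.

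For the four specific values $k \in \{1, 1.5, 1.75, 2\}$ listed in the lemma, the estimates $2.372927$, $2.79654$, $3.02159$, $3.251640$ are produced in~\cite{gu18} (see their Table~3) by applying this procedure to a high power of the Coppersmith--Winograd tensor with carefully chosen hashing parameters. The hard part of the derivation is numerical: each bound is the output of a high-dimensional non-convex optimization program, and the closed-form constants published in~\cite{gu18} are the result of interval-arithmetic evaluation of those optima. I would not attempt to reproduce this optimization and would simply cite the table entries. As a lightweight consistency check, one may verify that the four published values lie below the piecewise-linear interpolation one gets from Schönhage's $\tau$-theorem and are compatible with convexity of $\omega(\cdot)$, which matches the heuristic that padding/splitting reductions never beat the best direct construction.
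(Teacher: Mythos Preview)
Your proposal is correct and matches the paper's approach: this lemma is simply cited from~\cite{gu18} with no proof given in the paper, so invoking their Table~3 is exactly what is intended. Your additional explanation of the laser method is helpful context but goes beyond what the paper provides.
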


\begin{lemma}[\cite{bcs97,b13}]\label{lem:tensorOrderOfTmat}
For any three positive integers $n,m,r$, we have
\begin{align*}
\Tmat(n,r,m) = O(\Tmat(n,m,r)) = O(\Tmat(m,n,r)).
\end{align*}
\end{lemma}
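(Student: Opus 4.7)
The plan is to establish the claim through two separate symmetries of the matrix multiplication tensor. Recall that $\Tmat(n,r,m)$ captures the number of arithmetic operations needed to compute $AB$ where $A \in \mathbb{R}^{n \times r}$ and $B \in \mathbb{R}^{r \times m}$, and this count is (up to constants) the rank of the matrix multiplication tensor $\langle n, r, m \rangle$. I would first handle the ``easy'' identity and then invoke the classical symmetry of this tensor for the second.

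First, I would prove $\Tmat(n,r,m) = O(\Tmat(m,r,n))$ by direct reduction via transposition: given matrices $A \in \mathbb{R}^{n \times r}$ and $B \in \mathbb{R}^{r \times m}$, observe that $AB = ((AB)^\top)^\top = (B^\top A^\top)^\top$, and the product $B^\top A^\top$ multiplies an $m \times r$ matrix by an $r \times n$ matrix, which by Definition~\ref{def:tensorcmat} costs $\Tmat(m,r,n)$. Since transposition is free (it is merely a relabeling of entries), this yields the inequality, and the reverse direction is symmetric.

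Next, I would establish $\Tmat(n,r,m) = O(\Tmat(n,m,r))$, which is the genuinely nontrivial part. Unlike the previous step, the inner and outer dimensions are permuted in a nontrivial way, so no direct reduction by relabeling works. Instead, this is precisely the statement that the rank of the matrix multiplication tensor $\langle n, r, m \rangle$ is invariant (up to constants) under cyclic permutations of its three index sets, a classical theorem of Hopcroft--Musinski and Pan. The argument goes through the trilinear form $(A,B,C) \mapsto \tr(ABC)$: a bilinear algorithm for $\langle n, r, m \rangle$ using $T$ multiplications yields, via the transposition principle for linear maps (together with the cyclic invariance of the trace), bilinear algorithms for the cyclically permuted tensors $\langle r, m, n \rangle$ and $\langle m, n, r \rangle$ using $O(T)$ operations. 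This is the content cited as \cite{bcs97, b13}.

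Combining the transposition step with one cyclic step gives all six permutations of $(n, r, m)$, so in particular $\Tmat(n,r,m) = O(\Tmat(n,m,r)) = O(\Tmat(m,n,r))$, which is exactly the claim. The main obstacle is the cyclic symmetry in the second step: the reduction via transpose alone is insufficient, and one must genuinely invoke the tensor-rank invariance established through the trace trilinear form and the transposition principle for bilinear computations. Since this is a standard and well-documented fact, I would simply cite it rather than reprove it, which appears to match the citation style in the excerpt.
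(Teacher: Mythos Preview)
Your proposal is correct and matches the standard argument found in the cited references \cite{bcs97,b13}. Note that the paper itself does not supply a proof of this lemma: it is stated with citation only, both in Section~\ref{subsec:matproperty} (Lemma~\ref{lem:OrderOfTmat}) and in the appendix (Lemma~\ref{lem:tensorOrderOfTmat}), so there is no ``paper's own proof'' to compare against. Your decomposition into (i) the transposition identity $\Tmat(n,r,m)=O(\Tmat(m,r,n))$ and (ii) the cyclic invariance of the rank of $\langle n,r,m\rangle$ via the trilinear trace form is exactly the classical route, and you are right that the second step is the substantive one.
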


\subsection{Matrix multiplication tensor}
The rank of a tensor $T$, denoted as $R(T)$, is the minimum number of simple tensors that sum up to $T$. For any two tensors $S = (S_{i,j,k})_{i,j,k}$ and $T = (T_{a,b,c})_{a,b,c}$, we write $S \leq T$ if there exist three matrices $A, B$ and $C$ (of appropriate sizes) such that $S_{i,j,k} = \sum_{a,b,c} A_{i,a} B_{j,b} C_{k,c} T_{a,b,c}$ for all $i,j,k$. For any $i,j,k$, denote $e_{i,j,k}$ the tensor with $1$ in the $(i,j,k)$-th entry, and $0$ elsewhere. 

\begin{definition}[Matrix-multiplication tensor]
For any three positive integers $a,b,c$, we define 
\begin{align*}
\langle a, b , c \rangle := \sum_{i \in [a]} \sum_{j \in [b]} \sum_{k \in [c]} e_{i(b-1)+j,j(c-1)+k,k(a-1) + i}
\end{align*}
to be the matrix-multiplication tensor corresponding to multiplying a matrix of size $a \times b$ with one of size $b \times c$. 
\end{definition}

It's not hard to show that for any $n_i$ and $m_i$ where $i = 1,2,3$, we have
\begin{align*}
\langle n_1 , n_2 , n_3 \rangle \otimes \langle m_1 , m_2 , m_3 \rangle = \langle n_1 m_1, n_2 m_2 , n_3 m_3 \rangle.
\end{align*}
Let $\langle n \rangle = \sum_{i \in [n]} e_{i,i,i}$ be the identity tensor. For any three tensors $S, T_1$ and $T_2$, if $T_1 \leq T_2$, then we have
\begin{align*}
S \otimes T_1 \leq S \otimes T_2.
\end{align*}

\begin{lemma}[Monotonicity of tensor rank, \cite{s91}]
\label{lem:tensormonotonicity}
Tensor rank is monotone under the relation $\leq $, i.e. if $T_1 \leq T_2$, then we have 
\begin{align*}
R(T_1) \leq R(T_2).
\end{align*}
\end{lemma}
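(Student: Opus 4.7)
The plan is to unwind the definitions directly: a decomposition of $T_2$ into $r = R(T_2)$ simple tensors, combined with the three linear transformations $A$, $B$, $C$ witnessing $T_1 \leq T_2$, yields a decomposition of $T_1$ into at most $r$ simple tensors. So I would start by fixing an optimal rank decomposition
\[
T_2 = \sum_{\ell=1}^{r} u_\ell \otimes v_\ell \otimes w_\ell,
\]
where $r = R(T_2)$ and $u_\ell, v_\ell, w_\ell$ are vectors of appropriate dimensions. By hypothesis, $T_1 \leq T_2$, so there exist matrices $A$, $B$, $C$ such that
\[
(T_1)_{i,j,k} = \sum_{a,b,c} A_{i,a} B_{j,b} C_{k,c} (T_2)_{a,b,c}.
\]

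Next I would substitute the rank decomposition of $T_2$ into this expression and pull the sum over $\ell$ outside, while the sums over $a, b, c$ factor separately into the three matrix--vector products $A u_\ell$, $B v_\ell$, $C w_\ell$. Concretely,
\[
(T_1)_{i,j,k} = \sum_{\ell=1}^{r} \Bigl(\sum_a A_{i,a}(u_\ell)_a\Bigr)\Bigl(\sum_b B_{j,b}(v_\ell)_b\Bigr)\Bigl(\sum_c C_{k,c}(w_\ell)_c\Bigr) = \sum_{\ell=1}^{r} (A u_\ell)_i (B v_\ell)_j (C w_\ell)_k .
\]
This exhibits $T_1$ as a sum of $r$ simple tensors $(A u_\ell) \otimes (B v_\ell) \otimes (C w_\ell)$, so by definition of tensor rank we conclude $R(T_1) \leq r = R(T_2)$.

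The argument is essentially a one-line computation once the decomposition is plugged in, so there is no real obstacle; the only thing to be careful about is keeping the three roles (first, second, third tensor factor) straight when matching indices $i \leftrightarrow a$, $j \leftrightarrow b$, $k \leftrightarrow c$ with the matrices $A$, $B$, $C$ respectively. Since the excerpt already records the factoring identity $\langle n_1, n_2, n_3\rangle \otimes \langle m_1, m_2, m_3\rangle = \langle n_1 m_1, n_2 m_2, n_3 m_3\rangle$ and treats $\leq$ as compatible with the tensor product, no additional machinery is needed beyond linearity of the substitution.
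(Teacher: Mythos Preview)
Your proof is correct and is the standard argument. Note, however, that the paper does not actually prove this lemma: it is stated with a citation to \cite{s91} and used as a black box, so there is no ``paper's own proof'' to compare against. Your write-up supplies exactly the expected justification.
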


\begin{lemma}[Sub-multiplicity of tensor rank, \cite{s91}]
\label{lem:tensorsubmultiplicity}
For any tensors $T_1$ and $T_2$, we have
\begin{align*}
R(T_1 \otimes T_2) \leq R(T_1) \cdot R(T_2).
\end{align*}
\end{lemma}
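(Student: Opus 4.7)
The plan is to prove the upper bound $R(T_1 \otimes T_2) \leq R(T_1) \cdot R(T_2)$ directly from the definition of tensor rank, by exhibiting an explicit decomposition of $T_1 \otimes T_2$ into $R(T_1) \cdot R(T_2)$ simple (rank-$1$) tensors. Recall that $R(T)$ is the minimum number of simple tensors whose sum equals $T$, where a simple tensor in three modes has the form $u \otimes v \otimes w$ for vectors $u,v,w$.

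The first step is to fix minimal decompositions of the two factors. Let $r_1 := R(T_1)$ and $r_2 := R(T_2)$. By definition of tensor rank, there exist vectors $u_i, v_i, w_i$ for $i \in [r_1]$ and $u'_j, v'_j, w'_j$ for $j \in [r_2]$ such that
\begin{align*}
T_1 = \sum_{i=1}^{r_1} u_i \otimes v_i \otimes w_i, \qquad T_2 = \sum_{j=1}^{r_2} u'_j \otimes v'_j \otimes w'_j.
\end{align*}
The second step is to apply bilinearity of the tensor product, together with the standard mode-wise identity
\begin{align*}
(u \otimes v \otimes w) \otimes (u' \otimes v' \otimes w') = (u \otimes u') \otimes (v \otimes v') \otimes (w \otimes w'),
\end{align*}
which follows by unrolling indices: the left-hand side is a $6$-way tensor, but after the natural identification of the paired modes it becomes a $3$-way tensor whose $((i,i'),(j,j'),(k,k'))$ entry is $u_i v_j w_k u'_{i'} v'_{j'} w'_{k'}$, matching the right-hand side.

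Expanding $T_1 \otimes T_2$ via bilinearity and applying this identity to each term yields
\begin{align*}
T_1 \otimes T_2 = \sum_{i=1}^{r_1} \sum_{j=1}^{r_2} (u_i \otimes u'_j) \otimes (v_i \otimes v'_j) \otimes (w_i \otimes w'_j),
\end{align*}
which is a sum of exactly $r_1 \cdot r_2$ simple tensors in the product space. By the definition of tensor rank, $R(T_1 \otimes T_2) \leq r_1 \cdot r_2 = R(T_1) \cdot R(T_2)$, completing the proof.

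I do not anticipate any real obstacle here: the statement is a classical and essentially syntactic property of the tensor product, and the only care needed is to make the index bookkeeping for the mode-wise identity rigorous and to invoke bilinearity of $\otimes$ correctly. No appeal to the earlier monotonicity lemma (Lemma~\ref{lem:tensormonotonicity}) or to matrix-multiplication tensors is required.
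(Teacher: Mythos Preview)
Your proof is correct and is the standard argument for sub-multiplicativity of tensor rank. Note that the paper does not actually prove this lemma; it is stated with a citation to \cite{s91} and used as a black box, so there is no ``paper's own proof'' to compare against. Your direct construction from minimal rank decompositions is exactly the classical proof one finds in the cited reference.
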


\begin{lemma}\label{lem:tensorequiv_rank_time}
The tensor rank of a matrix multiplication tensor is equal to the cost of multiplying the two correponding sized matrices up to some constant factor, i.e.,
\begin{align*}
R( \langle a , b , c \rangle ) = \Theta( \Tmat(a,b,c) ).
\end{align*}
\end{lemma}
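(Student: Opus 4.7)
The plan is to establish both inequalities $R(\langle a,b,c\rangle) \le O(\Tmat(a,b,c))$ and $\Tmat(a,b,c) \le O(R(\langle a,b,c\rangle))$ in the standard asymptotic sense under which the exponent $\omega(k)$ is defined in Definition~\ref{def:tensoromegak}.

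For the direction $\Tmat(a,b,c) \le O(R(\langle a,b,c\rangle))$, I would start from a minimum rank decomposition $\langle a,b,c\rangle = \sum_{r=1}^{R} u_r \otimes v_r \otimes w_r$ with $u_r \in \R^{ab}$, $v_r \in \R^{bc}$, $w_r \in \R^{ac}$, and extract from it the standard bilinear algorithm for multiplying an $a \times b$ matrix $A$ by a $b \times c$ matrix $B$. Namely, I would form the $R$ scalars $\alpha_r = \langle u_r, \vect(A)\rangle$ and $\beta_r = \langle v_r, \vect(B)\rangle$ by linear combinations of the inputs, perform the $R$ essential products $\gamma_r = \alpha_r \beta_r$, and recover each output entry as $(AB)_{i,k} = \sum_r (w_r)_{(i,k)} \gamma_r$. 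This yields an algorithm with $R$ non-scalar multiplications together with $O(R(ab+bc+ac))$ additions.

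For the reverse direction $R(\langle a,b,c\rangle) \le O(\Tmat(a,b,c))$, I would invoke the classical normalization result (Strassen; Hopcroft--Kerr) that any arithmetic circuit computing a system of bilinear forms, such as the entries of $AB$, can be converted to a bilinear circuit with at most a constant factor more essential (non-scalar) multiplications. Each such essential multiplication in a bilinear circuit reads off as a rank-one outer product $u_r \otimes v_r \otimes w_r$ in a decomposition of $\langle a,b,c\rangle$. Since the total operation count upper-bounds the number of multiplications, this immediately yields $R(\langle a,b,c\rangle) \le O(\Tmat(a,b,c))$.

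The main obstacle is the additive overhead in the first direction: the $O(R(ab+bc+ac))$ additions can be polynomially larger than $R$, so the naive translation only gives $\Tmat \le O(R^2)$ in the worst case. To obtain the claimed $\Theta$-equivalence, I would absorb this overhead using sub-multiplicativity of tensor rank under tensor products (Lemma~\ref{lem:tensorsubmultiplicity}): applied to the $t$-fold tensor power $\langle a,b,c\rangle^{\otimes t} = \langle a^t, b^t, c^t\rangle$, the construction yields an algorithm with at most $R^t$ essential multiplications and additive overhead polynomial in $(abc)^t$. As $t \to \infty$, the polynomial overhead contributes only a subpolynomial factor in $R^t$, so the exponents governing $\Tmat$ and $R$ match, which is precisely the asymptotic $\Theta$-equivalence the lemma asserts and is what makes $\omega(k)$ a well-defined quantity in both the tensor-rank and the algorithmic formulations used throughout this section.
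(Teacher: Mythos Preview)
The paper does not supply a proof of this lemma; it is stated as a classical fact (essentially Strassen's equivalence between bilinear complexity and total arithmetic complexity for matrix multiplication) and then used as a black box. So there is no paper proof to compare against, and I evaluate your argument on its own.

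Your two-direction structure is right, and the bound $R(\langle a,b,c\rangle)\le O(\Tmat(a,b,c))$ via Strassen/Hopcroft--Kerr bilinearization is correct.

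The gap is in your treatment of the additive overhead for the direction $\Tmat\le O(R)$. Evaluating a rank-$R^t$ decomposition of $\langle a^t,b^t,c^t\rangle$ \emph{directly} costs $R^t$ multiplications but roughly $R^t\cdot\big((ab)^t+(bc)^t+(ca)^t\big)$ additions. For fixed $a,b,c$, both $R$ and $\max(ab,bc,ca)$ are fixed constants, so this additive term is of order $(R\cdot\max(ab,bc,ca))^t$, which is \emph{not} a subpolynomial factor of $R^t$ as $t\to\infty$; it is exponentially larger. Concretely, for $a=b=c=2$ and $R=7$ your bound would give $\Tmat(2^t,2^t,2^t)=O(28^t)$, i.e.\ exponent $\log_2 28\approx 4.8$, which is worse than the trivial cubic algorithm. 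The tensor-power trick by itself does not absorb the overhead.

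The standard fix is Strassen's \emph{recursive} application of the base bilinear algorithm rather than direct evaluation of the tensor-power decomposition. View an $a^t\times b^t$ matrix as an $a\times b$ block matrix with $a^{t-1}\times b^{t-1}$ blocks and apply the rank-$R$ algorithm at the top level, recursing on block products. If the base algorithm uses $R$ bilinear products and a fixed number of scalar additions, one gets the recurrence
\[
T(t)=R\cdot T(t-1)+O\big((ab)^{t-1}+(bc)^{t-1}+(ca)^{t-1}\big).
\]
Since every flattening of $\langle a,b,c\rangle$ has full rank, $R\ge\max(ab,bc,ca)$, so the multiplication branch dominates and the recurrence solves to $T(t)=O(R^t)$ (with an extra factor $t$ in the boundary case $R=\max(ab,bc,ca)$). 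This is what yields $\Tmat(a^t,b^t,c^t)=O\big(R(\langle a,b,c\rangle)^t\big)$ and, together with bilinearization, the asymptotic $\Theta$-equivalence you are after.
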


\subsection{Implication of matrix multiplication technique}


\begin{lemma}[Sub-linearity] \label{lem:tensorsublinearity}
For any $p \geq q \geq 1$, we have
\begin{align*}
\omega(p) \leq p - q + \omega(q).
\end{align*}
\end{lemma}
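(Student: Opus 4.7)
The plan is to prove this via the sub-multiplicativity of tensor rank under tensor product, which is the tensor analogue of the block-decomposition argument used in the proof of Lemma~\ref{lem:sublinearity}. The key observation is that the rectangular matrix-multiplication tensor $\langle n, n^p, n\rangle$ factors as a tensor product of a smaller rectangular multiplication tensor with an inner-product tensor.

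More concretely, I would start by writing $n^p = n^q \cdot n^{p-q}$ (assuming without loss of generality that $n^p$ and $n^q$ are integers) and invoking the identity
\begin{align*}
\langle 1, n^{p-q}, 1\rangle \otimes \langle n, n^q, n\rangle = \langle n, n^p, n\rangle,
\end{align*}
which follows directly from the displayed tensor-product rule $\langle n_1,n_2,n_3\rangle \otimes \langle m_1,m_2,m_3\rangle = \langle n_1 m_1, n_2 m_2, n_3 m_3\rangle$ stated just above the lemma. Then by Lemma~\ref{lem:tensorsubmultiplicity},
\begin{align*}
R(\langle n, n^p, n\rangle) \leq R(\langle 1, n^{p-q}, 1\rangle) \cdot R(\langle n, n^q, n\rangle).
\end{align*}

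Next I would bound each factor on the right. The tensor $\langle 1, n^{p-q}, 1\rangle$ corresponds to the inner product of two vectors of length $n^{p-q}$, so $R(\langle 1, n^{p-q}, 1\rangle) \leq n^{p-q}$. By Lemma~\ref{lem:tensorequiv_rank_time} together with Definition~\ref{def:tensoromegak}, $R(\langle n, n^q, n\rangle) = \Theta(\Tmat(n, n^q, n)) = n^{\omega(q) + o(1)}$, and similarly $R(\langle n, n^p, n\rangle) = n^{\omega(p) + o(1)}$. Chaining these estimates together gives
\begin{align*}
n^{\omega(p) + o(1)} \leq n^{p - q} \cdot n^{\omega(q) + o(1)} = n^{p - q + \omega(q) + o(1)},
\end{align*}
and comparing exponents yields the desired inequality $\omega(p) \leq p - q + \omega(q)$.

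The argument is essentially routine once the correct tensor factorization is identified; the only mild subtlety is justifying $R(\langle 1, n^{p-q}, 1\rangle) \leq n^{p-q}$ (which follows from the obvious explicit decomposition of the inner-product tensor as a sum of $n^{p-q}$ rank-one terms) and handling the case when $n^p$ or $n^q$ is not an integer, which can be absorbed into the $o(1)$ term by standard rounding arguments. No main obstacle is expected — this is the natural tensor-rank reformulation of the block tiling argument already used in Lemma~\ref{lem:sublinearity}.
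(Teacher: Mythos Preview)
Your proposal is correct and follows essentially the same route as the paper's proof: both factor $\langle n, n^p, n\rangle = \langle n, n^q, n\rangle \otimes \langle 1, n^{p-q}, 1\rangle$, apply sub-multiplicativity of tensor rank (Lemma~\ref{lem:tensorsubmultiplicity}), and then translate back to $\Tmat$ and $\omega(\cdot)$ via Lemma~\ref{lem:tensorequiv_rank_time} to compare exponents. The only cosmetic difference is that you bound $R(\langle 1, n^{p-q}, 1\rangle) \leq n^{p-q}$ directly from the explicit rank-one decomposition, whereas the paper routes this through $\Tmat(1, n^{p-q}, 1)$; the content is identical.
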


\begin{proof}
We have
\begin{align*}
\langle n, n^p, n \rangle = \langle n, n^q , n \rangle \otimes \langle 1, n^{p-q} , 1 \rangle.
\end{align*}
Applying tensor rank on both sides
\begin{align*}
R ( \langle n, n^p, n \rangle ) 
= & ~ R ( \langle n, n^q , n \rangle \otimes \langle 1, n^{p-q} , 1 \rangle ) \\
\leq & ~ R ( \langle n, n^q , n \rangle ) \cdot R( \langle 1, n^{p-q} , 1 \rangle ),
\end{align*}
where the last line follows from Lemma~\ref{lem:tensorsubmultiplicity}. 
Applying Lemma~\ref{lem:tensorequiv_rank_time}, we have
\begin{align*}
\Tmat ( n, n^p ,n ) \leq O(1) \cdot \Tmat( n, n^q , n ) \cdot n^{p-q}
\end{align*}
Using the definition of $\omega(p)$, we have
\begin{align*}
n^{ \omega(p) + o(1) } \leq O(1) \cdot n^{ \omega(q) + o(1) } \cdot n^{p-q}.
\end{align*}
Comparing the exponent on both sides completes the proof.
\end{proof}


The next lemma establishes the convexity of $\omega(k)$ as a function of $k$. 

\begin{lemma}[Convexity of $\omega(k)$]\label{lem:tensoromegaconvex} 
The fast rectangular matrix multiplication time exponent $\omega(k)$ as defined in Definition~\ref{def:tensoromegak} is convex in $k$.
\end{lemma}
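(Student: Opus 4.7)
The plan is to prove convexity by exhibiting $\langle n, n^k, n \rangle$ as a tensor product of two smaller matrix multiplication tensors whose dimensions realize the convex combination, then apply sub-multiplicativity of tensor rank (Lemma~\ref{lem:tensorsubmultiplicity}) together with the equivalence between tensor rank and matrix multiplication cost (Lemma~\ref{lem:tensorequiv_rank_time}). This parallels Lemma~\ref{lem:omegaconvex}, but replaces the blockwise tiling argument with a clean, one-line tensor identity.

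Concretely, fix $\alpha \in (0,1)$ and set $k = \alpha p + (1-\alpha) q$. The key observation is the tensor identity
\begin{align*}
\langle n, n^k, n \rangle \;=\; \langle n^{\alpha}, n^{\alpha p}, n^{\alpha} \rangle \;\otimes\; \langle n^{1-\alpha}, n^{(1-\alpha) q}, n^{1-\alpha} \rangle,
\end{align*}
which is immediate from the fact that $\langle a_1, b_1, c_1 \rangle \otimes \langle a_2, b_2, c_2 \rangle = \langle a_1 a_2, b_1 b_2, c_1 c_2 \rangle$, and from the arithmetic $\alpha + (1-\alpha) = 1$ and $\alpha p + (1-\alpha) q = k$. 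Applying Lemma~\ref{lem:tensorsubmultiplicity} (sub-multiplicativity of tensor rank) to this decomposition yields
\begin{align*}
R\bigl( \langle n, n^k, n \rangle \bigr) \;\leq\; R\bigl( \langle n^{\alpha}, n^{\alpha p}, n^{\alpha} \rangle \bigr) \cdot R\bigl( \langle n^{1-\alpha}, n^{(1-\alpha) q}, n^{1-\alpha} \rangle \bigr).
\end{align*}

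Next I would translate tensor ranks back to matrix multiplication costs via Lemma~\ref{lem:tensorequiv_rank_time} and Definition~\ref{def:tensoromegak}. Setting $m_1 = n^{\alpha}$ and $m_2 = n^{1-\alpha}$ and using that $R(\langle m, m^p, m \rangle) = m^{\omega(p) + o(1)}$, the right-hand side becomes $n^{\alpha \cdot \omega(p) + o(1)} \cdot n^{(1-\alpha) \cdot \omega(q) + o(1)}$, while the left-hand side is $n^{\omega(k) + o(1)}$. Comparing exponents yields $\omega(k) \leq \alpha \omega(p) + (1-\alpha) \omega(q)$, which is exactly convexity.

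The only minor obstacle is that $n^{\alpha}$, $n^{\alpha p}$, $n^{1-\alpha}$, and $n^{(1-\alpha) q}$ need not be integers, so the tensor identity as written is technically only valid along subsequences where these exponentials are integral. I would handle this in the standard way by restricting $n$ to a subsequence along which $n^{\alpha}$ and $n^{1-\alpha}$ are integers (equivalently, replacing $n$ by $N = n^{1/\epsilon}$ for an appropriate small rational $\epsilon$, so that the required powers are integers), and absorbing the resulting rounding into the $o(1)$ term in the exponent; since $\omega(k)$ is defined via the $n^{o(1)}$ asymptotic, this does not affect the final inequality. No other step requires nontrivial work.
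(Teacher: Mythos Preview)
Your proposal is correct and follows essentially the same approach as the paper's proof: write $\langle n, n^k, n\rangle$ as the tensor product $\langle n^{\alpha}, n^{\alpha p}, n^{\alpha}\rangle \otimes \langle n^{1-\alpha}, n^{(1-\alpha)q}, n^{1-\alpha}\rangle$, apply sub-multiplicativity of tensor rank (Lemma~\ref{lem:tensorsubmultiplicity}), translate via Lemma~\ref{lem:tensorequiv_rank_time}, and compare exponents. Your handling of the integrality issue is a bit more explicit than the paper's, which simply suppresses it, but the argument is otherwise identical.
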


\begin{proof}
Let $k = \alpha \cdot p + (1-\alpha) \cdot q$ for $\alpha \in (0,1)$.
We have
\begin{align*}
\langle n , n^k, n \rangle = \langle n^{\alpha} , n^{\alpha \cdot p} , n^{\alpha} \rangle \otimes \langle n^{1-\alpha} , n^{(1-\alpha) p} , n^{1-\alpha} \rangle.
\end{align*}
Applying the tensor rank on both sides,
\begin{align*}
R( \langle n , n^k, n \rangle ) 
= & ~ R ( \langle n^{\alpha} , n^{\alpha \cdot p} , n^{\alpha} \rangle \otimes \langle n^{1-\alpha} , n^{(1-\alpha) p} , n^{1-\alpha} \rangle ) \\
\leq & ~ R ( \langle n^{\alpha} , n^{\alpha \cdot p} , n^{\alpha} \rangle ) \cdot R( \langle n^{1-\alpha} , n^{(1-\alpha) p} , n^{1-\alpha} \rangle ),
\end{align*}
where the last line follows from Lemma~\ref{lem:tensorsubmultiplicity}. 
By Lemma~\ref{lem:tensorequiv_rank_time}, we have
\begin{align*}
\Tmat( n, n^k , n ) \leq O(1) \cdot \Tmat( n^{\alpha} , n^{\alpha p} , n^{\alpha} ) \cdot \Tmat( n^{1-\alpha} , n^{(1-\alpha) p } , n^{1-\alpha} )
\end{align*}
By definition of $\omega (\cdot)$, we have
\begin{align*}
n^{\omega(k) + o(1)} \leq O(1) \cdot n^{ \alpha \cdot \omega(p) } \cdot n^{ (1-\alpha) \omega(1-p)}.
\end{align*}
By comparing the exponent, we know that
\begin{align*}
\omega(k) \leq \alpha \cdot \omega( p ) + ( 1 - \alpha ) \cdot \omega ( 1 - p ) .
\end{align*}
\end{proof}


\begin{lemma} \label{lem:tensorhlk_less_hkl}
Let $\Tmat$ be defined as in Definition~\ref{def:tensorcmat}. Then for any positive integers $a,b,c$ and $k$, we have
\begin{align*}
\Tmat(a, b k, c) \leq O(\Tmat(a k , b , c k)) .
\end{align*}
\end{lemma}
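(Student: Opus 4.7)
The plan is to mirror, at the level of tensors, the block-stacking argument used in Lemma~\ref{lem:hlk_less_hkl}. The key identity is that the matrix-multiplication tensor factors as a Kronecker product: $\langle n_1, n_2, n_3 \rangle \otimes \langle m_1, m_2, m_3 \rangle = \langle n_1 m_1, n_2 m_2, n_3 m_3 \rangle$. Using this, I would write
\begin{align*}
\langle a, bk, c \rangle = \langle a, b, c \rangle \otimes \langle 1, k, 1 \rangle
\quad \text{and} \quad
\langle ak, b, ck \rangle = \langle a, b, c \rangle \otimes \langle k, 1, k \rangle.
\end{align*}

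Second, I would establish the core tensor inequality $\langle 1, k, 1 \rangle \leq \langle k, 1, k \rangle$. The tensor $\langle k, 1, k \rangle$ encodes the outer product of two length-$k$ vectors, producing all pairwise products $\{a_i b_j\}_{i,j \in [k]}$, whereas $\langle 1, k, 1 \rangle$ encodes the inner product $\sum_i a_i b_i$, which is nothing but the sum of the diagonal entries of the outer product. Thus exhibiting matrices $A, B, C$ in the definition of $\leq$ reduces to choosing $A$ and $C$ to be row/column vectors selecting the two inputs, and choosing the third matrix to sum the $k$ diagonal entries of the outer product — a routine bookkeeping step once the index conventions $e_{i(b-1)+j,\, j(c-1)+k,\, k(a-1)+i}$ of the paper are unfolded.

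Third, by the standard monotonicity of the relation $\leq$ under Kronecker product (if $T_1 \leq T_2$ then $S \otimes T_1 \leq S \otimes T_2$, which follows directly from composing the defining matrices), the two displays above combine with $\langle 1, k, 1 \rangle \leq \langle k, 1, k \rangle$ to yield $\langle a, bk, c \rangle \leq \langle ak, b, ck \rangle$. Applying monotonicity of tensor rank (Lemma~\ref{lem:tensormonotonicity}) gives $R(\langle a, bk, c \rangle) \leq R(\langle ak, b, ck \rangle)$, and finally Lemma~\ref{lem:tensorequiv_rank_time} translates this into the runtime inequality $\Tmat(a, bk, c) \leq O(\Tmat(ak, b, ck))$.

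The main obstacle is the verification that $\langle 1, k, 1 \rangle \leq \langle k, 1, k \rangle$ is genuine in the $e_{i,j,l}$-coordinate formulation of the paper, but this is a purely notational check: the inner product is literally the trace of the outer product matrix, so the restriction map needed for $\leq$ is the diagonal-summation linear map, which is trivially representable by the three matrices in the definition of $\leq$. Everything else in the argument is a direct application of lemmas already stated in this section.
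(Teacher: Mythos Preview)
Your proposal is correct and follows essentially the same route as the paper: factor $\langle a, bk, c\rangle = \langle a,b,c\rangle \otimes \langle 1,k,1\rangle$ and $\langle ak,b,ck\rangle = \langle a,b,c\rangle \otimes \langle k,1,k\rangle$, use the inequality $\langle 1,k,1\rangle \leq \langle k,1,k\rangle$, apply monotonicity of $\leq$ under Kronecker product, then invoke Lemma~\ref{lem:tensormonotonicity} and Lemma~\ref{lem:tensorequiv_rank_time}. The paper's proof is identical in structure, only terser in that it asserts $\langle 1,k,1\rangle \leq \langle k,1,k\rangle$ without the inner/outer-product justification you provide.
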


\begin{proof}
Notice that 
\begin{align*}
\langle 1 , k , 1 \rangle \leq \langle k , 1 , k \rangle.
\end{align*}
Therefore, we have 
\begin{align*}
\langle a , b k , c \rangle 
= & ~ \langle a , b , c \rangle \otimes \langle 1 , k , 1 \rangle \\
\leq & ~ \langle a , b , c \rangle \otimes \langle k , 1 , k \rangle \\
= & \langle a k , b , c k \rangle.
\end{align*}
It then follows from Lemma~\ref{lem:tensormonotonicity} that 	
\begin{align*}
R( \langle a , b k , c \rangle ) \leq R( \langle a k , b , c k \rangle ).
\end{align*}

Finally, using Lemma~\ref{lem:tensorequiv_rank_time} gives
\begin{align*}
\Tmat(a, b k, c) \leq O(\Tmat(a k , b , c k)).
\end{align*}
Thus we complete the proof.
\end{proof}


\subsection{General bound on $\Tmat(n, mn, n)$ and $\Tmat(m, n^2, m)$}


\begin{lemma}\label{lem:tensormn2_less_nmn}
Let $\Tmat$ be defined as in Definition~\ref{def:tensorcmat}. \\
If $m \geq n$, then we have
\begin{align*}
\Tmat(n, m n, n) \leq O(\Tmat(m, n^2, m)) .
\end{align*}
If $m \leq n$, then we have
\begin{align*}
\Tmat(m, n^2, m) \leq O(\Tmat(n, m n, n)) .
\end{align*}
\end{lemma}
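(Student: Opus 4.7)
The plan is to mirror the non-tensor proof of Lemma~\ref{lem:mn2_less_nmn} but route everything through Lemma~\ref{lem:tensorhlk_less_hkl} (which was itself proved via tensor-rank monotonicity and the trivial inclusion $\langle 1,k,1\rangle \leq \langle k,1,k\rangle$). Only one direction needs real work, since the two cases are symmetric under swapping $m \leftrightarrow n$ together with swapping the two factors of the matrix multiplication tensor.

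For the case $m \geq n$, I would set $k = \lfloor m/n \rfloor \geq 1$ and instantiate Lemma~\ref{lem:tensorhlk_less_hkl} with $a = c = n$ and $b = n^2$. This yields
\begin{align*}
\Tmat(n, n^2 k, n) \leq O\bigl(\Tmat(nk, n^2, nk)\bigr).
\end{align*}
Since $nk \leq m$, monotonicity (which follows from Lemma~\ref{lem:tensormonotonicity} applied to the matrix-multiplication tensor) gives $\Tmat(nk, n^2, nk) \leq O(\Tmat(m, n^2, m))$. In the regime $m \geq 2n$ we have $n^2 k \geq n(m-n) \geq mn/2$, so a constant-factor blocking argument (i.e., cutting the length-$mn$ middle dimension into two pieces of length at most $n^2 k$) upgrades the left-hand side to $\Tmat(n, mn, n) \leq O(\Tmat(n, n^2 k, n))$, completing this subcase. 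In the remaining boundary regime $n \leq m < 2n$, both sides are $\Theta(n^{\omega(2) + o(1)})$ by monotonicity, so the inequality is immediate.

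For the case $m \leq n$, I would symmetrically take $k = \lfloor n/m \rfloor \geq 1$ and apply Lemma~\ref{lem:tensorhlk_less_hkl} with $a = c = m$ and $b = mn$, giving
\begin{align*}
\Tmat(m, mnk, m) \leq O\bigl(\Tmat(mk, mn, mk)\bigr) \leq O\bigl(\Tmat(n, mn, n)\bigr),
\end{align*}
where the second inequality uses $mk \leq n$ together with monotonicity. In the regime $n \geq 2m$ we have $mnk \geq n(n-m) \geq n^2/2$, so blocking upgrades this to $\Tmat(m, n^2, m) \leq O(\Tmat(m, mnk, m))$. The boundary regime $m \leq n < 2m$ is handled as before by noting both sides are $\Theta(m^{\omega(2) + o(1)})$.

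The routine-looking but only mildly annoying part will be verifying the boundary regimes cleanly (where the floor truncates $k$ to $1$ and the tensor blocking doesn't buy anything), since there the argument has to fall back on the fact that both sides are comparable to a square matrix-multiplication cost via monotonicity rather than on the tensor product decomposition itself. No single step is hard; the main thing is being careful that the constant-factor blocking and the monotonicity step are used in the correct direction in each case.
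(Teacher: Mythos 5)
Your proposal is correct and follows essentially the same route as the paper: the paper's proof is exactly the instantiation of Lemma~\ref{lem:tensorhlk_less_hkl} with $a=c=n$, $b=n^2$, $k=\lfloor m/n\rfloor$ (and symmetrically for $m\leq n$), declaring the rest immediate. You simply make explicit the monotonicity and constant-factor blocking steps needed to absorb the floor, which the paper leaves implicit.
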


\begin{proof}
We only prove the case of $m \geq n$, as the other case where $m < n$ is similar. 
This is an immediate consequence of Lemma~\ref{lem:tensorhlk_less_hkl} by taking $a = c = n$, $b = n^2$, and $k = \lfloor m/n \rfloor$, where $k$ is a positive integer because $m \geq n$.
\end{proof}

In the next lemma, we derive upper bounds on the term $\Tmat(m, n^2, m)$ when $m \geq n$ and $\Tmat(n, mn, n)$ when $m < n$, which is crucial to our runtime analysis. 


\begin{lemma}\label{lem:tensormn2_trueomega} 
Let $\Tmat$ be defined as in Definition~\ref{def:tensorcmat} and $\omega$ be defined as in Definition~\ref{def:tensoromegak}. \\
Property I. We have 
\begin{align*}
\Tmat(n, mn, n) \leq O (m n^{\omega + o(1)}) .
\end{align*}
Property II. We have
\begin{align*}
\Tmat(m, n^2, m) \leq O \left(\sqrt{n} \left(mn^2  + m^{\omega} \right) \right) .
\end{align*}
\end{lemma}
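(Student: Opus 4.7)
The plan is to prove the two properties separately, leveraging the tensor machinery established earlier in this section. The structure will parallel the proof of Lemma~\ref{lem:mn2_trueomega}, but Property~I becomes cleaner through tensor sub-multiplicativity, while Property~II reduces to the same convexity-based analysis of the function $\omega$.

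For Property~I, I would first observe the tensor identity
\begin{align*}
\langle n, mn, n \rangle = \langle n, n, n \rangle \otimes \langle 1, m, 1 \rangle.
\end{align*}
Applying the sub-multiplicativity of tensor rank (Lemma~\ref{lem:tensorsubmultiplicity}) and the equivalence between tensor rank and multiplication cost (Lemma~\ref{lem:tensorequiv_rank_time}) yields
\begin{align*}
\Tmat(n, mn, n) = \Theta(R(\langle n, mn, n \rangle)) \leq \Theta(R(\langle n, n, n \rangle)) \cdot R(\langle 1, m, 1 \rangle) \leq O(n^{\omega + o(1)} \cdot m),
\end{align*}
where the final step uses the definition of $\omega$ together with the trivial bound $R(\langle 1, m, 1 \rangle) \leq m$ (an inner product of length-$m$ vectors).

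For Property~II, write $m = n^a$ for $a \in (0, \infty)$. By Definition~\ref{def:tensoromegak}, $\Tmat(m, n^2, m) = n^{a \cdot \omega(2/a) + o(1)}$, so the statement reduces to establishing the inequality $\omega(2/a) \leq \max(1 + 5/(2a), \omega + 1/(2a))$. Setting $b = 2/a$, this is equivalent to
\begin{align*}
\omega(b) \leq \max\bigl(1 + 5b/4,\; \omega + b/4\bigr) \quad \forall b \in (0, \infty).
\end{align*}
The right-hand side is a maximum of two affine functions intersecting at $b^\star = \omega - 1$. By convexity of $\omega(\cdot)$ (Lemma~\ref{lem:tensoromegaconvex}), it suffices to verify the inequality at $b \to 0$, $b \to \infty$, and $b = b^\star$. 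The regime $b \to 0$ is immediate since $\omega(b) < \omega$. For $b \geq 2$, I would combine the sub-linearity bound $\omega(b) \leq b - 2 + \omega(2)$ from Lemma~\ref{lem:tensorsublinearity} with $\omega(2) \leq 3.251640$ from Lemma~\ref{lem:tensoromega2_bound} to obtain $\omega(b) \leq 1 + 5b/4$.

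The main obstacle is the case $b = b^\star = \omega - 1$, where I must show $\omega(\omega - 1) \leq 5\omega/4 - 1/4$. Since $\omega - 1 \in [1, 1.372927]$, I would apply convexity of $\omega$ on the interval $[1, 2]$ together with the bound $\omega(2) \leq 3.251640$ to get, for $t = \omega - 2 \in [0, 0.372927]$,
\begin{align*}
\omega(t + 1) \leq (t + 2) + t \cdot (1.251640 - t).
\end{align*}
Substituting this into the target inequality reduces the problem to verifying that the quadratic $-t^2 + 1.00164\, t - 1/4$ is negative on $[0, 0.372927]$, which holds since its discriminant shows the roots lie outside this interval. Combining all cases via convexity completes the proof.
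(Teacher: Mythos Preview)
Your proposal is correct and follows essentially the same approach as the paper's own proof: Property~I via the tensor identity $\langle n, mn, n\rangle = \langle n,n,n\rangle \otimes \langle 1,m,1\rangle$ and sub-multiplicativity of rank, and Property~II via the substitution $b = 2/a$, convexity of $\omega(\cdot)$, and the same quadratic analysis at $b^\star = \omega - 1$. The paper's argument is identical in structure and in the numerical bounds invoked.
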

\begin{proof}

{\bf Property I.}

Since
\begin{align*}
\langle n , mn , n \rangle = \langle n , n , n \rangle \otimes \langle 1, m , 1 \rangle.
\end{align*}

Applying the tensor rank on both sides, we have
\begin{align*}
R( \langle n , mn , n \rangle ) 
= & ~ R ( \langle n , n , n \rangle \otimes \langle 1, m , 1 \rangle ) \\
\leq & ~ R( \langle n , n , n \rangle ) \cdot R( \langle 1, m , 1 \rangle  )
\end{align*}
Thus, we complete the proof.  

{\bf Property II.}

Let $m = n^a$, where $a \in (0,\infty)$. We have
\begin{align*}
\langle m , n^2 , m \rangle
= & ~ \langle n^a , (n^a)^{2/a} , n^a \rangle
\end{align*}
It implies that 
\begin{align*}
\Tmat(m, n^2, m) = n^{a \cdot \omega(2/a) + o(1)}
\end{align*}

The Property II is then an immediate consequence of the following inequality, which we prove next: 
\begin{align*}
\omega(2/a) < \max(1 + 2.5/a, \omega(1) + 0.5/a) \qquad \forall a \in (0,\infty) .
\end{align*}
Define $b = 2/a \in (0, \infty)$. 
Then the above desired inequality can be expressed in terms of $b$ as 
\begin{align}\label{tensortbd-lines}
\omega(b) < \max( 1 + 5b/4, \omega(1) + b/4) \qquad \forall b \in (0, \infty) .
\end{align}
Notice that the RHS of \eqref{tbd-lines} is a maximum of two linear functions of $b$ and these intersect at $b^* = \omega(1) - 1$. 
By the convexity of $\omega({}\cdot{})$ as proved in  Lemma~\ref{lem:tensoromegaconvex}, it suffices to verify \eqref{tbd-lines} at the endpoints $b \rightarrow 0$, $b \rightarrow \infty$ and $b = b^*$.
In the case where $b = \delta$ for any $\delta < 1$, \eqref{tbd-lines} follows immediately from the observation that $\omega(\delta) < \omega(1)$. 
For the case $b \rightarrow \infty$, by Lemma \ref{lem:tensoromega2_bound} we have $\omega(2) \leq 3.252$. 
It then follows from Lemma \ref{lem:tensorsublinearity} that for any $b > 2$, we have 
\begin{align*}
\omega(b) \leq b - 2 + \omega(2) \leq 1 + 5b / 4 .
\end{align*} 
The final case is where $b = b^* = \omega(1) - 1$, for which \eqref{tbd-lines} is equivalent to 
\begin{align}\label{tensortbd-bstar}
\omega(\omega(1) - 1) < 5\omega(1)/4 - 1/4.
\end{align}
By Lemma \ref{lem:tensoromega2_bound}, we have that $\omega(1) - 2 \in [0, 0.372927]$.
Then to prove \eqref{tensortbd-bstar}, it is sufficient to show that 
\begin{align}\label{tensortbd-t}
\omega(t + 1) < 5t/4 + 9/4 \qquad \forall t \in [0, 0.372927] .
\end{align}
By the convexity of $\omega({}\cdot{})$ as proved in Lemma \ref{lem:tensoromegaconvex} and the upper bound of $\omega(2) \leq 3.251640$ in Lemma \ref{lem:tensoromega2_bound}, we have for $k \in [1, 2]$, 
\begin{align*}
\omega(k) \leq \omega(1) + (k-1) \cdot (3.251640 - (t+2)) = t + 2 + (k-1) \cdot (1.251640 - t).
\end{align*} 
In particular, using this inequality for $k = t+1$, we have 
\begin{align*}
	\omega(t + 1) - 5t/4 - 9/4 &\leq (t + 2) + t\cdot (1.251640 - t) - 5t/4 - 9/4 \\
	&= -t^2 + 1.00164 t - 1/4,
\end{align*} 
which is negative on the entire interval $[0,0.372927]$. 
This establishes \eqref{tensortbd-t} and finishes the proof of the lemma. 
\end{proof}

\end{appendix}
\end{document}